\newcommand{\starp}{\star} 
\newcommand{\id}{\mathrm{id}}
\title{Deformation quantisation of the conic and symplectic reduction of wavefunctions}
\author{Michael Swaddle \\
    \tt{\href{mailto:meswaddle@proton.me}{meswaddle@proton.me}
    }
}
\date{\today}
\begin{document}

    \maketitle
    \begin{abstract}
        \noindent We give a short review of the algebraic procedure known as deformation quantisation, which replaces a commutative algebra with a non-commutative algebra. We use this framework to examine how the objects known as wavefunctions, as known in the quantum curve literature, arise from deformation quantisation. We give an example in terms of the planar conic \(-y+x^2+2 xy + y^2\), and construct an associated wavefunction. We also give an example of the symplectic reduction of a wavefunction, following a procedure from Kontsevich and Soibelman \cite{ks_airy}.
    \end{abstract}

    \tableofcontents 
    
        \section{Introduction}

    Broadly, \emph{quantisation} is an algorithm that embeds a commutative algebra inside a non-commutative algebra, preserving some structure. One approach to quantisation is \emph{deformation quantisation} \cite{chen_thesis,gunningham, k_defquant, b_defquant, quant_for_phys, k_defofPois}, motivated by quantisation in physics. Let \( \mathbf{k}\) be a field of characteristic zero.
    Deformation quantisation seeks to replace a commutative \( \mathbf{k}\)-algebra \(A\) with a non commutative \( \mathbf{k}\lBrack \hslash \rBrack\)-algebra \( \mathcal{A}_{\hslash}\), such that under the quotient map \( \mathcal{A}_{\hslash} / ( \hslash \mathcal{A}_{\hslash}) \simeq A\). Note that deformation quantisation is generally not a quantisation rule for sections, \(a \in A \rightarrow a_\hslash \in \mathcal{A}_{\hslash}\), instead it is a global operation replacing an entire algebra or sheaf of algebras. 

    
    Deformation quantisation can be applied to many different objects, for example the deformation quantisation of functions on a Poisson manifold by Kontsevich \cite{k_defofPois}. More generally for algebraic varieties and schemes one can consider the deformation quantisation of sheaves, such as in \cite{yekutieli,k_defquant}. It is even possible to consider fields with positive characteristic \cite{k_holonomic}.

    In this article we first give a brief overview of deformation quantisation. Next we discuss the sheaf theoretic interpretation of a \emph{wavefunction}. Not to be directly confused with the term as used in physics, wavefunctions are generators of modules over a non-commutative algebra analogous to \( \mathcal{A}_{\hslash}\). These wavefunctions are studied in the context of \emph{topological recursion} and \emph{quantum curves}, where they contain interesting enumerative data \cite{norbury_quant}.
    
    In the context of quantum curves, a wavefunction \( \psi(x) \), is a WKB solution to a differential equation of the form \( \widehat{H} \psi(x) = 0\), where \(\widehat{H}\) is an operator. The WKB method means looking for a series of the form \( \psi(x) = \exp( \sum \hslash^g S_g(x) )\), where \(\hslash\) is a formal parameter. One question commonly asked, for example in \cite{norbury_quant, tudor,abpolyquant}, is if there is a canonical or natural way to relate an operator \( \widehat{H}'(x , \hslash \partial) \) and a planar curve \(H(x,y) \in \mathbf{k}[x,y]\). In this paper, we try to motivate, with an example, a sheaf theoretic explanation arising from deformation quantisation. From this perspective, the answer is no, there are multiple choices of \( \widehat{H}\) which in the classical limit (a quotient by \(\hslash^2\)) produces a curve \( H(x,y)\). Deformation quantisation embeds a commutative algebra \(A\) within an algebra of operators \( \mathcal{A}_\hslash \), so there is not a natural isomorphism between \( A\)-modules and \( \mathcal{A}_{\hslash}\)-modules. 
    
    Given a variety \(X= \mathrm{Spec}(A)\), we consider when there exists a deformation quantisation of \(A\), given by \(\mathcal{A}_{\hslash}\). Inside \(X\), consider a Lagrangian subvariety \( \mathbb{L} \subset X\), corresponding to the quotient \(E=A/I\). The quantisation of \( \mathbb{L}\) is represented by a \(\mathcal{A}_{\hslash}\)-module, \(E_\hslash\) which under a quotient map by \(\hslash\) recovers \(E=A/I\). The algebra \( E_{\hslash}\) can be thought of as operators on \( \mathbb{L}\).
    Correspondingly a wavefunction is an element of the dual module \(E_\hslash^*\), which is the space of maps from \(E_\hslash\) to  \(\mathcal{A}_{\hslash} \). Intuitively, the dual of an operator is something like a function, they can pair to produce another function.
    
    This is mostly done by taking the equations of the defining ideal \(I\), and then looking at the the quantisation \(I_{\hslash}\), which is represented by differential operators on \(X\). A wavefunction is also seen as a solution, or an annihilator, to these differential equations.
    
    
    So the purpose of this article is to highlight that wavefunctions arise from deformation quantisation. As an example we construct a wavefunction for the planar conic in section (\ref{sec:def_of_conic}), which gives rise to the \emph{Airy equation} \cite{airy}. Finally we examine a natural transformation of wavefunctions associated with \emph{symplectic reduction}. This transformation, observed in \cite{ks_airy}, behaves like a Laplace or Fourier transform, and we give an example in the case of a four dimensional variety in section (\ref{section:wavefunction_reduction}).

    

    \section{Non-commutative formal deformations of the structure sheaf}
    Let \((X,\mathcal{O}_X)\) be a scheme over \( \Spec( \mathbf{k})\), where \( \mathbf{k}\) is a field of characteristic zero.  

    \begin{defn}[Formal deformation]
    A formal deformation of \((X,\mathcal{O}_X)\) is a sheaf \( \mathcal{A}_{\hslash} \) of \( \mathbf{k}\lBrack \hslash \rBrack\)-flat, associative \( \mathbf{k}\lBrack \hslash \rBrack \)-algebras on \(X\) with the constraint that \begin{equation} 
    \label{eqn:def_cons}
    0 \rightarrow \hslash \mathcal{A}_\hslash \rightarrow  \mathcal{A}_{\hslash} \rightarrow \mathcal{O}_X  \rightarrow 0,
    \end{equation}
    is exact.
    \end{defn}
    In particular, for deformation quantisation, we consider when these are \emph{non-commutative} \( \mathbf{k}\lBrack \hslash \rBrack\)-algebras.
    
    Alternatively, the formal deformation of \(\mathcal{O}_X\) is a sheaf \( \mathcal{A}_{\hslash}\), which fits into the pullback square, induced from the sequence (\ref{eqn:def_cons}):
    \begin{center}
        \begin{tikzcd}                  \mathcal{O}_X\otimes_{\mathbf{k}\lBrack \hslash \rBrack } \mathcal{A}_{\hslash} \cong \mathcal{O}_X  &  \arrow[l] \mathcal{A}_{\hslash} \\
                \mathbf{k} \arrow[u]  &  \arrow[l] \mathbf{k}\lBrack \hslash \rBrack \arrow[u] 
        \end{tikzcd}
    \end{center}
    \begin{rem} 
    Deformations are often seen in commutative contexts. For example in \cite{hartshorne_def}, a deformation refers to a family of commutative algebras giving rise to schemes. In particular \cite{hartshorne_def}, a deformation of a scheme \(X_0\), is a flat map of schemes \(\mathcal{X} \rightarrow S\) such that there is distinguished point in \(S\), so \(X_0\) fits into the following pullback diagram:
    \begin{center}
        \begin{tikzcd}
            X_0 \arrow[d] \arrow[r] & \arrow[d] \mathcal{X}\\ 
            \{ \text{pt} \} \arrow[r] &  S
        \end{tikzcd}
    \end{center}
    The similarity between the deformations in \cite{hartshorne_def} and the deformations appearing here in deformation quantisation is the flatness of \( \mathcal{A}_{\hslash} \) as a \( \mathbf{k}\lBrack \hslash \rBrack\)-module, and the pullback square, even though there is not necessarily a spectrum or space associated to \( \mathcal{A}_{\hslash}\). The other difference is that there is no requirement for commutativity. Deformations in general usually only require some associative structure \cite{k_def_book}. Another example of a non-commutative deformation could be \emph{super-schemes} \cite{super}, which are \(\mathbb{Z}_2\)-graded algebras. There is a similar notion of a pullback square with a commutative algebra.
    \end{rem} 
    
    A formal deformation of \( (X,\mathcal{O}_X)\) is a limit \(\mathcal{A}_{\hslash}\) given by successive non-commutative \( \mathbf{k}[\hslash]/\hslash^n\)-deformations, to give \( \mathcal{A}_{\hslash}\), a \( \mathbf{k} \lBrack \hslash \rBrack \)- deformation. 
    \begin{center} 
    \begin{tikzcd}
        \mathcal{O}_X  & \arrow[l]   \mathcal{A}_1 & \arrow[l] \dots  & \arrow[l]  \mathcal{A}_{\hslash}  \\ 
        k \arrow[u] &  \arrow[l] \arrow[u]  \mathbf{k}[\hslash]/\hslash^2  &  \arrow[l] \dots &  \arrow[l] \arrow[u] \mathbf{k} \lBrack \hslash \rBrack  
    \end{tikzcd}
    \end{center} 
    


   
    \begin{rem} The isomorphism 
    \( \mathcal{A}_\hslash \cong \mathcal{O}_X \otimes_{\mathbf{k}\lBrack \hslash \rBrack} \hslash \mathcal{A}_\hslash\) in diagram (\ref{eqn:def_cons}) is called an \textit{augmentation}.
    \end{rem}

    \begin{defn}[Star-product]
    \label{defn:star_prod}
    Let \(f,g \in  \mathcal{A}_{\hslash}\) be sections. A \emph{star product}, \cite{collini}, is a \(\mathbf{k}\lBrack \hslash \rBrack \)-bilinear, associative, map
    \(\star : \mathcal{A}_{\hslash}\otimes_{\mathbf{k}\lBrack\hslash\rBrack} \mathcal{A}_{\hslash} \rightarrow \mathcal{A}_{\hslash}\)
    given by the formal sum:
    \begin{equation} 
        \label{eqn:star_prod_formal}
        f \star g = \sum_k \omega_k(f,g) \hslash^k
    \end{equation}
    such that \(\omega_0 = \mathrm{prod} \), and \(\omega_k\) is a \( \mathbf{k}\lBrack \hslash \rBrack\)-bilinear map. 
    \end{defn}

    \begin{corollary}
    A choice of star product gives a formal non-commutative deformation of \( \mathcal{O}_X\). 
    \end{corollary}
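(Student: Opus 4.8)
The plan is to verify the two defining properties of a formal deformation from the definition: flatness of $\mathcal{A}_\hslash$ as a $\mathbf{k}\lBrack \hslash \rBrack$-module, and exactness of the sequence (\ref{eqn:def_cons}), together with associativity of the resulting multiplication. Given a star product $\star$ on $\mathcal{A}_\hslash = \mathcal{O}_X \lBrack \hslash \rBrack$ (as a sheaf of $\mathbf{k}\lBrack \hslash \rBrack$-modules), I would first observe that flatness is automatic: $\mathcal{O}_X \lBrack \hslash \rBrack$ is, sectionwise, a power series module over $\mathbf{k}\lBrack \hslash \rBrack$, hence torsion-free, and since $\mathbf{k}\lBrack \hslash \rBrack$ is a discrete valuation ring a torsion-free module is flat. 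So the module structure underlying any star product already satisfies the flatness hypothesis, independent of the choice of $\star$.

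Next I would check exactness of $0 \to \hslash \mathcal{A}_\hslash \to \mathcal{A}_\hslash \to \mathcal{O}_X \to 0$. Here the quotient map sends $f = \sum_k f_k \hslash^k$ to $f_0$; its kernel is exactly $\hslash \mathcal{A}_\hslash$, and injectivity of the inclusion is clear, so this part is formal. The content is to confirm that the quotient map is a ring homomorphism for the $\star$-product: modulo $\hslash$, $f \star g = \omega_0(f,g) + O(\hslash) = \mathrm{prod}(f,g) + O(\hslash)$ by the condition $\omega_0 = \mathrm{prod}$, so the induced product on $\mathcal{A}_\hslash / (\hslash \mathcal{A}_\hslash)$ is the ordinary commutative product on $\mathcal{O}_X$. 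This identifies $\mathcal{A}_\hslash / (\hslash\mathcal{A}_\hslash) \simeq \mathcal{O}_X$ as sheaves of algebras, which is precisely the augmentation condition.

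Then I would address that $(\mathcal{A}_\hslash, \star)$ is genuinely an associative $\mathbf{k}\lBrack \hslash \rBrack$-algebra: $\mathbf{k}\lBrack \hslash \rBrack$-bilinearity and associativity are built into Definition \ref{defn:star_prod}, and the unit is $1 \in \mathcal{O}_X \subset \mathcal{A}_\hslash$ provided one normalises so that $1 \star f = f \star 1 = f$ (which can always be arranged, or simply assumed as part of the data). Finally, non-commutativity: a star product need not be commutative — the antisymmetric part of $\omega_1$ is (up to sign) a Poisson bracket, and whenever that bracket is nonzero the deformation is genuinely non-commutative; so "gives a formal non-commutative deformation" should be read as "gives a formal deformation, which is non-commutative precisely when $\omega_1$ has nonvanishing antisymmetrisation." I would state it in that form to be accurate.

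The only real subtlety — the step I expect to be the main obstacle — is the sheaf-theoretic bookkeeping: one must check that a star product defined by bidifferential (or more generally $\mathbf{k}$-bilinear) operators glues to a well-defined sheaf map $\mathcal{A}_\hslash \otimes_{\mathbf{k}\lBrack\hslash\rBrack} \mathcal{A}_\hslash \to \mathcal{A}_\hslash$, and that $\hslash$-adic completion commutes with the restriction maps of the sheaf so that $\mathcal{A}_\hslash$ is honestly a sheaf of $\mathbf{k}\lBrack\hslash\rBrack$-algebras rather than just a presheaf. On a Noetherian scheme this is standard (the completion functor is exact on coherent sheaves), so I would invoke that and not belabour it; everything else is a direct unwinding of the definitions already given.
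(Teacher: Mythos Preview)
Your proposal is correct and considerably more thorough than what the paper offers: the paper states this corollary without proof, treating it as immediate from the definitions and deferring the flatness of $\mathcal{O}_X\lBrack\hslash\rBrack$ to a later lemma (Lemma~\ref{lem:flat_kh_mod}). That lemma establishes flatness via completion --- using that $M_0\lBrack\hslash\rBrack$ is the $(1\otimes\hslash)$-adic completion of $M_0\otimes_{\mathbf{k}}\mathbf{k}\lBrack\hslash\rBrack$ and invoking flatness of Noetherian completions --- whereas you use the more elementary observation that $\mathbf{k}\lBrack\hslash\rBrack$ is a DVR and torsion-free modules over a DVR are flat. Your route is shorter and self-contained; the paper's route generalises more readily but is heavier machinery for this particular step. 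Your explicit verification that the quotient map is a ring homomorphism (via $\omega_0=\mathrm{prod}$), your caveat about non-commutativity depending on the antisymmetric part of $\omega_1$, and your remark on the sheaf-theoretic gluing are all points the paper leaves implicit.
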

    In particular, for deformation quantisation, this star-product will be required to satisfy extra constraints. 
    
    The requirement that \( \star \) is associative, recursively places constraints on \( \omega_k\). Associativity of \( \star\) means that:
    \begin{align*}
        (f \star g) \star h = f \star (g \star h).
    \end{align*}
    Substituting in the definition of \( \star\) as a formal power series, and using the initial condition \( \omega_{0}(f,g) = fg\), shows that \( \omega_k\), for \( k \geq 1\) must satisfy the recursive formula: 
    \begin{align}
        \sum_{j,k} \hslash^{j+k} \omega_j (\omega_k ( f ,g ),h) = \sum_{j,k} \hslash^{j+k} \omega_k(f,\omega_j(g,h)).
    \end{align}
    For example, looking at order \( \mathcal{O}(\hslash)\) terms gives the condition that \( \omega_1\) must satisfy:
    \begin{align*}
    \omega_{1}(f g, h)  +  \omega_1(f,g) h  - \omega_{1}(f, g h)  - f \omega_{1}(g,h) = 0.
    \end{align*}    
    \begin{rem} Note that in general, the existence of a star product is not guaranteed. At some point, solving for \( \omega_k\) can fail.
    \end{rem}
    
    If we denote \( \star^{g}\) as a star product truncated to powers of \(\hslash^{g-1}\), a formal deformation is given by a limit 
    \( \lim_g (\mathcal{O}_X \otimes_{\mathbf{k}} \mathbf{k}[\hslash]/\hslash^g, \star^g) = (\mathcal{O}_X \lBrack \hslash \rBrack, \star ) \)


    \subsection{Deformations of sheaves of \texorpdfstring{\(\mathcal{O}_X\)}{O\_X}-modules } 
    
    Let \((X,\mathcal{O}_X)\) be a scheme over \( \Spec(\mathbf{k})\). Let \( \mathcal{F}\) be a quasicoherent sheaf of \( \mathcal{O}_X\)-modules, and so consider \( \mathcal{F}\) as a sheaf of \( \mathbf{k}\)-modules.
    
    Consider a formal deformation \( \mathcal{A}_{\hslash}\) of \(\mathcal{O}_X\). The formal deformation of \( \mathcal{F}\) as a \( \mathbf{k}\)-module is given by:
    \begin{defn}[Formal deformation of a sheaf]
    A \emph{formal deformation} of a sheaf \( \mathcal{F}\) of \( k\)-modules is a sheaf \( \mathcal{F}_{\hslash} \) of \( \mathbf{k}\lBrack\hslash\rBrack\)-modules, such that \( \mathcal{F}_{\hslash}\) 
    is \(\mathbf{k}\lBrack\hslash\rBrack \)-flat and satisfies: 
    \[ 0 \rightarrow \hslash \mathcal{F}_{\hslash} \rightarrow  \mathcal{F}_{\hslash} \rightarrow \mathcal{F}  \rightarrow 0.\]
    \end{defn}
    
    There are conditions on the existence of non trivial deformations. Theorem (\ref{thm:torsionfree}) is an obstruction to the existence of a particular formal deformation in the case of a Poisson scheme. Further note that the flatness condition rules out \( \mathcal{O}_X\) and similarly \( \mathcal{F}\) as a deformation, as \( \mathcal{O}_X\) or \( \mathcal{F}\), is not flat as a \( \mathbf{k}\lBrack \hslash\rBrack\)-algebra.

    There is the possibility of multiple deformations, Let \( \mathcal{E}_\hslash \) and \( \mathcal{F}_\hslash\) be sheaves of  \(  \mathbf{k}\lBrack\hslash\rBrack\)-modules. It can be the case that \( \mathcal{E}_\hslash \ncong \mathcal{F}_\hslash\), but \( \mathcal{E}_\hslash/ \hslash \mathcal{E}_{\hslash} \cong \mathcal{F}_\hslash/ \hslash \mathcal{F}_{\hslash} = E\), where \( E \) is a sheaf of \( \mathcal{O}_X\)-modules.

    
 
    It is necessary to consider formal deformations in a neighbourhood of \(\hslash = 0\). Although sometimes written as setting \(\hslash\rightarrow 0\), this is not a well defined operation.
    
    \begin{defn}[Localisation] Define \( \mathcal{F}_{\hslash}^{\mathrm{loc}} \cong \mathcal{F}_{\hslash} \otimes_{\mathbf{k}\lBrack \hslash \rBrack} \mathbf{k} \lParen \hslash \rParen\), as the localisation at \( \hslash\).
    \end{defn}
    
    \begin{ex}Consider \( A_{\hslash} = \mathbf{k} \lBrack \hslash \rBrack\). Then
    \(A^{\mathrm{loc}}_{\hslash} = \mathbf{k}\lBrack \hslash \rBrack [  \hslash^{-1} ] \cong \mathbf{k} \lBrack \hslash \rBrack  [\lambda ] / ( \hslash \lambda - 1) \cong \mathbf{k} \lParen \hslash \rParen \).
    \end{ex}    
    
    It will be necessary to allow \(\hslash\) to be invertible. Although in certain examples, deformation quantisation will produce a space of operators, there is not necessarily a corresponding space of solutions, unless \(\hslash\) is allowed to be invertible or non-zero. We will show this in the example of the conic, example (\ref{ex:the_conic}). This naively makes sense, as quantum objects do not necessarily need to have a classical limit.

    \begin{ex}
    \label{ex:poisson_brack}
    The star product defines a Poisson bracket under restriction to \( \mathcal{O}_X\).
    \( \frac{1}{\hslash} \left( f \star g - g \star f\right) \) restricted to \( \mathcal{O}_X\) is a Poisson bracket.
    \end{ex}

    \emph{Deformation quantisation} of a Poisson scheme/variety is the inverse problem to example \ref{ex:poisson_brack}. Starting with the data of a Poisson scheme/variety, deformation quantisation is a formal deformation such that the star product recovers the Poisson bracket mod \( \hslash\). An example  (\ref{ex:moyal_prod}) of such a star product is called the \emph{Moyal} product.

    \section{Deformation quantisation}
   
    
    Starting with the data of a Poisson scheme or variety, what is termed \emph{deformation quantisation} in the literature, for example \cite{yekutieli}, is a formal non-commutative deformation with a choice of star product which recovers the Poisson bracket mod \( \hslash\).  Deformation quantisation is the inverse problem to example (\ref{ex:poisson_brack}). Deformation quantisation is a choice of star product that agrees with the Poisson structure. Note that deformation quantisation occurs at the level of functions, the meaning of some space associated to these non-commutative objects is not clear.
    
    Note in the following sections, for a module \(M_0\), we use the notation that
    \( M_0 \lBrack \hslash \rBrack = \lim_n (M_0 \otimes_{\mathbf{k}} \mathbf{k} \lBrack \hslash \rBrack)  / ( 1 \otimes_{\mathbf{k}} \hslash)^n\).

    \subsection{Poisson bi-vectors and bi-maps}
    
    Let \( (X,\mathcal{O}_X)\) be a Poisson scheme over a field \( \mathbf{k}\), which means it is equipped with a \emph{Poisson bracket}, a skew-symmetric bilinear map on sections \( \{ \cdot ,\cdot \} : \mathcal{O}_X \otimes_{\mathbf{k}} \mathcal{O}_X \rightarrow \mathcal{O}_X \), satisfying the Leibniz rule and the Jacobi identity. The data of a Poisson bracket is captured in an object called a \emph{Poisson bi-vector}:

    \begin{defn}[Poisson bi-vector]
    \label{defn:bi_vect}
    A \emph{Poisson bi-vector} is a choice \( \Pi \in   H^0(X,\bigwedge^2 \mathcal{T}_X) \), such that for local sections \( f ,g : \mathcal{O}_X\), the Poisson bracket is computed via
    \begin{equation}
    \label{eqn:bi_vect}    
    \{ f,g\} = (df \otimes_{\mathbf{k}} dg ) ( \Pi) = \sum_{i<j} \pi^{ij} \partial_i f \partial_j g.
    \end{equation}
    where \( \Pi =  \frac{1}{2} \pi^{ij} \frac{\partial}{\partial x_i} \wedge \frac{\partial}{\partial x_j} \), and \( \pi^{ij}\) is skew-symmetric.
    \end{defn}
    
    \begin{ex} 
    A bi-vector identified with derivations acts as  map \( \mathcal{O}_X \otimes_{\mathbf{k}} \mathcal{O}_X \rightarrow \mathcal{O}_X\). It takes a pair of functions and produces a function, via the formula 
    \[ (X \wedge Y) ( f \otimes_{\mathbf{k}} g) = X (f) Y (g) - Y (f) X (g). \]
    Alternatively a bi-vector can be evaluated by a bi-differential to produce a function.
    
    Consider a two dimensional example with canonical coordinates \( (x_1,x_2)\), so \( \pi_{12}=1\), \(\pi_{21} = -\pi_{12}\), \(\pi_{11}=\pi_{22}=0\), and 
    \[ \Pi = \frac{1}{2} \pi^{ij} \frac{\partial}{\partial x_i }\wedge \frac{\partial}{\partial x_j } =  \frac{\partial}{\partial x_1 }\wedge \frac{\partial}{\partial x_2 }.\]
    Taking the differential 
    \( ( d f \otimes d g )\) and evaluating on \( \Pi \) gives
    \[ ( d f \otimes d g )( \Pi ) =  \frac{ \partial f}{\partial x_1}\frac{\partial g}{\partial x_2} - \frac{\partial f}{\partial x_2 }\frac{\partial g}{\partial x_1}.\]
    This matches the application
    \[ \frac{\partial}{\partial x_1 }\wedge \frac{\partial}{\partial x_2 } ( f \otimes g) = \frac{ \partial f}{\partial x_1}\frac{\partial g}{\partial x_2} - \frac{\partial f}{\partial x_2 }\frac{\partial g}{\partial x_1}. \]

    \end{ex}

    Closely related to the bi-vector \( \Pi\), is an endomorphism  \( \pi\), such that \( \pi : \mathcal{O}_X \otimes_{\mathbf{k}} \mathcal{O}_X \rightarrow \mathcal{O}_X \otimes_{\mathbf{k}} \mathcal{O}_X\).
    \( \pi \) is built from the coefficients of \( \Pi\), as follows:  
    \[ \pi(f\otimes_{\mathbf{k}} g) = \pi^{ij} \partial_i (f) \otimes_{\mathbf{k}} \partial_j (g).\]
    
    The difference between the two is an application of a natural linear map \( \mathrm{prod} : \mathcal{O}_X \otimes_{\mathbf{k}} \mathcal{O}_X \rightarrow \mathcal{O}_X\), which is the pointwise product of functions: 
    \[ \mathrm{prod}(f \otimes_{\mathbf{k}} g ) = f g.\]
    Composing \(\pi\) with \( \mathrm{prod} \) gives the Poisson bracket:
    \[ \mathrm{prod} \circ \pi (f\otimes_{\mathbf{k}} g) = \{ f, g\}. \]
    Note that \( \mathrm{prod} \circ  \pi \) satisfies the property, \( \mathrm{prod} \circ  \pi ( f \otimes_{\mathbf{k}} g) = -\mathrm{prod} \circ  \pi ( g \otimes_{\mathbf{k}} f ) \), by the skew-symmetry of the \( \pi^{ij}\).

    \begin{defn}[Poisson endomorphism or bi-map]
    Define the map \( \pi : \mathcal{O}_X \otimes_{\mathbf{k}} \mathcal{O}_X \rightarrow \mathcal{O}_X \otimes_{\mathbf{k}} \mathcal{O}_X\) to satisfy:  
    \[  \mathrm{prod} \circ \pi (f,g) = \{ f, g\}. \]
    \end{defn}
    
    
    \subsubsection{Formal Poisson structures}
    \begin{defn}[Formal Poisson structure] 
    A \emph{formal Poisson structure}, is any extension of a \hyperref[defn:bi_vect]{bi-vector}, equation (\ref{eqn:bi_vect}), by a formal power series:
    \[ \Pi_{\hslash} =  \sum_{g=1}^{\infty} \Pi_g \hslash^{g-1} \in H^0(X,\bigwedge\nolimits^{\!2} \mathcal{T}_X)\lBrack \hslash \rBrack , \]
    where \(\Pi_1 = \Pi\), such that 
    \( \Pi_{\hslash} \) is a Poisson structure on \( \mathcal{O}_X \lBrack \hslash \rBrack \).
    \end{defn}

    There are corresponding endomorphisms or bi-maps associated with \( \Pi_g\) like the Poisson case. 
    
    There is the question of what formal Poisson structures correspond to a formal deformation of a Poisson scheme. For example, does a formal Poisson structure correspond to a star product. The next section we give one particular answer to this, in terms of a special star product constructed as an exponential.
    
    \subsection{A deformation quantisation of Poisson schemes}
    \label{sec:def_of_pois_sch} 
    
    A deformation quantisation of a Poisson scheme \( \mathcal{O}_X\), is a sheaf \( \mathcal{A}_{\hslash} = \mathcal{O}_X \lBrack \hslash \rBrack \) equipped with a star product \cite{cttaneo_star, k_defofPois, groenewold} that recovers the data of the Poisson bi-vector under the quotient by \(\hslash\), and gives \( \mathcal{O}_X \lBrack \hslash\rBrack\) a formal Poisson structure. Note that this is not yet fully analogous to a \emph{quantisation} as used in physics, as there is no Hilbert space. \( \mathcal{A}_{\hslash}= \mathcal{O}_X \lBrack \hslash \rBrack\) is analogous to a phase space representation of operators in quantum mechanics. Looking for a Hilbert space of functions on which these operators act, or more specifically an \( \mathcal{A}_\hslash\)-module is a more complete analogy. 
    
    Given a Poisson bracket or bi-vector \(\Pi\), and an associated bi-map \( \pi\), a star product \( \starp_{\pi}\), is constructed by requiring, that:
    \[ [f,g] = f \starp_{\pi } g - g \starp_{\pi} \,f = \hslash \{f,g\} + \hslash^2 \,\Pi_{1}(f,g) + O(\hslash^3),   \]
    and the higher order terms correspond to a formal Poisson structure \( \Pi_{\hslash}\).
    
    Higher order terms are chosen or constructed recursively to satisfy associativity. \( \starp_{\pi}\) contains half the information of the formal Poisson structure. One solution to this problem, when the formal Poisson structure is simply iterated applications of \( \Pi_g = \mathrm{prod} \cdot \pi^g \), is the following star product:
    
    \begin{defn} 
    [Star-product associated to a Poisson bi-vector]\label{defn:star_prod_pois}
    Associated to \( \Pi\), define a star product \( \starp_{\pi}  : \mathcal{A}_\hslash \otimes_{\mathbf{k}\lBrack \hslash \rBrack}  \mathcal{A}_\hslash \rightarrow \mathcal{A}_\hslash \) locally via the series:
    \begin{align}
    \label{eqn:star_prod}
    f \starp_{\pi} g  &= \mathrm{prod} \circ \exp \left( \frac{\hslash}{2} \,\pi \right) (f \otimes g) , \\
    f \starp_{\pi} g   &= f g + \frac{1}{2} \hslash \,\pi^{ij}\, \partial_i f \partial_j g + \frac{1}{4} \hslash^2 \pi^{ik} \pi^{jl} \, \partial_i \partial_j f \partial_k \partial_l g + O(\hslash^3) . 
    \end{align}
    \end{defn}
    Non commutativity is given by the skew-symmetry of \( \pi \). This star product is chosen or constructed because it gives a representation into a Weyl-algebra, as we see in example (\ref{ex:the_conic}). Let \( \mathcal{W}_{\hslash}\) be a Weyl-algebra \( \varphi : \mathcal{A}_{\hslash} \rightarrow \mathcal{W}_{\hslash}\). This star product is constructed to satisfy \( \varphi( f \star g) = \varphi(f) \cdot \varphi(g)\), where \(\cdot\) is a product of operators.

    
    
    \begin{rem}[Notation in definition (\ref{defn:star_prod_pois})]
    Note in this definition.  \( \exp \) is the formal sum \( \exp(s) = \id + s + \frac{1}{2} s^2 + \dots\). Here \(  \id \) is the identity map \( \id (f \otimes g) = f \otimes g\). \( \mathrm{prod}\) is pointwise function multiplication, \( \mathrm{prod}(f \otimes g ) = f g\). Also \( \cdot\) is function composition. 
    
    Powers of \( \pi \) refer to iterated application \[ \pi^k ( f \otimes g ) = \pi^{k-1} ( \pi(f \otimes g ) ). \]  
    The endomorphism \( \pi \) is needed because there is no natural way to compose bi-vectors \( \Pi\).
    \end{rem}

    The coefficient of \( \hslash\) in \( \starp_{\pi}\) is identified with a coefficient of \( \Pi_g\) in a formal Poisson structure. The star product is also found by fixing \( \omega_0 = \mathrm{prod} \cdot 
    \id \), \( \omega_1 = \mathrm{prod} \cdot \pi\), and  \( \omega_k = \mathrm{prod} \cdot \pi^k \) in the definition of the star product (\ref{eqn:star_prod_formal}). A good reference on expanding out the star product and solving for explicit coefficients is \cite{starprodeasy}.

    Modulo \(\hslash^2\), the star product \( \starp_{\pi}\) recovers the Poisson bracket when applied to sections of \( \mathcal{O}_X\) in \( \mathcal{A}^{\text{loc}}_{\hslash}\),  \( \eta : \mathcal{A}^{\text{loc}}_\hslash \rightarrow \mathcal{O}_X\),
    \( \eta ( f \starp_{\pi} g )= f g \). Then
    \begin{align}
        \label{eqn:star_to_pois}
        \frac{1}{\hslash } \left( f \starp_{\pi} g - g \starp_{\pi} f \right) = \{f,g\} + O(\hslash).
    \end{align}    
    %
    %
    %
    A famous example of such a star product is the \emph{Moyal product}, defined in the case of a linear Poisson manifold.
    
    \begin{ex}[Moyal product]
    \label{ex:moyal_prod} 
    Consider the smooth symplectic case \(X = T^{*}C\), \(\dim(X)=2n\) with local coordinates \(\{x_1,\dots,x_n\}\) on a fibre. 
    Let  \( \pi : \mathcal{O}(X) \otimes \mathcal{O}(X) \rightarrow \mathcal{O}(X)\otimes \mathcal{O}(X)\) be skew-symmetric and written in terms of local coordinates as:
    \[ \pi(f \otimes g) = \pi^{ij} \frac{\partial f }{\partial x_i} \otimes \frac{\partial g}{\partial x_j }.\] 
    In this case there is a star product called the \emph{Moyal product}, \( \starp_M\) on \( \mathcal{A}_{\hslash} = \mathcal{O}(X)\lBrack \hslash \rBrack\). So expanding formula (\ref{eqn:star_prod}) gives:
    \[ f \starp_M g = f \cdot g + \frac{1}{2} \hslash \, \pi^{ij} \frac{\partial f}{\partial x_i} \cdot  \frac{\partial g}{\partial x_j} + \frac{\hslash^2}{4} \pi^{ij} \pi^{kl}\frac{\partial f}{\partial x_i \partial x_k} \cdot \frac{\partial g}{\partial  x_j \partial x_l} + \mathcal{O}(\hslash^3),\]
    where repeated indices are summed over.
    \end{ex} 
    
    The Moyal product from example (\ref{ex:moyal_prod}), is representable as the formal expansion of a formal Gaussian integral:
    \begin{lem}[\cite{baker}]
    If \( \star_M \) is the Moyal product per example (\ref{ex:moyal_prod}), then:
    \[f \starp_{M} g(q) = \frac{1}{Z} \int_X D p\, Ds\, f(p) g(s) \, \exp\left( -\frac{1}{2 \hslash} \omega(p-q, s-q) \right),\]
    where \(\omega\) is the quadratic form given by \( \omega =  (\pi^{-1})\) , \( Z = \int D p Ds \, \exp\left( -\frac{1}{2 \hslash}  \omega(p,s)\right) \). 
    \end{lem}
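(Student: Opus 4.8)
The plan is to read the right-hand side as a \emph{formal} Gaussian integral. Since neither \(\mathbf{k}\) nor the skew form \(\omega=\pi^{-1}\) carries an honest measure, one \emph{defines} the symbol \(\frac{1}{Z}\int_X Dp\,Ds\,(\,\cdot\,)\exp\!\left(-\frac{1}{2\hslash}\omega(p-q,s-q)\right)\) by the Wick (stationary-phase) recipe: Taylor expand the integrand in the fluctuation variables about \(q\) and replace each monomial by its normalised Gaussian moment. With this convention the asserted identity is an equality of formal power series in \(\hslash\), which is the only sense it can carry, consistent with the earlier observation that \(\hslash\to 0\) is not a well-defined operation; moreover at each fixed order in \(\hslash\) only finitely many terms of the expansion contribute, so the construction is well-defined term by term.

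First I would translate \(p\mapsto q+p\), \(s\mapsto q+s\) to centre the weight at the origin, turning the integrand into \(f(q+p)\,g(q+s)\exp\!\left(-\frac{1}{2\hslash}\omega(p,s)\right)\). Taylor expanding both factors,
\[ f(q+p)=\sum_{\alpha}\frac{1}{\alpha!}\,\partial^{\alpha}f(q)\,p^{\alpha}, \qquad g(q+s)=\sum_{\beta}\frac{1}{\beta!}\,\partial^{\beta}g(q)\,s^{\beta}, \]
reduces the expression to \(\sum_{\alpha,\beta}\frac{1}{\alpha!\,\beta!}\,\partial^{\alpha}f(q)\,\partial^{\beta}g(q)\,\langle p^{\alpha}s^{\beta}\rangle\), where \(\langle p^{\alpha}s^{\beta}\rangle=\frac{1}{Z}\int Dp\,Ds\;p^{\alpha}s^{\beta}\exp\!\left(-\frac{1}{2\hslash}\omega(p,s)\right)\). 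The problem has become the evaluation of these moments.

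This is the heart of the argument, and it is Wick's theorem. Because the exponent \(\omega(p,s)=\sum_{ij}\omega_{ij}\,p_is_j\) couples only \(p\)-variables to \(s\)-variables — it has no \(pp\)- or \(ss\)-block — the only surviving two-point function is \(\langle p_is_j\rangle\), which, after fixing the normalisation against \(Z\) and using \(\omega=\pi^{-1}\), is a fixed scalar multiple of \(\hslash\,\pi^{ij}\), while \(\langle p_ip_j\rangle=\langle s_is_j\rangle=0\). Hence \(\langle p^{\alpha}s^{\beta}\rangle\) vanishes unless \(|\alpha|=|\beta|=k\), in which case it is \(\hslash^{k}\) times a sum over the \(k!\) pairings of the \(p\)-indices with the \(s\)-indices of products \(\pi^{i_1j_{\sigma(1)}}\cdots\pi^{i_kj_{\sigma(k)}}\). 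Using the symmetry of \(\partial^{\alpha}f\) and \(\partial^{\beta}g\) under permutation of their indices, the sum over pairings contributes \(k!\) equal terms, cancelling against the \(\frac{1}{\alpha!\,\beta!}\); after resummation one is left with \(\sum_{k}\frac{\hslash^{k}}{2^{k}k!}\,\pi^{i_1j_1}\cdots\pi^{i_kj_k}\,\partial_{i_1}\!\cdots\partial_{i_k}f\,\partial_{j_1}\!\cdots\partial_{j_k}g\), which is exactly \(\mathrm{prod}\circ\exp\!\left(\frac{\hslash}{2}\pi\right)(f\otimes g)=f\starp_{M}g\) evaluated at \(q\), by Definition~\ref{defn:star_prod_pois} and Example~\ref{ex:moyal_prod}.

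I expect the only genuine obstacle to be bookkeeping: threading together the Jacobian and the \(\frac{1}{Z}\) normalisation of the Gaussian, the \(\frac{1}{\alpha!\,\beta!}\) from the double Taylor expansion, the \(k!\) from the Wick pairings, and the powers of \(2\) — this last being precisely where the exact scaling of \(\omega\) relative to \(\pi^{-1}\) enters — so that the order-\(\hslash^{k}\) coefficient is \(\frac{1}{2^{k}k!}\) and not merely proportional to it. A minor preliminary is to pin down \(Z\) by the base cases \(k=0\) (which must give \(\mathrm{prod}(f\otimes g)=fg\)) and \(1\starp_{M}1=1\). An equivalent, slightly slicker route I would mention is to recognise the bidifferential operator \(\exp\!\left(\frac{\hslash}{2}\pi^{ij}\partial_i\otimes\partial_j\right)\), applied to \(f\otimes g\) and restricted to the diagonal, as a Weierstrass/heat-type transform whose convolution kernel is, by a standard Fourier computation (completing the square in momentum space), exactly the Gaussian \(\exp\!\left(-\frac{1}{2\hslash}\omega(p-q,s-q)\right)\); the two approaches differ only in whether one evaluates the Gaussian integral by moments or by Fourier transform.
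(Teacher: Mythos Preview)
Your proposal is correct. The paper's own proof takes what you call the ``slicker route'' in your final paragraph: it packs \(x=(p,s)\) into a single integration variable, writes the exponent as a block quadratic form \(Q\) on the doubled space, and then invokes in one line the standard Gaussian identity
\[
\frac{1}{Z}\int Dx\,F(x)\,\exp\!\left(-\tfrac{1}{2\hslash}Q(x-q,x-q)\right)
=\mathrm{const}\cdot\exp\!\left(\tfrac{1}{2\hslash}(Q^{-1})_{ij}\,\partial_i\otimes\partial_j\right)F\Big|_{x\to(q,q)},
\]
applied to \(F=\mathrm{prod}\circ(f\otimes g)\). Because the block form has no \(pp\)- or \(ss\)-part, \(Q^{-1}\) is again off-diagonal with entries \(\pi\), and one reads off the Moyal bidifferential operator directly.

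Your primary route --- translate, Taylor expand, and evaluate the moments \(\langle p^{\alpha}s^{\beta}\rangle\) one pairing at a time via Wick --- is the same computation unpacked. It is longer but has the virtue of making the coefficient \(\tfrac{1}{2^{k}k!}\) and the precise role of \(\omega=\pi^{-1}\) visible, which the paper's one-line argument leaves to the reader (indeed the paper's displayed \(Q\) already mixes \(\pi\) and \(\omega\), so your explicit bookkeeping is arguably the more careful version). Either approach is fine; yours is the moment-expansion proof of the operator identity the paper simply quotes.
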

    This result is originally due to \cite[p. 2199]{baker}, equation (3''). In finite dimensions, \(Z\) is given by a determinant. 
    \begin{proof}
    Integrating over \( x=(p,s)\) as a Gaussian integral shows the correspondence:
    \[  \frac{1}{Z} \int Dx \, \mathrm{prod} \circ  (f \otimes g) (x) \exp \left( - \frac{1}{2\hslash} Q(x-q \otimes q,x-q\otimes q) \right), \]
    where \(Q\) is represented as a block tensor, \(Q = \frac{1}{2}\left(\begin{array}{cc} 0 & \pi \\ \pi & 0 \end{array}\right) \), so  
    \[ = \mathrm{const} * \mathrm{prod} \circ \exp\left( \frac{1}{2\hslash} (Q)^{-1}_{ij} \partial_i \otimes \partial_j \right) (f\otimes g)(x)  \big|_{x \rightarrow (q,q) }.\] 
    \end{proof}

    So, the deformation quantisation of \( (X,\mathcal{O}_X)\) is a sheaf \( \mathcal{A}_{\hslash}\) of  \( \mathbf{k}\lBrack \hslash\rBrack\)-flat modules, equipped with a star product, such that the star product recovers the Poisson bracket via equation (\ref{eqn:star_to_pois}). 

    \begin{defn}[Deformation quantisation of structure sheaf] 
    \label{defn:def_quant}
    Let \( (X,\mathcal{O}_X)\) be Poisson with a bi-vector \( \Pi\). A \emph{deformation quantisation} of \( \mathcal{O}_X\) is given by the sheaf \( \mathcal{A}_{\hslash} = \mathcal{O}_X \lBrack \hslash \rBrack \), equipped with a star product \( \starp_{\pi} \).
    \end{defn}
    
    This result can be found in \cite{yekutieli}. Note that this does not necessarily correspond to quantisation from physics. Deformation quantisation has simply produced a non-commutative space of operators. In physics, quantisation also seeks to construct a Hilbert space of functions. Correspondingly we need to look for duals of \( \mathcal{A}_{\hslash}\)-modules, which is analogous to a Hilbert space. We give an example of the entire process in example (\ref{ex:the_conic}). The conditions on the existence of these extra modules is also an interesting question \cite[section 2.4, page 15]{abpolyquant}. 

    \emph{Flatness} is a technical condition, required to exclude the classical objects \( \mathcal{O}_X\) counting as a deformation quantisation of itself. For a definition of flatness see \cite{lang}. \(\mathcal{O}_X\) is not sheaf of flat \( \mathbf{k}\lBrack \hslash \rBrack\)-modules.

    \begin{lem} \( \mathcal{O}_X\) is not a sheaf of flat \( \mathbf{k}\lBrack \hslash \rBrack\)-modules.
    \end{lem}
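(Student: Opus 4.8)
The plan is to exhibit a specific obstruction to flatness by tensoring with a short exact sequence and showing exactness fails. Recall that an $R$-module $M$ is flat iff $- \otimes_R M$ preserves injections; equivalently, for every ideal $\mathfrak{a} \subseteq R$ the natural map $\mathfrak{a} \otimes_R M \to M$ is injective. Here $R = \mathbf{k}\lBrack \hslash \rBrack$, which is a local ring with maximal ideal $(\hslash)$, and over such a ring flatness is detected by the single ideal $(\hslash)$ — indeed $\mathbf{k}\lBrack \hslash \rBrack$ is a discrete valuation ring, so a module is flat iff it is torsion-free. The sheaf $\mathcal{O}_X$ is regarded as a $\mathbf{k}\lBrack \hslash \rBrack$-module via the augmentation $\mathbf{k}\lBrack \hslash \rBrack \to \mathbf{k} \hookrightarrow$ (acting through scalars), so $\hslash$ acts as $0$ on every section. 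This makes $\mathcal{O}_X$ a torsion module, hence manifestly not torsion-free, hence not flat — provided $\mathcal{O}_X$ is not the zero sheaf.

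Concretely, I would argue as follows. First, pick an affine open $U \subseteq X$ with $\mathcal{O}_X(U) \neq 0$ (this exists since $X$ is a nonempty scheme; if $X = \varnothing$ the statement is vacuous or should be excluded, and one may note $A \neq 0$ for $X = \Spec A$ a genuine variety as in the paper's setting). Then consider the injection of $\mathbf{k}\lBrack \hslash \rBrack$-modules $\hslash : \mathbf{k}\lBrack \hslash \rBrack \hookrightarrow \mathbf{k}\lBrack \hslash \rBrack$ given by multiplication by $\hslash$, which is injective because $\mathbf{k}\lBrack \hslash \rBrack$ is an integral domain. Tensoring with $\mathcal{O}_X(U)$ over $\mathbf{k}\lBrack \hslash \rBrack$ yields the map
\[
\mathbf{k}\lBrack \hslash \rBrack \otimes_{\mathbf{k}\lBrack \hslash \rBrack} \mathcal{O}_X(U) \xrightarrow{\ \hslash \otimes \id\ } \mathbf{k}\lBrack \hslash \rBrack \otimes_{\mathbf{k}\lBrack \hslash \rBrack} \mathcal{O}_X(U),
\]
which, under the canonical identification $\mathbf{k}\lBrack \hslash \rBrack \otimes_{\mathbf{k}\lBrack \hslash \rBrack} \mathcal{O}_X(U) \cong \mathcal{O}_X(U)$, becomes multiplication by $\hslash$ on $\mathcal{O}_X(U)$; but $\hslash$ acts as $0$ there, so this is the zero map on a nonzero module. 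Hence $- \otimes_{\mathbf{k}\lBrack \hslash \rBrack} \mathcal{O}_X(U)$ fails to preserve this injection, so $\mathcal{O}_X(U)$ is not a flat $\mathbf{k}\lBrack \hslash \rBrack$-module, and therefore $\mathcal{O}_X$ is not a sheaf of flat $\mathbf{k}\lBrack \hslash \rBrack$-modules.

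There is really no hard step here — the only thing to be careful about is making the $\mathbf{k}\lBrack \hslash \rBrack$-module structure on $\mathcal{O}_X$ explicit (it comes from the augmentation $\mathbf{k}\lBrack \hslash \rBrack \twoheadrightarrow \mathbf{k}$, equivalently from setting $\hslash = 0$, which is exactly the map $\mathcal{A}_\hslash \to \mathcal{O}_X$ in the exact sequence \eqref{eqn:def_cons} restricted to scalars), so that the claim "$\hslash$ acts as zero" is justified rather than assumed. One should also note the triviality/nonemptiness hypothesis on $X$. The contrast being drawn is precisely that a genuine deformation $\mathcal{A}_\hslash = \mathcal{O}_X\lBrack \hslash \rBrack$ is $\hslash$-torsion-free (every nonzero power series has a lowest-order term), whereas $\mathcal{O}_X$ itself is pure $\hslash$-torsion; this is what the flatness condition in the definition of formal deformation is designed to rule out, so I would close with a one-line remark making that explicit.
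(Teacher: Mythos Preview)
Your proof is correct and follows essentially the same approach as the paper: both test flatness against the ideal $(\hslash) \hookrightarrow \mathbf{k}\lBrack \hslash \rBrack$ (equivalently, multiplication by $\hslash$) and observe that after tensoring with $\mathcal{O}_X$ the map becomes zero on a nonzero module. If anything, your version is more careful---you make explicit the $\mathbf{k}\lBrack \hslash \rBrack$-module structure on $\mathcal{O}_X$ via the augmentation and note the nonemptiness caveat, points the paper leaves implicit.
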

    \begin{proof}
    Let \( I_{\hslash} = \langle \hslash \rangle \). 
    Consider the injective map of \( \mathbf{k} \lBrack \hslash \rBrack\)-modules:
    \[ 0 \rightarrow  I_{\hslash} \overset{\phi}{\rightarrow} \mathbf{k} \lBrack \hslash \rBrack,\]
    where \( \phi(f)  = f\), is inclusion. Now tensoring by \( \mathcal{O}_X\):
    \[ I_{\hslash} \otimes_{\mathbf{k}\lBrack \hslash \rBrack} \mathcal{O}_X \overset{\Phi}{\rightarrow} \mathbf{k}\lBrack \hslash \rBrack \otimes_{\mathbf{k}\lBrack \hslash \rBrack} \mathcal{O}_X. \]
    Now  \( \mathbf{k}\lBrack \hslash \rBrack \otimes_{\mathbf{k}\lBrack \hslash \rBrack} \mathcal{O}_X \cong \mathcal{O}_X\), so the induced map \( m \otimes f \rightarrow \phi(m) \otimes f \) cannot be injective.  
    \end{proof}

    Lemma (\ref{lem:flat_kh_mod}), describing flat \( \mathbf{k}\lBrack\hslash\rBrack\)-modules, requires an important result from commutative algebra:
    \begin{lem}[\cite{stacks_complete_flat}] 
    \label{lem:complete_flat}
    Let \(R\) be a Noetherian ring and \(I\) be an ideal. Then the completion of R with respect to I, \(\widehat{R}_I\) is a flat \(R\)-module.
    \end{lem}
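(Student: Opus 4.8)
The plan is to reduce flatness to a statement about finitely generated ideals and then exploit the Noetherian hypothesis through the Artin--Rees lemma. Recall the ideal-wise criterion: an $R$-module $M$ is flat if and only if for every finitely generated ideal $J \subseteq R$ the natural multiplication map $J \otimes_R M \to M$ is injective, equivalently $\mathrm{Tor}_1^R(R/J, M) = 0$ for all such $J$. So it suffices to verify this criterion for $M = \widehat{R}_I$.

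First I would establish the two auxiliary facts where the Noetherian hypothesis is used. (i) $I$-adic completion $N \mapsto \widehat{N}_I$ is an exact functor on the category of finitely generated $R$-modules: applied to a short exact sequence of finitely generated modules it yields another short exact sequence. This is exactly where Artin--Rees enters, since it shows the $I$-adic filtration on a submodule is, up to a bounded shift, the one induced from the ambient module, so the relevant inverse systems satisfy the Mittag--Leffler condition and $\varprojlim^{1}$ vanishes. (ii) For finitely generated $N$ the canonical map $N \otimes_R \widehat{R}_I \to \widehat{N}_I$ is an isomorphism; one sees this by choosing a finite presentation $R^a \to R^b \to N \to 0$, applying the right-exact functor $- \otimes_R \widehat{R}_I$, and comparing with (i) using the trivial case $N = R$.

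Now the lemma follows formally. Let $J \subseteq R$ be a finitely generated ideal and consider the exact sequence $0 \to J \to R \to R/J \to 0$ of finitely generated $R$-modules. Applying (i) gives an exact sequence $0 \to \widehat{J}_I \to \widehat{R}_I \to \widehat{(R/J)}_I \to 0$, so in particular $\widehat{J}_I \to \widehat{R}_I$ is injective. By (ii) the identification $\widehat{J}_I \cong J \otimes_R \widehat{R}_I$ is compatible with the map into $\widehat{R}_I$, so $J \otimes_R \widehat{R}_I \to \widehat{R}_I$ is injective. Since $J$ was an arbitrary finitely generated ideal, the criterion above shows $\widehat{R}_I$ is a flat $R$-module.

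The main obstacle is step (i): exactness of completion on finitely generated modules is the entire technical content, and it is precisely here that the Noetherian assumption cannot be dropped, through Artin--Rees. Everything downstream of that is a routine diagram chase, and a reader content with a citation may simply invoke \cite{stacks_complete_flat}.
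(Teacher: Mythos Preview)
Your proof is correct and follows the standard route to this result: the ideal-wise flatness criterion, together with the Artin--Rees lemma to establish exactness of $I$-adic completion on finitely generated modules and the natural identification $N \otimes_R \widehat{R}_I \cong \widehat{N}_I$ for finitely generated $N$. There is nothing to fault in the argument.

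For comparison, the paper does not actually supply a proof of this lemma at all: it is stated with a citation to \cite{stacks_complete_flat} and then invoked in the proof of the subsequent lemma on flatness of $M_0\lBrack\hslash\rBrack$. So your write-up is strictly more than what the paper provides. Your final sentence already anticipates this: the paper is precisely the ``reader content with a citation'' case. If anything, the argument you give is essentially the one found in the cited reference, so there is no tension between the two.
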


    \begin{lem}
    \label{lem:flat_kh_mod}
    Let \(M_0\) be a \(\mathbf{k}\)-module. Then \( M_0 \lBrack \hslash \rBrack\) is a flat \( \mathbf{k}\lBrack \hslash\rBrack\)-module.
    \end{lem}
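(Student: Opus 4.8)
The plan is to reduce the flatness assertion to a torsion-freeness assertion, exploiting that $\mathbf{k}\lBrack \hslash \rBrack$ is a discrete valuation ring — in particular a principal ideal domain — with uniformiser $\hslash$. Over a PID a module is flat if and only if it is torsion-free, so it will suffice to show that $M_0 \lBrack \hslash \rBrack$ has no $\mathbf{k}\lBrack \hslash \rBrack$-torsion. Concretely, by the ideal criterion for flatness it is enough to check, for every finitely generated — hence principal — ideal $(f) \subset \mathbf{k}\lBrack \hslash \rBrack$, that the natural map $(f) \otimes_{\mathbf{k}\lBrack \hslash \rBrack} M_0\lBrack \hslash \rBrack \to M_0\lBrack \hslash \rBrack$ is injective; since $\mathbf{k}\lBrack \hslash \rBrack$ is a domain the source is free of rank one and this map is identified with multiplication by $f$ on $M_0\lBrack \hslash \rBrack$.

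First I would make the module $M_0\lBrack \hslash \rBrack$ explicit. Because $\mathbf{k}$ is a field, $M_0$ is free, say $M_0 \cong \mathbf{k}^{(S)}$; then $M_0 \otimes_{\mathbf{k}} \mathbf{k}[\hslash]/\hslash^n \cong (\mathbf{k}[\hslash]/\hslash^n)^{(S)}$, and passing to the limit over $n$ identifies $M_0 \lBrack \hslash \rBrack$ with the set of formal power series $\sum_{n \geq 0} m_n \hslash^n$ with $m_n \in M_0$ (equivalently, with the $\hslash$-adic completion of the free $\mathbf{k}[\hslash]$-module $\mathbf{k}[\hslash]^{(S)}$). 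Next I would observe that any nonzero $f \in \mathbf{k}\lBrack \hslash \rBrack$ factors as $f = \hslash^{r} u$ with $u$ a unit, so multiplication by $f$ on $M_0\lBrack \hslash \rBrack$ is injective if and only if multiplication by $\hslash^{r}$ is; and if $\hslash^{r}\sum_n m_n \hslash^{n} = \sum_n m_n \hslash^{n+r} = 0$ then all $m_n = 0$ by comparing coefficients. Hence multiplication by every nonzero $f$ is injective, $M_0\lBrack \hslash \rBrack$ is torsion-free, and therefore flat over $\mathbf{k}\lBrack \hslash \rBrack$.

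I expect the only real subtlety to lie in the identification step, i.e.\ in not over-claiming: when $S$ is infinite the module $M_0\lBrack \hslash \rBrack$ is strictly larger than $M_0 \otimes_{\mathbf{k}} \mathbf{k}\lBrack \hslash \rBrack \cong \mathbf{k}\lBrack \hslash \rBrack^{(S)}$, and it is in general \emph{not} free (it is the completion of a free module, not a free module), so one cannot conclude flatness by the slogan ``base change of a free module is free''. This is precisely why I would route the argument through torsion-freeness rather than through freeness. An alternative, essentially equivalent, line would invoke Lemma~\ref{lem:complete_flat} together with the standard fact that the $I$-adic completion of a flat module over a Noetherian ring $R$ is flat over the completion $\widehat{R}_I$, applied to $R = \mathbf{k}[\hslash]$, $I = (\hslash)$ and the flat module $\mathbf{k}[\hslash]^{(S)}$; but the elementary PID argument above is shorter and self-contained.
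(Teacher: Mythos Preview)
Your argument is correct and takes a genuinely different route from the paper. The paper proves the lemma by transitivity of flatness: it identifies $M_0\lBrack\hslash\rBrack$ as the $(1\otimes\hslash)$-adic completion of $M_0\otimes_{\mathbf{k}}\mathbf{k}\lBrack\hslash\rBrack$, invokes Lemma~\ref{lem:complete_flat} to get that this completion is flat over $M_0\otimes_{\mathbf{k}}\mathbf{k}\lBrack\hslash\rBrack$, and then uses that $M_0\otimes_{\mathbf{k}}\mathbf{k}\lBrack\hslash\rBrack$ is flat (indeed free) over $\mathbf{k}\lBrack\hslash\rBrack$ because $M_0$ is free over the field $\mathbf{k}$. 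This is exactly the alternative you sketch in your last paragraph. Your main argument instead exploits the special shape of the base ring: since $\mathbf{k}\lBrack\hslash\rBrack$ is a PID, flatness is equivalent to torsion-freeness, and you verify the latter by an explicit coefficient comparison. Your route is more elementary and entirely self-contained; it also sidesteps a minor looseness in the paper's proof, namely that Lemma~\ref{lem:complete_flat} is stated for completions of Noetherian \emph{rings} while it is being applied to a module (and one that need not be Noetherian when $M_0$ is infinite-dimensional). The paper's route, on the other hand, is the one that would generalise if $\mathbf{k}\lBrack\hslash\rBrack$ were replaced by a more complicated base. Your caveat about $M_0\lBrack\hslash\rBrack$ not being free when $M_0$ is infinite-dimensional is well taken and is precisely why the torsion-free reduction is the right elementary move here.
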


    \begin{proof}
    The completion \(M_0 \lBrack \hslash \rBrack\) is flat over \(M_0 \otimes_{\mathbf{k}} \mathbf{k}\lBrack \hslash \rBrack \) by lemma (\ref{lem:complete_flat}), where we use \(I = 1 \otimes \hslash\).  \(M_0 \lBrack \hslash\rBrack\) is the completion of \(M_0 \otimes_{\mathbf{k}} \mathbf{k}\lBrack \hslash \rBrack\) by \( 1 \otimes \hslash\): 
    \[ \lim_n \frac{M_0  \otimes_{\mathbf{k}} \mathbf{k}\lBrack \hslash \rBrack }{\left(1 \otimes_{\mathbf{k}} \hslash\right)^n} \cong \lim_n M_0 \otimes_{\mathbf{k}}  \frac{\mathbf{k}[\hslash]}{\hslash^n} \cong \lim_n \frac{M_0[\hslash]}{\hslash^n} \cong M_0 \lBrack\hslash\rBrack.\] 
    Then \(M_0 \otimes_{\mathbf{k}} \mathbf{k} \lBrack \hslash \rBrack \) is flat over \( \mathbf{k} \lBrack \hslash\rBrack\), so \(M_0 \lBrack \hslash \rBrack\) is flat over \( \mathbf{k} \lBrack \hslash \rBrack\).
    \end{proof}


    \subsubsection{Properties of the exponential star product}

    There is a natural gauge transform for Poisson and formal Poisson structures. 
    Let \( \starp_{\pi}\) be a star product associated to Poisson bi-vector \( \Pi\).
    Let \( \Pi \rightarrow \Pi + \Gamma \) be a Gauge transform of a Poisson bi-vectors, where \( \Gamma \in H^0(X,\wedge^2 \mathcal{T}_X)\), 
    \( \Gamma = \frac{1}{2} \gamma^{ij} \frac{\partial}{\partial x_i } \wedge \frac{\partial}{\partial x_j}\), and \( \gamma^{ij} = \gamma^{ji}\). Define a bi-map \( \gamma : \mathcal{O}_X \otimes \mathcal{O}_X \rightarrow \mathcal{O}_X \otimes \mathcal{O}_X\) associated to \( \Gamma \) as before. Then \( \Pi + \Gamma \) defines a corresponding star product \( \starp_{\pi + \gamma}\) as follows:
    
    \begin{lem}
    There is an isomorphism, \( (\mathcal{O}_X \lBrack \hslash \rBrack,
    \starp_{\pi} ) \cong ( \mathcal{O}_X \lBrack \hslash \rBrack, \starp_{\pi + \gamma} )\), given by a map 
    \(\psi : \mathcal{O}_X \lBrack \hslash \rBrack \rightarrow \mathcal{O}_X \lBrack \hslash \rBrack\)  
    \begin{equation} 
    \label{eqn:star_prod_iso}
    \psi(f) = \exp \left( \frac{\hslash}{4} \, \gamma^{jk} \partial_j \partial_k \right) f = \exp \left( \frac{\hslash}{2} \widehat{y} \right) f.
    \end{equation}
    where \( \widehat{y} = \frac{1}{2} \gamma^{jk} \partial_j \partial_k\).
    Correspondingly the star product is given by: 
    \[f \starp_{\pi + \gamma}g =  \mathrm{prod} \circ \exp  \left( \frac{\hslash}{2} ( \pi + \gamma ) \right) (f \otimes g) \]
    \end{lem}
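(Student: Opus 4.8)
The plan is to verify that the prescribed map $\psi$ intertwines the two star products, namely $\psi(f \starp_{\pi} g) = \psi(f) \starp_{\pi+\gamma} \psi(g)$, and that $\psi$ is a unital $\mathbf{k}\lBrack \hslash \rBrack$-linear bijection; associativity of $\starp_{\pi+\gamma}$ (so that the right-hand side is genuinely an algebra) then follows by transporting it along $\psi$ from the associative star product $\starp_{\pi}$ of Definition~\ref{defn:star_prod_pois}. As with that definition and with Example~\ref{ex:moyal_prod}, the whole argument is local and is carried out in coordinates in which $\pi^{ij}$ and $\gamma^{ij}$ are constant, so that every bi-differential operator below has constant coefficients; this is the feature that makes all the relevant operators commute.

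First I would record the operators on $\mathcal{O}_X \otimes_{\mathbf{k}} \mathcal{O}_X$. Writing $\partial_i^{(1)} = \partial_i \otimes \id$ and $\partial_j^{(2)} = \id \otimes \partial_j$, we have $\pi = \pi^{ij}\partial_i^{(1)}\partial_j^{(2)}$ and $\gamma = \gamma^{ij}\partial_i^{(1)}\partial_j^{(2)}$, and I set $Y_1 = \widehat{y} \otimes \id = \tfrac12 \gamma^{jk}\partial_j^{(1)}\partial_k^{(1)}$ and $Y_2 = \id \otimes \widehat{y}$. These are polynomials with constant coefficients in the commuting family $\{\partial_i^{(1)}, \partial_j^{(2)}\}$, so $\pi,\gamma,Y_1,Y_2$ pairwise commute, and their $\hslash$-scaled exponentials are well-defined $\mathbf{k}\lBrack \hslash \rBrack$-linear endomorphisms (for each power of $\hslash$ only finitely many terms of the series contribute). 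The one nontrivial computation is the Leibniz identity
\[
    \widehat{y} \circ \mathrm{prod} \;=\; \mathrm{prod} \circ \left( Y_1 + \gamma + Y_2 \right),
\]
which holds because $\partial_j \partial_k(fg)$ expands as $(\partial_j\partial_k f)g + (\partial_j f)(\partial_k g) + (\partial_k f)(\partial_j g) + f(\partial_j\partial_k g)$, and contracting against the symmetric tensor $\tfrac12\gamma^{jk}$ merges the two cross terms into $\gamma^{jk}(\partial_j f)(\partial_k g) = \mathrm{prod}(\gamma(f\otimes g))$, while the outer two terms give $\mathrm{prod}(Y_1(f\otimes g))$ and $\mathrm{prod}(Y_2(f\otimes g))$. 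Iterating this identity gives $\widehat{y}^{\,n} \circ \mathrm{prod} = \mathrm{prod} \circ (Y_1+\gamma+Y_2)^n$, hence, after scaling and exponentiating,
\[
    \exp\!\left(\tfrac{\hslash}{2}\widehat{y}\right) \circ \mathrm{prod} \;=\; \mathrm{prod} \circ \exp\!\left(\tfrac{\hslash}{2}(Y_1+\gamma+Y_2)\right) \;=\; \mathrm{prod} \circ \exp\!\left(\tfrac{\hslash}{2}Y_1\right)\exp\!\left(\tfrac{\hslash}{2}\gamma\right)\exp\!\left(\tfrac{\hslash}{2}Y_2\right),
\]
the last equality by commutativity.

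Now I would chain these together. Since $\psi = \exp(\tfrac{\hslash}{2}\widehat{y})$ and, by Definition~\ref{defn:star_prod_pois}, $f \starp_{\pi} g = \mathrm{prod} \circ \exp(\tfrac{\hslash}{2}\pi)(f\otimes g)$, applying the displayed identity and then regrouping the commuting exponential factors yields
\[
    \psi(f \starp_{\pi} g) \;=\; \mathrm{prod} \circ \exp\!\left(\tfrac{\hslash}{2}(\pi+\gamma)\right) \exp\!\left(\tfrac{\hslash}{2}Y_1\right)\exp\!\left(\tfrac{\hslash}{2}Y_2\right)(f\otimes g).
\]
Because $\exp(\tfrac{\hslash}{2}Y_1)\exp(\tfrac{\hslash}{2}Y_2)(f\otimes g) = (\psi f)\otimes(\psi g)$, the right-hand side is $\mathrm{prod} \circ \exp(\tfrac{\hslash}{2}(\pi+\gamma))\big((\psi f)\otimes(\psi g)\big) = \psi(f) \starp_{\pi+\gamma} \psi(g)$, which is the intertwining property. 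Finally $\psi$ is $\mathbf{k}\lBrack\hslash\rBrack$-linear, reduces to $\id$ modulo $\hslash$ and fixes $1$ (as $\partial_j 1 = 0$), so it is a unital bijection with inverse $\exp(-\tfrac{\hslash}{2}\widehat{y})$ by the standard $\hslash$-adic successive-approximation argument; transporting associativity of $\starp_{\pi}$ along $\psi$ then shows $\starp_{\pi+\gamma}$ is an associative unital star product and $\psi$ the asserted isomorphism. One may also observe that the order-$\hslash$ term of $\starp_{\pi+\gamma}$ has antisymmetric part $\tfrac12(\pi^{ij}-\pi^{ji})\partial_i f\,\partial_j g$, the symmetric $\gamma$ cancelling, so the induced Poisson bracket is unchanged --- which is what justifies calling $\Pi \mapsto \Pi + \Gamma$ a gauge transformation.

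The main obstacle is really a matter of scope rather than difficulty: every step above uses that $\pi$, $\gamma$, $Y_1$, $Y_2$ have constant coefficients, so that they commute and the exponentials factor multiplicatively; for non-constant $\pi^{ij}$, $\gamma^{ij}$ the identity $\widehat{y}\circ\mathrm{prod} = \mathrm{prod}\circ(\,\cdot\,)$ picks up correction terms from differentiating the coefficients and the factorisation breaks, so the lemma --- like the star products it refers to --- must be read as a statement in local canonical coordinates. The only other thing to keep an eye on is $\hslash$-adic convergence of the exponential series, which is automatic since each operator strictly raises the $\hslash$-order after the $\tfrac{\hslash}{2}$ scaling; this part is routine.
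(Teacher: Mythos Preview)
Your proof is correct and follows essentially the same route as the paper: both establish the Leibniz identity $\widehat{y}\circ\mathrm{prod} = \mathrm{prod}\circ(Y_1 + \gamma + Y_2)$, exponentiate it, and use commutativity of the constant-coefficient operators to factor and obtain the intertwining relation. Your version is more thorough in spelling out why the operators commute, why $\psi$ is a bijection, and in flagging the constant-coefficient hypothesis, but the core argument is identical.
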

    
    \begin{proof}
    We are required to show:
    \[ \exp \left( \frac{\hslash}{2} \widehat{y} \right) \circ  \mathrm{prod} \circ \exp  \left( \frac{\hslash}{2}  \pi   \right) = \mathrm{prod} \circ \exp  \left( \frac{\hslash}{2}  (\pi + \gamma ) \right) \circ \left( \exp \left( \frac{\hslash}{2} \widehat{y} \right) \otimes \exp \left( \frac{\hslash}{2} \widehat{y} \right) \right).  \]
    First note, via the product rule, 
    \begin{align*} \widehat{y} \circ \mathrm{prod} ( f \otimes g) &= \mathrm{prod} \circ  ( \widehat{y} f \otimes g + f \otimes \widehat{y} g + \frac{1}{2} \gamma^{jk} \partial_j f \otimes \partial_k g + \frac{1}{2} \gamma^{jk}  \partial_k f \otimes \partial_j g) \\ 
    &= \mathrm{prod} \circ  ( \widehat{y} f \otimes g + f \otimes \widehat{y} g + \gamma^{jk} \partial_j f \otimes \partial_k g ). 
    \end{align*}
    So 
    \begin{align*} \exp \left( \frac{\hslash}{2} \widehat{y} \right) \circ \mathrm{prod} \circ \exp \left( \frac{\hslash}{2} \pi  \right)  &= \mathrm{prod} \circ  \exp \left( \frac{\hslash}{2} ( \pi +  \id \otimes \widehat{y} + \widehat{y} \otimes \id +  \gamma ) \right) \\
    &= \mathrm{prod} \circ \exp  \left( \frac{\hslash}{2}  (\pi + \gamma ) \right) \circ \left( \exp \left( \frac{\hslash}{2} \widehat{y} \right) \otimes \exp \left( \frac{\hslash}{2} \widehat{y} \right) \right) 
    \end{align*}
    \end{proof}
    
    
    This extends more generally. Suppose for now \( (X, \mathcal{O}_X )\) is not necessarily Poisson. This exponential form of the star-product in general does not rely on a Poisson structure, only a bi-map, but there is always a skew-symmeterisation which may correspond to a Poisson structure. Consider the star product associated to any bi-map \( \tau : \mathcal{O}_X \otimes \mathcal{O}_X \rightarrow \mathcal{O}_X \otimes \mathcal{O}_X\) (\( \tau\) not necessarily skew-symmetric):
    \begin{defn}[Star product associated to a bi-map] Let \( \tau\) be non-symmetric.
    \[  f \starp_{\tau} g = \mathrm{prod} \circ \exp \left( \frac{\hslash}{2} \tau \right) ( f \otimes g) \]
   \end{defn} 
    The star product \( \starp_\tau \) is isomorphic to \( \star_{\tau_{\text{skew}}}\). With the skew-symmeterisation of \( \tau\), so \( \tau \rightarrow \tau_{\text{skew}} =\frac{1}{2} \left( \tau -  \tau^T \right) \), choose \( \gamma = \frac{1}{2}\left( \tau + \tau^T\right) \), then via the isomorphism in equation (\ref{eqn:star_prod_iso}),  \( \starp_{\tau} \rightarrow \starp_{\tau_{\text{skew}}}\).

    One extra observation, for skew-symmetric bi-maps, composition with a permutation \( P : \mathcal{O}_X \otimes \mathcal{O}_X \rightarrow \mathcal{O}_X \otimes \mathcal{O}_X \), where \(P(f \otimes g) = - g \otimes f\) also gives a Poisson structure.
    
    \begin{lem} Let \( \pi \) be skew-symmetric.
    Then \[  \widetilde{\pi} (f\otimes g)= \pi \circ P ( f \otimes g) = -\pi^{ij} \partial_i (g) \otimes \partial_j (f), \] also defines a Poisson bracket.
    
    \( \widetilde{\pi}\) also satisfies the Yang-Baxter equation:
    \[ (\id \otimes \widetilde{\pi}) \circ ( \widetilde{\pi} \otimes \id  ) \circ ( \id \otimes \widetilde{\pi}) =  ( \widetilde{\pi} \otimes \id ) \circ (\id \otimes \widetilde{\pi}) \circ  ( \widetilde{\pi} \otimes \id ), \] 
     applied to a section \( f \otimes g \otimes h  \in \mathcal{O}_X \otimes_{\mathbf{k}} \mathcal{O}_X \otimes_{\mathbf{k}} \mathcal{O}_X  \).
     
    Further with \( \widetilde{\Pi} = \mathrm{prod} \cdot \widetilde{\pi}\), \( \starp_{\widetilde{\pi}}\) also defines a star product equivalent to  \( \starp_{\pi}\).
    \end{lem}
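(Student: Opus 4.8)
Everything flows from the factorisation \(\widetilde{\pi} = \pi \circ P\) together with two structural identities: \(P^2 = \id\), immediate from \(P(f\otimes g) = -g\otimes f\), and \(P\circ\pi = -\,\pi\circ P\), which is the one place the skew-symmetry \(\pi^{ij} = -\pi^{ji}\) is used and which I would check by relabelling the summation indices \(i \leftrightarrow j\) in \(\pi(-g\otimes f)\). Together these give \(\widetilde{\pi}^{2m} = (-1)^m \pi^{2m}\) and \(\widetilde{\pi}^{2m+1} = (-1)^m\,\pi^{2m+1}\circ P\), and both the Yang--Baxter equation and the comparison of star products will rest on them. For the first assertion I would simply compute \(\mathrm{prod}\circ\widetilde{\pi}(f\otimes g) = -\pi^{ij}\,\partial_i g\,\partial_j f\) and, using commutativity of the pointwise product together with skew-symmetry, rewrite this as \(\pi^{ij}\,\partial_i f\,\partial_j g = \{f,g\}\); since \(\widetilde{\pi}\) reproduces the \emph{same} bracket, skew-symmetry, the Leibniz rule and the Jacobi identity are inherited at no cost.

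For the Yang--Baxter equation (in braid form) I would work on \(\mathcal{O}_X\otimes_{\mathbf{k}}\mathcal{O}_X\otimes_{\mathbf{k}}\mathcal{O}_X\) and decompose \(R_{12} := \widetilde{\pi}\otimes\id = \Delta_{12}\circ\sigma_{12}\) and \(R_{23} := \id\otimes\widetilde{\pi} = \Delta_{23}\circ\sigma_{23}\), where \(\sigma_{ab}\) is the signed transposition of the \(a\)-th and \(b\)-th tensor factors and \(\Delta_{ab}\) is the bidifferential operator \(\pi^{ij}\,\partial_i\otimes\partial_j\) acting on those two factors. The relation then reduces to three elementary facts: (i) the operators \(\Delta_{12},\Delta_{13},\Delta_{23}\) commute pairwise, where one uses that the \(\pi^{ij}\) are constant so that all the partial derivatives involved commute; (ii) the intertwining identities \(\sigma_{ab}\circ\Delta_{cd} = \Delta_{\sigma_{ab}(c)\,\sigma_{ab}(d)}\circ\sigma_{ab}\), checked by evaluating on \(f\otimes g\otimes h\); and (iii) the signed transpositions satisfy \(\sigma_{12}\sigma_{23}\sigma_{12} = \sigma_{23}\sigma_{12}\sigma_{23}\), both sides being minus the transposition of the outer two factors, the sign coming from \((-1)^3\). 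Pushing every \(\sigma\) to the right using (ii), both \(R_{12}R_{23}R_{12}\) and \(R_{23}R_{12}R_{23}\) collapse to \(\Delta_{12}\Delta_{13}\Delta_{23}\circ(\sigma_{12}\sigma_{23}\sigma_{12})\), which is the claim. The only real work is keeping indices and signs straight in (ii)--(iii); note that (i) is precisely where constancy of \(\pi\) is needed.

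For the final assertion, \(\starp_{\widetilde{\pi}} = \mathrm{prod}\circ\exp\bigl(\frac{\hslash}{2}\widetilde{\pi}\bigr)\) is a star product by the earlier definition of the star product associated to a bi-map. Substituting the formulas for \(\widetilde{\pi}^{k}\) into the exponential and using the parity identity \(\mathrm{prod}\circ\pi^{k}\circ P = (-1)^{k+1}\,\mathrm{prod}\circ\pi^{k}\), which removes \(P\) from the odd powers, gives the closed form \(f\starp_{\widetilde{\pi}}g = \mathrm{prod}\circ\bigl(\cos(\frac{\hslash}{2}\pi)+\sin(\frac{\hslash}{2}\pi)\bigr)(f\otimes g)\); its \(\hslash^0\)- and \(\hslash^1\)-coefficients are \(\mathrm{prod}\) and \(\frac12\{\cdot,\cdot\}\), so \(\starp_{\widetilde{\pi}}\) is a deformation quantisation of the same Poisson bracket as \(\starp_{\pi}\). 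The step I expect to be the main obstacle is the word ``equivalent'': this closed form is \emph{not} literally \(\starp_{\pi}\) --- already at order \(\hslash^2\) the coefficient of \(\mathrm{prod}\circ\pi^2\) changes sign --- so one must actually produce an intertwiner \(\psi\) of the type in equation (\ref{eqn:star_prod_iso}). In the constant-coefficient (affine) setting I would obtain \(\psi\) either by invoking uniqueness of the star product up to equivalence --- the obstruction to equivalence is cohomological and vanishes on affine space for a fixed constant Poisson structure --- or, hands-on, by constructing \(\psi = \id + \sum_{n\ge 1}\hslash^n\psi_n\) recursively, the defect between \(\psi(f\starp_{\widetilde{\pi}}g)\) and \(\psi(f)\starp_{\pi}\psi(g)\) being at each order a Hochschild \(2\)-cocycle of \(\mathcal{O}_X\), hence a coboundary, which fixes \(\psi_n\). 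Producing and justifying the termination of this \(\psi\) is the substantive part; that \(\starp_{\widetilde{\pi}}\) is a star product reproducing the correct bracket is mechanical.
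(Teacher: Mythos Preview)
The paper's own proof is a single sentence: ``Check with direct computation and use skew-symmetry, \(-\pi^{ij}=\pi^{ji}\), and then relabel the terms.'' Your proposal is correct and far more structured than this. For the Poisson-bracket and Yang--Baxter claims you are doing exactly what the paper gestures at, but you have organised the computation: the factorisation \(R_{ab}=\Delta_{ab}\circ\sigma_{ab}\), the pairwise commutativity of the \(\Delta\)'s (using constancy of \(\pi^{ij}\)), the intertwining of \(\sigma\) past \(\Delta\), and the signed braid relation together give a clean reason why the two triple products agree, whereas the paper presumably means a brute expansion of both sides followed by index relabelling. Your route buys transparency about where each hypothesis is used; the paper's buys brevity.

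On the third assertion you go beyond the paper, and rightly so. You observe that \(\starp_{\widetilde{\pi}}\) and \(\starp_{\pi}\) differ already at order \(\hslash^2\) (the coefficient of \(\mathrm{prod}\circ\pi^2\) flips sign), so ``equivalent'' cannot mean equal; one needs an intertwiner \(\psi\). Your two suggested constructions --- vanishing of the Hochschild obstruction on affine space with constant \(\pi\), or a direct order-by-order construction of \(\psi\) --- are the standard ways to supply this. The paper's one-line proof does not produce such a \(\psi\), and since \(\widetilde{\pi}\) is not of the form \(\pi+\gamma\) with symmetric \(\gamma\) (it genuinely swaps tensor factors), the earlier gauge-transform lemma does not apply directly either. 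So either the paper is reading ``equivalent'' weakly (same underlying Poisson bracket, which you have already shown), or it is leaving the intertwiner to the reader. Your treatment is the more honest one; just be explicit about which reading of ``equivalent'' you are proving.
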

    
    \begin{proof}
    Check with direct computation and use skew-symmetry, \( -\pi^{ij} = \pi^{ji}\), and then relabel the terms.
    \end{proof}
    
    So \( \pi^k\) terms in the star product are diagrammatically braiding two threads together \( k \) times.

    In summary, the star product \(\star_{\pi}\), gives a way to replace the standard commutative product of functions by a non commutative, associative product. 
    The purpose of the star product \( \star_{\pi}\) from definition (\ref{defn:star_prod_pois}), is that when \( \mathcal{O}_X\) is a ring of polynomial or formal polynomials functions on a completion, \((\mathcal{A}_{\hslash}, \star_{\pi})\), is representable by a Weyl-algebra or formal completion of a Weyl-algebra in section (\ref{sec:weyl_algebra}).

    \subsection{Coisotripic subschemes and an obstructions to existence}
    
    Let \((X,\mathcal{O}_X)\) be a Poisson scheme. Let the deformation quantisation of \(\mathcal{O}_X\) be given by \(\mathcal{A}_{\hslash} = \mathcal{O}_X\lBrack \hslash \rBrack\) as sheaf of \( \mathbf{k} \lBrack \hslash \rBrack\)-algebras. Let \( \mathcal{A}_{\hslash}\) be equipped with the star product per definition (\ref{defn:star_prod_pois}).  An important question is what obstructions are there for the existence of a sheaf \(\mathcal{F}_\hslash\) of \( \mathcal{A}_{\hslash}\) modules. 
    
    It is possible to find an obstruction, which has a geometric meaning, by considering \emph{torsion}. For a module \(M\) over a ring \(R\):
    \begin{defn}[Torsion element] An element \(m\) in \(M\) is a \emph{torsion element}, if there exists an \(r \) in \(R\) such that \(rm=0\).
    \end{defn}
    \(M\) is a torsion module if all its elements are torsion elements. \(M\) is \(S\)-torsion if the \(r\) form a subring \(S\). 
    
    Flatness implies torsion free, while the converse is only true in \emph{Pr\'ufer domains}. For a flat \(R\)-module \(M\):
    \begin{lem}
    If \(M\) is flat over \(R\), then \(M\) is torsion free over \(R\).
    \end{lem}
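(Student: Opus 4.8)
The plan is to reduce torsion-freeness of \(M\) to the injectivity of a multiplication map, a property that flatness is precisely designed to preserve. First I would fix an element \(r \in R\) that is not a zero divisor (in the cases of interest, such as \(R = \mathbf{k}\lBrack \hslash \rBrack\), which is an integral domain, this means any non-zero \(r\)), and consider the multiplication-by-\(r\) homomorphism \(\mu_r : R \to R\), \(x \mapsto rx\). Since \(r\) is not a zero divisor, \(\mu_r\) is injective, giving an exact sequence \(0 \to R \overset{\mu_r}{\to} R\).

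Next I would apply \(- \otimes_R M\) to this injection. By the definition of flatness, tensoring an injection of \(R\)-modules with the flat module \(M\) again yields an injection, so \(\mu_r \otimes \id_M : R \otimes_R M \to R \otimes_R M\) is injective. Identifying \(R \otimes_R M\) with \(M\) through the canonical isomorphism \(x \otimes m \mapsto xm\), the map \(\mu_r \otimes \id_M\) becomes the endomorphism \(m \mapsto rm\) of \(M\). Hence multiplication by \(r\) is injective on \(M\); that is, \(rm = 0\) forces \(m = 0\).

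Finally I would conclude that, since this holds for every non-zero-divisor \(r \in R\), no non-zero element of \(M\) can be annihilated by such an \(r\), which is exactly the assertion that \(M\) is torsion free (over a domain this is torsion free in the usual sense, with respect to all non-zero elements of \(R\)). There is no real obstacle here: the only point that deserves care rather than difficulty is the hypothesis on \(r\) — injectivity of \(\mu_r\) requires \(r\) to be a non-zero-divisor — and, as the surrounding discussion notes, it is the \emph{converse} implication (torsion free \(\Rightarrow\) flat) that genuinely fails outside of Pr\"ufer domains.
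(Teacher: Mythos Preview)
Your argument is correct and follows essentially the same route as the paper: both hinge on the fact that flatness of \(M\) preserves the injectivity of the multiplication-by-\(r\) map \(R \to R\) after tensoring with \(M\). Your version is in fact slightly more careful than the paper's sketch, since you make explicit that \(r\) must be a non-zero-divisor for \(\mu_r\) to be injective and you argue elementwise rather than via \(\mathrm{Ann}_R(M)\).
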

    
    \begin{proof}
    We sketch the commutative case first. This extends to non-commutative \(R\) by considering left and right modules and annihilators. The goal is to show \( \mathrm{Ann}_R(M) = 0\).

    Suppose for a contradiction \(M\) is flat, but there exists a non zero \(r\in \mathrm{Ann}_R(M)\). Consider the sequence \[ 0 \rightarrow R \rightarrow R \] 
    given by multiplication by \(r\). If \( M\) is flat then
    \[ 0 \rightarrow R \otimes M \rightarrow R \otimes M. \]
    But \(r\) is an annihilator, so this must be the zero map, and hence this is a contradiction.
    \end{proof}

    Let the deformation quantisation of \((X,\mathcal{O}_X)\) be given by \( \mathcal{A}_\hslash = ( \mathcal{O}_X \lBrack \hslash \rBrack, \starp_{\pi} )  \) as before. Suppose there exists a deformation quantisation of a sheaf \( \mathcal{F}\), given by \( \mathcal{F}_{\hslash}\).
    If \( \mathcal{F}_\hslash \) is flat over \( \mathbf{k} \lBrack \hslash \rBrack \), then this implies that \( \mathcal{F}_{\hslash}\) must be \( \hslash\)-torsion free. Then \(\hslash\)-torsion free gives the constraint, proven below in theorem (\ref{thm:torsionfree}), that the support of the sheaf \(\mathcal{F} \cong \mathcal{F}_{\hslash}/\hslash \mathcal{F}_{\hslash}\) must be \emph{coisotropic}. Recall coisotropic means \( \mathcal{F}\) is defined by an ideal sheaf closed under the Poisson bracket. A maximally coisotropic sheaf corresponds to a Lagrangian subscheme.
   
    Let \( \mathcal{F}\) be a sheaf of \(\mathcal{O}_X \)-modules, defined by a sequence
    \[  0 \rightarrow \mathcal{J} \rightarrow \mathcal{O}_X \rightarrow \mathcal{F} \rightarrow 0. \]
    Recall \( \mathcal{F}\) is coisotropic if \(\mathcal{J}\) is closed or involutive under the Poisson bracket.
    Consider a sheaf \( \mathcal{F}_{\hslash}\). 
    \begin{thm}[\cite{b_defquant}] \label{thm:torsionfree} 
    Suppose the sheaf \( \mathcal{F}_{\hslash}\) of \(\mathcal{A}_{\hslash} \)-modules is defined by an ideal sheaf \( \mathcal{J}_{\hslash} \):
    \[ 0 \rightarrow \mathcal{J}_{\hslash} \rightarrow \mathcal{A}_\hslash \rightarrow \mathcal{F}_{\hslash} \rightarrow 0. \]
    For this quotient to restrict to the sequence for \( \mathcal{F}\), modulo \( \hslash\), such that \( \mathcal{F}_{\hslash}\) is \(\hslash\)-torsion free, then the support of \( \mathcal{F}\) in \(X\) must be coisotropic.
    \end{thm}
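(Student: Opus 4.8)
The claim is local on \(X\) and the Poisson bracket is computed on sections, so I would reduce immediately to an affine open \(U = \Spec A\) and set \(A = \mathcal{O}_X(U)\), \(J = \mathcal{J}(U)\), \(F = A/J\), \(\mathcal{A}_\hslash(U) = A\lBrack\hslash\rBrack\), and \(F_\hslash = A\lBrack\hslash\rBrack/\mathcal{J}_\hslash(U)\); the goal is then \(\{f,g\}\in J\) for all \(f,g\in J\). The two facts I would lean on are the consequences of the hypotheses: \(F_\hslash/\hslash F_\hslash \cong F = A/J\), with the image \(v\) of \(1\in A\lBrack\hslash\rBrack\) reducing to the class \(\bar v\) of \(1\); and \(F_\hslash\) is \(\hslash\)-torsion free — this being exactly where flatness of \(\mathcal{F}_\hslash\), through the earlier lemmas, is used.

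The key move is to evaluate everything on the cyclic generator \(v\) of \(F_\hslash\). For \(f\in J\), viewed as an \(\hslash\)-constant element of \(A\lBrack\hslash\rBrack\), the element \(f\starp_\pi v\) reduces mod \(\hslash\) to \(f\cdot\bar v = 0\) in \(A/J\), so \(f\starp_\pi v\in\hslash F_\hslash\); by \(\hslash\)-torsion-freeness there is a unique \(w_f\in F_\hslash\) with \(f\starp_\pi v=\hslash\,w_f\), and likewise \(g\starp_\pi v=\hslash\,w_g\). Using associativity of the \(\mathcal{A}_\hslash\)-action and centrality of \(\hslash\),
\[
(f\starp_\pi g-g\starp_\pi f)\starp_\pi v=f\starp_\pi(g\starp_\pi v)-g\starp_\pi(f\starp_\pi v)=\hslash\bigl(f\starp_\pi w_g-g\starp_\pi w_f\bigr).
\]
On the other hand, since \(\omega_0=\mathrm{prod}\) is commutative, \(f\starp_\pi g-g\starp_\pi f=\hslash\,h\) for some \(h\in A\lBrack\hslash\rBrack\) with \(h\equiv\{f,g\}\bmod\hslash\) by equation (\ref{eqn:star_to_pois}); hence the left-hand side above equals \(\hslash\,(h\starp_\pi v)\). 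Cancelling \(\hslash\) once more by \(\hslash\)-torsion-freeness gives \(h\starp_\pi v=f\starp_\pi w_g-g\starp_\pi w_f\) in \(F_\hslash\).

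To finish I would reduce this last identity modulo \(\hslash\), using that the reduction \(F_\hslash\to F\) intertwines the \(\mathcal{A}_\hslash\)-action with the \(\mathcal{O}_X\)-action along \(\mathcal{A}_\hslash\to\mathcal{O}_X\): the left side becomes \(\overline h\cdot\bar v=\{f,g\}\cdot\bar v\), the class of \(\{f,g\}\) in \(F=A/J\), while the right side becomes \(f\cdot\overline{w_g}-g\cdot\overline{w_f}=0\) because \(f,g\in J\) act as zero on \(A/J\). Therefore \(\{f,g\}\in J\), so \(\mathcal{J}\) is involutive under the Poisson bracket and \(\mathrm{supp}(\mathcal{F})=V(\mathcal{J})\) is coisotropic; gluing over an affine cover is automatic since involutivity is a local condition. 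The step I expect to be the real obstacle — or at least the one demanding genuine care — is the legitimacy of dividing by \(\hslash\): one must know \(f\starp_\pi v\) really lies in \(\hslash F_\hslash\) (which needs the mod-\(\hslash\) reduction of \(\mathcal{J}_\hslash\) to be exactly \(\mathcal{J}\)) and that \(F_\hslash\) itself, not merely \(\mathcal{A}_\hslash\), is \(\hslash\)-torsion free; both are precisely what the flatness assumption on \(\mathcal{F}_\hslash\) secures. This is essentially the argument of \cite{b_defquant}.
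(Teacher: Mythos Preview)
Your proof is correct and follows essentially the same approach as the paper: both show that \(\mathcal{J}\) acting on \(\mathcal{F}_\hslash\) lands in \(\hslash\mathcal{F}_\hslash\), hence \([\mathcal{J},\mathcal{J}]\) lands in \(\hslash^2\mathcal{F}_\hslash\), and reducing \(\tfrac{1}{\hslash}[\mathcal{J},\mathcal{J}]\) modulo \(\hslash\) forces \(\{\mathcal{J},\mathcal{J}\}\) to annihilate \(\mathcal{F}=A/J\). Your version is slightly more explicit in tracking the cyclic generator and in isolating exactly where \(\hslash\)-torsion-freeness is invoked to cancel \(\hslash\), but the underlying argument is the same as the paper's.
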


    The idea of the proof is to show that given a deformation quantisation, and two elements that restrict to elements of an ideal sheaf, their star product restricts to a Poisson bracket, and then this forces the ideal sheaf to be closed under the Poisson bracket. 

    \begin{proof}
    Consider the subsheaf \(\mathcal{J}_\hslash\)  of \(\mathcal{A}_{\hslash}\) which  maps to \( \mathcal{J}\) in the quotient by \(\hslash \mathcal{J}_{\hslash}\)
    
    Then \( \mathcal{J}_\hslash\) acts on \( \mathcal{F}_\hslash\) via the star product, 
    \( \mathcal{J}_\hslash \otimes_{\mathbf{k} \lBrack \hslash \rBrack } \mathcal{F}_\hslash \rightarrow  \mathcal{F}_\hslash\).  The image of this action is \( \hslash \mathcal{F}_\hslash\).
    
    This is because a section of \( \mathcal{J}_{\hslash}\) can be written as \( f + O(\hslash)\), where \(f \in \mathcal{J}\). Acting on \( \mathcal{F}_{\hslash}\), as \(f\) annihilates any \( O(\hslash^0)\) terms in \( \mathcal{F}_{\hslash}\), so elements of \( \mathcal{F}\), this leaves \( O(\hslash)\) terms which is \(\hslash \mathcal{F}_{\hslash}\).
    
    Define the sheaf \(  [\mathcal{J}_{\hslash},\mathcal{J}_{\hslash} ]:=  \{  f \, \starp_{\pi} \, g - g \, \starp_{\pi}\,  f, f,g \in \mathcal{J}_{\hslash}\}\). Then similarly consider an action 
    \(  [\mathcal{J}_\hslash  ,   \mathcal{J}_\hslash  ]  \otimes_{\mathbf{k}\lBrack\hslash\rBrack} \mathcal{F}_\hslash \rightarrow  \mathcal{F}_\hslash\), which has an image contained in \( \hslash^2 \mathcal{F}_{\hslash}\).
    
    This is because successive applications of \(f\) and \(g\) on \( \mathcal{F}_{\hslash}\), yield \( (f \starp_{\pi} g) \mathcal{F}_{\hslash} = f ( g \mathcal{F}_{\hslash}) \subset f ( \hslash \mathcal{F}_{\hslash} ) \subset \hslash^2 \mathcal{F}_{\hslash}\) (where \( \subset\) means as a subsheaf).

    Finally consider the action by
    \( \frac{1}{\hslash} [ \mathcal{J}_\hslash, \mathcal{J}_\hslash ] \) on \( \mathcal{F}_{\hslash}\). Similary to above, this action has an image \( \hslash \mathcal{F}_{\hslash}\).
    
    Then under the quotient map by \( \hslash \mathcal{F}_{\hslash}\),  \(\hslash \mathcal{F}_\hslash \rightarrow 0\), \( [\mathcal{J}_{\hslash} ,\mathcal{J}_{\hslash}]\) acting on \( \mathcal{F}_{\hslash}\), maps to \( \{\mathcal{J},\mathcal{J}\} \) acting on \( \mathcal{F}\). 
    
    However the image of \(\{\mathcal{J},\mathcal{J}\}\) in the quotient is zero, so this says  \(\{\mathcal{J},\mathcal{J}\}\) annihilates \( \mathcal{F}\).

    Hence \( \mathcal{J} \) is closed under the Poisson bracket.
    \end{proof}
    
    \begin{rem}
    Note \( \mathcal{J}_{\hslash} \neq \mathcal{J}\lBrack \hslash \rBrack\). In some later examples \( \mathcal{J}_{\hslash}(U) =  \sum_i \mathcal{A}_{\hslash}(U) \star  J_i \) where the \(J_i\) are minimal collection of generators from \(\mathcal{J}(U) = \langle J_i\rangle  \), but it is important to note the ring structure is different. \( \mathcal{J}_{\hslash}(U)\) is generated with the star product, while \( \mathcal{J}(U)\) just uses the commutative product.
    \end{rem}

    A further corollary of lemma (\ref{thm:torsionfree}) is that we cannot find an \( \mathcal{A}_\hslash \) \(\hslash \)-torsion free module supported at a point. An example is the sky scraper sheaf associated to a point.

    \begin{ex}Consider the following non-example of theorem (\ref{thm:torsionfree}). Let \( p = (0,0) \in X \cong \mathbb{C}^2 \) be a point. Consider the skyscraper sheaf \( \mathrm{skysc}_p \), defined by  the ideal of functions \( x,y\) with the standard Poisson structure \( \{ x,y \} = 1\). Then in the previous proof we must necessarily have \( \hslash \) torsion.
    \end{ex}

    
    The proof fails if there is \( \hslash\) torsion. If \( \hslash \mathcal{F}_{\hslash} = 0\), then the only possibility is \( \mathcal{F}_{\hslash} = \mathcal{F}\), which is excluded by flatness.
    
    As expected, this torsion condition, lemma (\ref{thm:torsionfree}), is equivalent to the statement that \( \mathcal{A}_{\hslash}\) modules over \( \mathbf{k}\lBrack \hslash \rBrack \) are flat.
    
    \begin{lem} If \( \mathcal{F}_\hslash\) is flat over \( \mathbf{k}\lBrack \hslash \rBrack \), then the support is coisotropic.
    \end{lem}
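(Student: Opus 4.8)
The plan is to reduce this statement directly to Theorem \ref{thm:torsionfree}, which has already done the substantive work. The chain of implications to exhibit is: \(\mathcal{F}_{\hslash}\) flat over \(\mathbf{k}\lBrack \hslash \rBrack\) \(\Rightarrow\) \(\mathcal{F}_{\hslash}\) is \(\hslash\)-torsion free \(\Rightarrow\) the support of \(\mathcal{F} \cong \mathcal{F}_{\hslash}/\hslash\mathcal{F}_{\hslash}\) is coisotropic. The first arrow is exactly the lemma proved just above (flat implies torsion free, applied with \(R = \mathbf{k}\lBrack \hslash \rBrack\) and the torsion taken with respect to the multiplicative set generated by \(\hslash\)); the second arrow is the content of Theorem \ref{thm:torsionfree}. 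So the proof is essentially one line: combine the two results.

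The one point that needs a little care is that Theorem \ref{thm:torsionfree} is phrased in terms of a sheaf \(\mathcal{F}_{\hslash}\) presented by an ideal sheaf \(\mathcal{J}_{\hslash}\), i.e. as a quotient \(0 \to \mathcal{J}_{\hslash} \to \mathcal{A}_{\hslash} \to \mathcal{F}_{\hslash} \to 0\). To apply it here I would first observe that it suffices to check coisotropy locally, and on a small enough affine open \(U\) the module \(\mathcal{F}_{\hslash}(U)\) — being a deformation of the \(\mathcal{O}_X(U)\)-module \(\mathcal{F}(U)\) — admits a presentation by generators and relations over \(\mathcal{A}_{\hslash}(U)\); restricting to the cyclic case (or, in general, running the argument on each cyclic subquotient, since the support of \(\mathcal{F}\) is the union of the supports of \(\mathcal{O}_X/\mathrm{Ann}\) of its local sections) puts us in the situation of Theorem \ref{thm:torsionfree}. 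The hypothesis there is precisely \(\hslash\)-torsion freeness, which we have from flatness, so its conclusion gives that the defining ideal \(\mathcal{J}\) of \(\mathrm{supp}(\mathcal{F})\) is closed under the Poisson bracket, i.e. the support is coisotropic.

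I expect the main (minor) obstacle to be purely bookkeeping: making the passage from an arbitrary quasicoherent \(\mathcal{F}_{\hslash}\) to the ideal-sheaf presentation required by Theorem \ref{thm:torsionfree} genuinely rigorous, in particular handling the case where \(\mathcal{F}\) is not cyclic and its annihilator ideal is not literally the kernel of a single surjection \(\mathcal{A}_{\hslash} \to \mathcal{F}_{\hslash}\). This is handled by localising and by noting that coisotropy of \(\mathrm{supp}(\mathcal{F})\) only depends on the radical ideal cutting it out, so one can replace \(\mathcal{F}\) by \(\mathcal{O}_X/\mathcal{I}\) with \(\mathcal{I} = \sqrt{\mathrm{Ann}(\mathcal{F})}\) and lift a presentation of that. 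Everything else is immediate from the already-established results.

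\begin{proof}
By the previous lemma, flatness of \(\mathcal{F}_{\hslash}\) over \(\mathbf{k}\lBrack \hslash \rBrack\) implies \(\mathcal{F}_{\hslash}\) is torsion free over \(\mathbf{k}\lBrack \hslash \rBrack\); in particular it is \(\hslash\)-torsion free. Coisotropy of \(\mathrm{supp}(\mathcal{F})\) is a local condition on \(X\) and depends only on the reduced ideal sheaf \(\mathcal{J}\) cutting it out, so we may work on an affine open \(U\) and present \(\mathcal{F}_{\hslash}(U)\) as a quotient of \(\mathcal{A}_{\hslash}(U)\) by an ideal \(\mathcal{J}_{\hslash}(U)\) lifting \(\mathcal{J}(U)\), exactly as in Theorem \ref{thm:torsionfree}. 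Applying that theorem, the \(\hslash\)-torsion freeness of \(\mathcal{F}_{\hslash}\) forces \(\mathcal{J}\) to be closed under the Poisson bracket, i.e. \(\mathrm{supp}(\mathcal{F})\) is coisotropic.
\end{proof}
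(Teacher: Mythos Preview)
Your proposal is correct and matches the paper's own approach: the paper does not give a separate proof of this lemma but simply notes, immediately before stating it, that the torsion condition of Theorem~\ref{thm:torsionfree} is what flatness over \(\mathbf{k}\lBrack \hslash \rBrack\) buys you, so the intended argument is exactly your chain ``flat \(\Rightarrow\) \(\hslash\)-torsion free \(\Rightarrow\) coisotropic support via Theorem~\ref{thm:torsionfree}.'' Your extra care about reducing to a cyclic presentation and passing to the radical is more than the paper supplies, but it is addressing a genuine gap in the setup rather than deviating from the method.
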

    
    As a corollary of lemma (\ref{thm:torsionfree}), if the support is isotropic, there must be \(\hslash\)-torsion. In the affine case, of a sheaf on a supported on a subvariety, if \(\mathcal{F}\) is determined by an ideal sheaf, \(\mathcal{J}\), then an example would locally be trying to satisfy an overdetermined collection of equations. 

    \section{Cyclic sheaves and cyclic DQ-modules}

    \subsection{Weyl-algebras and D-modules}
    \label{sec:weyl_algebra}
    
    In physics, a operators on a Hilbert space represent the quantisation of some classical theory. For a scheme \( (X,\mathcal{O}_X)\), the algebraic analog of operators on a Hilbert space is a ring of differential operators \cite{k_holonomic}, as a \( \mathcal{O}_X\) module. These rings of differential operators arise from deformation quantisation.

    Let \(X= T^* C\) be a symplectic variety with local Darboux coordinates \( (x_i,y_i)\). Consider the deformation quantisation of \( \mathcal{O}_X\), given by \( \mathcal{A}_{\hslash} = (\mathcal{O}_X\lBrack \hslash \rBrack , \starp_{\pi}) \), and the localisation at \(\hslash\), \( \mathcal{A}_{\hslash}^{\text{loc}} \). \( \mathcal{A}_{\hslash}\) is isomorphic to a sheaf of rings of differential operators on \(X\). These rings are representable as a Weyl-algebra over \( \mathbf{k} \lBrack \hslash \rBrack\). Further the localisation \( \mathcal{A}_{\hslash}^{\text{loc}}\) identifies differential operators with vector fields. 
    

    \begin{defn}[Weyl-algebra over \( \mathbf{k}\)] 
    The Weyl-algebra \( \mathcal{W}_{n}\), is an associative unital \( \mathbf{k}\)-algebra generated by \(n\) elements \( x^{i}\) and \(n\) elements \( \partial_i\) with the constraints that \( x^i x^j = x^j x^i, \partial_i \partial_j = \partial_j \partial_i\) and \( \partial_j x^i - x^i \partial_j  = \delta^{i}_j\).
    \end{defn}
    

    \begin{defn}[\(D\)-module] A \emph{\(D\)-module} is a module over a ring \(D\) of differential operators.
    \end{defn}
    \begin{ex}
    Examples of \(D\)-modules are modules over \( \mathcal{D}(X)\), where \( \mathcal{D}(X)\) is the ring of differential operators on a variety \(X\).
    \end{ex}

    If \( \mathbf{k}\) is characteristic zero, there is an isomorphism between Weyl-algebras, and the ring of differential operators, \(\mathcal{D}( \mathbb{A}_n)\) acting on the ring (or module) \( \mathcal{O}(\mathbb{A}_n)\) of functions of an \(n\) dimensional affine space. Let \( \mathbb{A}_n = \Spec(R), R=\mathbf{k}[x_1, \dots x_n]\). Then
    \begin{lem}
    \(\mathcal{D}(\mathbb{A}_n) \cong \mathcal{W}_n \cong \mathbf{k}[x_1,\dots x_n,\partial_1, \dots , \partial_n].\)
    \end{lem}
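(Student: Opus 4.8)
The plan is to establish the chain of isomorphisms $\mathcal{D}(\mathbb{A}_n) \cong \mathcal{W}_n \cong \mathbf{k}[x_1,\dots,x_n,\partial_1,\dots,\partial_n]$ by exhibiting explicit maps and checking they are mutually inverse. The rightmost object is just a presentation of the Weyl algebra (the free associative algebra on the $2n$ symbols $x_i,\partial_i$ modulo the commutation ideal), so that isomorphism is essentially a matter of unwinding the definition; I would treat it as a notational identity and spend no effort on it. The substance is the isomorphism $\mathcal{W}_n \cong \mathcal{D}(\mathbb{A}_n)$.

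First I would define a $\mathbf{k}$-algebra homomorphism $\Phi : \mathcal{W}_n \to \mathcal{D}(\mathbb{A}_n)$ by sending the generator $x^i$ to the operator "multiplication by $x_i$" and the generator $\partial_i$ to the derivation $\tfrac{\partial}{\partial x_i}$ acting on $R = \mathbf{k}[x_1,\dots,x_n]$. To see this is well-defined I must check the three families of relations hold among these operators: multiplications commute with each other (clear), the partials commute (equality of mixed partials, valid in characteristic zero or indeed any characteristic), and $\tfrac{\partial}{\partial x_i}\circ (x_j\cdot) - (x_j\cdot)\circ\tfrac{\partial}{\partial x_i} = \delta^i_j$, which is just the Leibniz rule applied to $x_j f$. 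Next I would show $\Phi$ is surjective: by definition $\mathcal{D}(\mathbb{A}_n)$ is generated as a ring by $\mathcal{O}(\mathbb{A}_n) = R$ (order-zero operators) together with the derivations, and every $\mathbf{k}$-derivation of $R$ is an $R$-linear combination of the $\tfrac{\partial}{\partial x_i}$; inducting on the order of a differential operator, any operator of order $\le m$ differs from an $R$-combination of monomials in the $\partial_i$ by an operator of order $\le m-1$, so everything lies in the image.

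For injectivity I would use the standard filtration/normal-form argument. Using the commutation relations, every element of $\mathcal{W}_n$ can be rewritten as a $\mathbf{k}$-linear combination of ordered monomials $x^{\alpha}\partial^{\beta}$ (all $x$'s to the left), so these monomials span $\mathcal{W}_n$; I would then argue they are linearly independent by letting $\Phi$ of such a combination act on the monomials $x^{\gamma}\in R$ and reading off coefficients — applying $\sum c_{\alpha\beta}x^{\alpha}\partial^{\beta}$ to $x^{\gamma}$ and comparing, one recovers the $c_{\alpha\beta}$, forcing them to vanish if the operator is zero. This simultaneously shows the ordered monomials are a $\mathbf{k}$-basis of $\mathcal{W}_n$ and that $\Phi$ is injective, hence an isomorphism. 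I would note that characteristic zero is used precisely here: in positive characteristic $\partial_i^p$ annihilates $R$, so $\Phi$ acquires a kernel and the statement fails; this is why the hypothesis is flagged. The main obstacle, such as it is, is being careful about the normal-form reduction and the evaluation pairing — it is routine but it is the one place where something could go wrong if, say, one forgot the characteristic hypothesis or the distinction between $\mathcal{D}(\mathbb{A}_n)$ as an abstract ring versus as an algebra of operators on $R$.
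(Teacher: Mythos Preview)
Your proposal is correct and follows the same underlying idea as the paper --- identify the generators $x_i$ and $\partial_i$ on both sides --- but your argument is considerably more thorough than what the paper actually gives. The paper's proof is a two-line sketch: it simply observes that $\mathcal{D}(\mathbb{A}_n)$ is generated by derivations of $R=\mathbf{k}[x_1,\dots,x_n]$, that derivations of the polynomial ring are the $\tfrac{\partial}{\partial x_i}$, and then identifies these with the $\partial_i$ in $\mathcal{W}_n$. It does not verify the relations, does not discuss injectivity, and does not isolate where characteristic zero enters. You supply all of this: the well-definedness check, the inductive surjectivity argument, and the PBW/normal-form injectivity via evaluation on monomials $x^{\gamma}$, together with the observation that $\partial_i^{p}$ kills $R$ in characteristic $p$. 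What you gain is an actual proof rather than a gesture; what the paper's version buys is brevity in a context where the result is standard background.
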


    \begin{proof}
    The ring of differential operators \( \mathcal{D}(\mathbb{A}_n) \), is generated by compositions of derivations, \( \mathrm{Der}(R,R)\), on \(\mathbb{A}_n\), which are \(\mathbf{k}\)-linear maps satisfying Leibniz. Derivations on the polynomial ring are given by \( \frac{\partial}{\partial x_i}\), which we identify with \( \partial_i \) in the Weyl-algebra.
    \end{proof}
    
    For a subvariety \(X \rightarrow \mathbb{A}_n\), \(X = \Spec(R)\),  \(R=\mathbf{k}[x_1, \dots x_n]/I\), the ring of differential operators on \(X\) is given as follows:
    \begin{lem}
    \(  \mathcal{D}(X) \simeq \{ d \in \mathcal{W}_n, d(I) \subseteq I \}/I \cdot \mathcal{W}_n,\)  where
    \( \mathcal{W}_n=\mathbf{k}[x_1,\dots x_n, \partial_1, \dots \partial_n]\).
    \end{lem}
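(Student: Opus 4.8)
The plan is to view $\mathcal{D}(X)$, for the affine variety $X=\Spec(R)$ with $R=S/I$ and $S=\mathbf{k}[x_1,\dots,x_n]$, as the ring of Grothendieck differential operators on $R$, i.e.\ the exhaustive union $\mathcal{D}(R)=\bigcup_m\mathcal{D}^{\le m}(R)$ of the order filtration, and to compare it with $\mathcal{W}_n=\mathcal{D}(S)$ (the previous lemma) by transporting operators along the quotient $\pi\colon S\twoheadrightarrow R$. Write $\mathbb{I}=\{d\in\mathcal{W}_n:d(I)\subseteq I\}$ for the operators on $S$ preserving $I$ (the idealizer of the left ideal $I\mathcal{W}_n$). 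First I would record the formal facts: $\mathbb{I}$ is a $\mathbf{k}$-subalgebra of $\mathcal{W}_n$; $I\mathcal{W}_n\subseteq\mathbb{I}$; and each $d\in\mathbb{I}$ of order $\le m$ descends to $\bar d\in\mathcal{D}^{\le m}(R)$, since reducing the defining identity $[\cdots[[d,f_0],f_1],\dots,f_m]=0$ modulo $I$ and using surjectivity of $\pi$ shows $\bar d$ is annihilated by the $(m{+}1)$-fold commutator against multiplication operators on $R$. This yields a ring homomorphism $\Phi\colon\mathbb{I}\to\mathcal{D}(R)$, $d\mapsto\bar d$, and the theorem reduces to the two claims $\ker\Phi=I\mathcal{W}_n$ (which in particular exhibits $I\mathcal{W}_n$ as a two-sided ideal of $\mathbb{I}$) and $\Phi$ surjective.

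The kernel is controlled by the elementary Claim that $\{d\in\mathcal{W}_n:d(S)\subseteq I\}=I\mathcal{W}_n$. One inclusion is immediate. For the other, expand $d=\sum_\alpha f_\alpha\partial^\alpha$ uniquely in the free left $S$-module $\mathcal{W}_n$, evaluate on the monomial $x^\alpha$, and use $d(x^\alpha)=\alpha!\,f_\alpha+\sum_{\beta<\alpha}c_{\alpha\beta}f_\beta$ with $c_{\alpha\beta}\in S$ and $\beta<\alpha$ componentwise; inducting on $|\alpha|$ and invoking $\alpha!\in\mathbf{k}^\times$ in characteristic zero forces each $f_\alpha\in I$. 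Applied to preimages of $0$, this gives $\ker\Phi=\{d\in\mathbb{I}:d(S)\subseteq I\}=I\mathcal{W}_n$; it also gives two-sidedness, since for $f\in I$, $e\in\mathbb{I}$ and any $d\in\mathcal{W}_n$ both $e\,fd$ and $fd\,e$ send $S$ into $I$, hence lie in $I\mathcal{W}_n$. So $\Phi$ descends to an injection $\mathbb{I}/I\mathcal{W}_n\hookrightarrow\mathcal{D}(R)$.

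Surjectivity is the substantive step. Given $\delta\in\mathcal{D}^{\le m}(R)$, consider $\delta\circ\pi\colon S\to R$, a differential operator of order $\le m$ from $S$ to the $S$-module $R$; by the universal property of the module of principal parts this is an element of $\mathrm{Hom}_S(P^m_{S/\mathbf{k}},R)$. Since $S$ is a polynomial ring, $P^m_{S/\mathbf{k}}$ is finite free over $S$ (a basis is the monomials of degree $\le m$ in the elements $x_i\otimes 1-1\otimes x_i$), so $\mathrm{Hom}_S(P^m_{S/\mathbf{k}},-)$ takes the surjection $S\twoheadrightarrow R$ to a surjection $\mathcal{D}^{\le m}(S)=\mathrm{Hom}_S(P^m_{S/\mathbf{k}},S)\twoheadrightarrow\mathrm{Hom}_S(P^m_{S/\mathbf{k}},R)$, which under these identifications is precisely the map $e\mapsto\pi\circ e$. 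Hence there is $e\in\mathcal{D}^{\le m}(S)\subseteq\mathcal{W}_n$ with $\pi\circ e=\delta\circ\pi$. For $f\in I$ this gives $\pi(e(f))=\delta(\pi(f))=0$, so $e(I)\subseteq I$, i.e.\ $e\in\mathbb{I}$, and $\bar e=\delta$ by construction; taking the union over all $m$ proves $\Phi$ surjective. Combining, $\mathbb{I}/I\mathcal{W}_n\cong\mathcal{D}(R)=\mathcal{D}(X)$.

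The step I expect to be the main obstacle is this last lifting. It is where smoothness of $\mathbb{A}^n$ genuinely enters, through $P^m_{S/\mathbf{k}}$ being finite free over $S$, and two points need care: that the surjection of Hom-modules is literally ``compose with $\pi$'' (so that the lift $e$ really reduces to $\delta$, not merely to something congruent to it), and that lifting does not raise the order, which is what legitimizes passing to the union over $m$. Once those are in hand, the remaining facts — that such an $e$ automatically preserves $I$, and that the whole correspondence is a ring isomorphism — follow formally.
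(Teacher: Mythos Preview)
The paper states this lemma without proof, so there is nothing to compare against. Your argument is correct and is essentially the standard proof of this well-known fact (it appears, for instance, in McConnell--Robson, \emph{Noncommutative Noetherian Rings}, Chapter 15). The three ingredients---that $I\mathcal{W}_n$ is exactly the set of operators sending all of $S$ into $I$ (your induction on $|\alpha|$ using $\alpha!\in\mathbf{k}^\times$), that the idealizer $\mathbb{I}$ maps to $\mathcal{D}(R)$ with kernel $I\mathcal{W}_n$, and the lifting of $\delta\in\mathcal{D}^{\le m}(R)$ via freeness of $P^m_{S/\mathbf{k}}$---are all handled cleanly. Your caveat at the end is well placed: the identification of the map $\mathrm{Hom}_S(P^m_{S/\mathbf{k}},S)\to\mathrm{Hom}_S(P^m_{S/\mathbf{k}},R)$ with $e\mapsto\pi\circ e$, and the fact that the lift stays in order $\le m$, are exactly the points where one could slip, and you have them right.
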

    
    \subsection{Weyl-algebras from deformation quantisation}
    
    These Weyl-algebras do not yet give a representation of a deformation quantisation. Consider the case of a \(2n\) dimensional Poisson algebraic variety \(W\), with coordinates \( x^i\) and \(y_i\), so 
    \( \mathcal{O}(W) = \mathbf{k}[x^i, y_i]\). The Poisson bracket is given by \( \{x^i, y_j \} = \delta_{j}^i, \{x_i,x_j\} = \{y_i,y_j \} = 0\). 
    In this case the ring of differential operators \( \mathcal{D}(W) \), is isomorphic to a Weyl-algebra in \(4n\) variables, \( \mathcal{D}(W) \cong \mathcal{D}_{2n}\), where \[\mathcal{D}_{2n} = \langle  x_i, y_i, \frac{\partial}{ \partial x_i}, \frac{\partial}{ \partial y_j} \rangle  .\] 
    Ignoring the issue of \( \hslash \) coefficients, \( \mathcal{D}(W)\) does not give a representation of a deformation quantisation of \( \mathcal{O}(W) \). There are an extra \(2n\) generators. In the \(2n\) dimensional case, a choice of polarisation will give the correct result. Choosing a Lagrangian \(L\) in \(\mathcal{T}_W\), such that \( L = \mathcal{T}_{\cL} \), where \( \cL\) is a subvariety
    determined by an ideal with \(n\) generators, there is an isomorphism \(\mathcal{D}(\mathbb{L})  \cong \mathcal{D}_n\). For a coisotropic subvariety of codimension \(g\),  there is a natural identification with \( \mathcal{D}_{2n-g}\) 
            
    \subsection{Cyclic sheaves}
     Let \(M\) be a module over a commutative ring \(R\). 
    \begin{defn}[Support]
    The support, \( \mathrm{Supp}(M)\) is the set of prime ideals \( \mathfrak{p}\) in \(R\) such that \(M_{\mathfrak{p}} \neq 0\).
    \end{defn}
    
    The following result from commutative algebra is how we will think of cyclic modules. Let \(I\) be an ideal in \(R\).
    \begin{lem}
    \(R/I\) is a cyclic \(R\)-module.
    \end{lem}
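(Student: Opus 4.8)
\section*{Proof proposal}

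The plan is to exhibit a single generator for $R/I$ as an $R$-module, namely the image of $1 \in R$ under the canonical quotient map. Recall that an $R$-module $M$ is \emph{cyclic} if there exists $m \in M$ with $M = R \cdot m = \{ r m : r \in R\}$.

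First I would fix the surjective ring homomorphism $\pi : R \to R/I$, $\pi(r) = r + I =: \bar r$, and observe that $R/I$ carries its $R$-module structure via $r \cdot \bar s := \overline{rs}$ (well-defined precisely because $I$ is an ideal, so $rI \subseteq I$). Set $m := \bar 1 = 1 + I$. The key step is then the chain of equalities
\[
R \cdot \bar 1 = \{\, r \cdot \bar 1 : r \in R \,\} = \{\, \bar r : r \in R \,\} = \pi(R) = R/I,
\]
where the last equality is the surjectivity of $\pi$. Hence $R/I$ is generated by the single element $\bar 1$, so it is cyclic.

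There is essentially no obstacle here; the only point requiring a word of care is that the $R$-action on $R/I$ is well-defined, which is exactly where the hypothesis that $I$ is an ideal (rather than merely a subgroup or subring) is used. One could equivalently phrase the argument via the first isomorphism theorem: $R/I \cong R/\operatorname{Ann}_R(\bar 1)$ as $R$-modules through the map $r \mapsto r\cdot \bar1$, which again realises $R/I$ as a quotient of the free rank-one module $R$, hence cyclic. I would present the direct computation above as it is the shortest and most self-contained.
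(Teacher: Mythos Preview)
Your proof is correct and directly addresses the statement: you exhibit $\bar 1 = 1+I$ as a generator of $R/I$ via the surjection $\pi\colon R \to R/I$, which is the cleanest way to establish cyclicity.

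The paper's proof is organised differently. Inside the proof environment it first records the $R$-module structure on $R/I$ (as you do), but then it proves the \emph{converse}: given any cyclic $R$-module $M = Rm$, it builds the surjection $R \to M$, $r \mapsto rm$, and identifies $M \cong R/J$ where $J$ is the kernel. The direct assertion that $1+I$ generates $R/I$ appears only in the discussion following the subsequent corollary, not in the lemma's proof itself. So your argument matches the paper's post-corollary remark rather than the paper's formal proof, and is in fact better aligned with the lemma as stated. The paper's route buys the additional information that every cyclic module arises this way, which it then packages as a corollary; your route is shorter and proves exactly what is claimed.
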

    \begin{proof}
    First of all we show \(R/I\) is an \(R\)-module. Scalar multiplication as an \(R\)-module is given by
    \[ r \cdot ( m + I) = r m + I. \]
    Now we show any cyclic \(R\)-module is expressible as \(R/J\). Suppose \(M\) is cyclic. Then there exists an \(m\) such that \( M  = \{ r m , r \in R \}\). Consider the map \(R \rightarrow M\), \( \phi(r) = r m\). The kernel of this map is an ideal \(J\) in \(R\), so \( \phi(r + J) = \phi(r) = r m\). This defines a module homomorphism and a bijection \(R/J \cong M\).
    \end{proof}
    
    \begin{corollary}
    An \(R\)-module \(M\) is cyclic if there exists an ideal \(I\) such that \(M \simeq R/I\).
    \end{corollary}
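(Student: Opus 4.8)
The plan is to deduce the corollary directly from the preceding lemma, which already establishes both halves of the equivalence between cyclicity and being a quotient $R/I$. The only point that needs a line of justification is that cyclicity is an isomorphism invariant, so that the hypothesis $M \simeq R/I$ can be transported to a generator of $M$.

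First I would record the isomorphism-invariance of cyclicity: if $\varphi : N \xrightarrow{\sim} M$ is an $R$-module isomorphism and $N = R n$ for some $n \in N$, then for any $m \in M$ we have $m = \varphi(\varphi^{-1}(m))$ and $\varphi^{-1}(m) = r n$ for some $r \in R$, hence $m = \varphi(r n) = r\,\varphi(n)$; thus $M = R\,\varphi(n)$ is cyclic with generator $\varphi(n)$. Next I would invoke the previous lemma: it shows that $R/I$ is a cyclic $R$-module for every ideal $I$, the distinguished generator being $1 + I$, since $r \cdot (1 + I) = r + I$ ranges over all of $R/I$. Combining these two observations, if there exists an ideal $I$ with $M \simeq R/I$, then $M$ is the isomorphic image of a cyclic module and is therefore cyclic, generated by the image of $1 + I$. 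For the reverse implication — which is what makes this an honest "if and only if" and hence a genuine characterisation — I would simply point back to the construction in the lemma: given a cyclic $M = R m$, the surjection $\phi : R \to M$, $\phi(r) = r m$, has kernel an ideal $J$, and $\phi$ descends to an isomorphism $R/J \cong M$.

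I do not expect any real obstacle here: the statement is a bookkeeping corollary of the lemma, and the entire content is the remark that "cyclic" is preserved under isomorphism together with the already-proven fact that quotients $R/I$ are cyclic. If anything, the only thing worth being careful about is not re-proving the lemma — the corollary should be phrased so that it clearly just repackages the lemma into the "$M$ cyclic $\iff$ $M \simeq R/I$" form that will be used later when passing to cyclic sheaves and cyclic DQ-modules.
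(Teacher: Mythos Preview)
Your proposal is correct and matches the paper's approach: both arguments identify $1+I$ as a generator of $R/I$ (since any $m+I$ equals $m\cdot(1+I)$) and transport this generator to $M$ via the given isomorphism. You are simply more explicit than the paper about why cyclicity is an isomorphism invariant, and you also spell out the converse direction by pointing back to the lemma's surjection $R\to M$; the paper's brief justification after the corollary does only the forward direction.
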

    This says \(1 + I\) generates \(M\). Converserly, given \( m + I \in R/I\), necessarily \( m + I = m \cdot 1 + m \cdot I = m ( 1 +  I)\). So \(1 + I\) is a generator, \(M = \langle 1 + I \rangle \). Hence \(M\) is cyclic.
    
    In the non-commutative case we can generalise this result for left and right \(R\)-modules.
    
    Now recall some properties of sheaves. Let \( \mathcal{F}\) be a sheaf over a space \(X\). 
    
    \begin{defn}[Stalk of a sheaf] The \emph{stalk} at a point \(x\in X\) is the limit
    \[ \mathcal{F}_x = \colim_{U \rightarrow X | x\in U} F(U).\]
    \end{defn}
    
    The \emph{support of a sheaf} \( \mathcal{F}\) over a topological \(X\),  is similar to the module case, but prime ideals are replaced by points in \(X\) with non zero stalks:
    
    \begin{defn}[Support of a sheaf] 
    The support of a sheaf is the collection of points in \(X\) such that the stalks are non zero, \( \{x \in X, \mathcal{F}_x \neq 0_C \} \).
    \end{defn}
    
    Let \( (X, \mathcal{O}_X)\) be a scheme, and let \(\mathcal{F}\) be a sheaf of \( \mathcal{O}_X\)-modules. Recall a sheaf of \( \mathcal{O}_X\)-modules \(\mathcal{F}\) is defined by functorial compatibility with \( \mathcal{O}_X\) maps:
    \begin{center}
        \begin{tikzcd}
            \mathcal{F}(U) \arrow[r] \arrow[d] & \mathcal{F}(V) \arrow[d] \\
            \mathcal{O}(U) \arrow[r] & \mathcal{O}(V)
        \end{tikzcd}
    \end{center}

    For a sheaf of \( \mathcal{O}_X\) modules to be cyclic, there has to be a single element which generates all modules for all open sets, namely a global section \(H^0(X,\mathcal{F})\).
    
    \begin{defn}[Cyclic sheaf] 
    \label{defn:cyclicsheaf}
    A \emph{cyclic sheaf} of modules \(\mathcal{F}\), is a sheaf where every module
    \(\mathcal{F}(U)\) is cyclic as an \( \mathcal{O}_X\)-module. Further every module is generated by a single element of \(H^0(X,\mathcal{F})\).
    \end{defn}
    
    
    
    \begin{rem} A cyclic sheaf is different to a \emph{sheaf of cyclic modules}. For example
    \( \mathcal{O}_{\mathbb{P}^1}(n)\), is a sheaf of cyclic modules, but is not a cyclic sheaf, since it is not generated by a single section. 
    \end{rem} 
        
    \begin{ex}
    \( \mathcal{O}_X\) is a cyclic sheaf over itself.
    Proof: the constant unit function.
    \end{ex}

    An important example of cyclic arises from a subvariety or a morphism of schemes. Let \(f : Y \rightarrow X\). Consider the pushforward sheaf \( f_{*} \mathcal{O}_Y\). As a \( \mathcal{O}_X\)-module, \( f_{*} \mathcal{O}_Y\) is cyclic if there exists an extension of elements as follows:

    \begin{prop}
    \(f_{*} \mathcal{O}_Y\) is cyclic if given any \(g \in f_{*} \mathcal{O}_Y(U) \), 
    there exists \(h \in \mathcal{O}_X(U)\) such that \(h \cdot 1 = g\).
    \end{prop}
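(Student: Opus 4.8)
The plan is to produce an explicit global generator — the constant section $1$ — and then observe that the stated hypothesis is exactly the assertion that its restrictions generate every local section module, so the proposition is essentially a reformulation of Definition~\ref{defn:cyclicsheaf}.

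First I would recall the $\mathcal{O}_X$-module structure on $f_*\mathcal{O}_Y$: for an open $U \subseteq X$ one has $(f_*\mathcal{O}_Y)(U) = \mathcal{O}_Y(f^{-1}(U))$, with $\mathcal{O}_X(U)$ acting by $h \cdot g = f^{\#}_U(h)\,g$, where $f^{\#}_U : \mathcal{O}_X(U) \to \mathcal{O}_Y(f^{-1}(U))$ is the comorphism (pullback of functions) and the product on the right is taken in the ring $\mathcal{O}_Y(f^{-1}(U))$. Taking $U = X$ and using $f^{-1}(X) = Y$, the unit $1 \in \mathcal{O}_Y(Y)$ is a genuine element of $H^0(X, f_*\mathcal{O}_Y)$, and its restriction to any $U$ is the unit $1_U \in \mathcal{O}_Y(f^{-1}(U))$.

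Next I would compute $h \cdot 1_U = f^{\#}_U(h)\, 1_U = f^{\#}_U(h)$, so the $\mathcal{O}_X(U)$-submodule of $(f_*\mathcal{O}_Y)(U)$ generated by $1_U$ is precisely the image of $f^{\#}_U$. The hypothesis — that every $g \in (f_*\mathcal{O}_Y)(U)$ is of the form $h \cdot 1$ — says exactly that this image is all of $(f_*\mathcal{O}_Y)(U)$, hence $(f_*\mathcal{O}_Y)(U) = \mathcal{O}_X(U) \cdot 1_U \cong \mathcal{O}_X(U)/\ker f^{\#}_U$, which is cyclic by the characterisation of cyclic modules as those of the form $R/I$. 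Since $1_U$ is the restriction of the single global section $1$, and the restriction maps of $f_*\mathcal{O}_Y$ are $\mathcal{O}_X$-module homomorphisms sending $1_U$ to $1_V$ for $V \subseteq U$, all section modules are generated by one and the same global section, which is the content of Definition~\ref{defn:cyclicsheaf}.

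The argument is almost entirely definitional, so I do not expect a serious obstacle; the one point requiring care is to confirm that the relation ``$h \cdot 1 = g$'' in the hypothesis is read with respect to the correct module structure — namely the one induced through the comorphism $f^{\#}$ — and that the constant function $1$ really does lift to a global section of $f_*\mathcal{O}_Y$, which holds because $f^{-1}(X) = Y$.
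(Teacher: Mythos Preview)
Your proposal is correct and matches the paper's implicit reasoning: the proposition is essentially a restatement of Definition~\ref{defn:cyclicsheaf} with the global section $1 \in H^0(X, f_*\mathcal{O}_Y)$ as the candidate generator, and you have unpacked this carefully. The paper does not give a formal proof of the implication at all --- it treats the proposition as self-evident and immediately passes to verifying that the \emph{hypothesis} holds in the affine algebraic case (choosing $h$ as any lift of $g$ under $\mathcal{O}(X) \to \mathcal{O}(X)/I$) --- so your write-up is, if anything, more detailed than what the paper provides.
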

    So for example, in the affine algebraic case, when a regular function \(g\) on \(U \cap f(Y)\) extends to a regular function on \(U\), then \(f_{*} \mathcal{O}_Y\) is cyclic.
    
    \begin{note} Note that often the \(f_{*}\) is omitted if \(Y\) is an actual subvariety, \(Y \rightarrow X\), so if \(f\) is simply the inclusion map, and similarly \( \mathcal{O}_X \rightarrow \mathcal{O}_Y\) is restriction.
    \end{note}
    
    The proposition is true in the affine algebraic case. Consider a subvariety \( Y \rightarrow X\), with  \(\mathcal{O}(Y) = \mathcal{O}(X)/I\) as a \( \mathcal{O}(X)\) module. Pick \(h\) such that \(h \rightarrow g\) in the quotient map \( \mathcal{O}(X) \rightarrow \mathcal{O}(X)/I\).
    

    In the Airy structure setting, this cyclic module encodes a \emph{wavefunction}, which is an object that is used to recover topological recursion. There is a cyclic sheaf, corresponding to a subvariety, that is quantised giving a cyclic \( \mathcal{D}\)-module. The generator of this module is termed a wavefunction.
    
    In general there are going to be extra requirements when a sheaf of modules is quantised (replaced with some non commutative objects).

    \subsection{Cyclic DQ-modules}

    Let \( (X,\mathcal{O}_X)\) be a Poisson scheme (possibly formal), and consider a flat deformation quantisation of \( \mathcal{O}_X\) defined by a sheaf \( \mathcal{A}_\hslash\). What is termed a \emph{cyclic \( DQ\)-module} in \cite{ks_airy}, can be understood from a sheaf theoretic perspective. In particular a non-commutative version of a \hyperref[defn:cyclicsheaf]{cyclic sheaf}. Instead of a cyclic sheaf of \( \mathcal{O}_X\)-modules, there is a cyclic sheaf of \( \mathcal{A}_{\hslash}\)-modules.

    In general finding deformation quantisation modules is a hard task. We construct a particular example relevant for the application to Airy structures.
    
    While quantisation is not an exact functor from the category of sheaves on \(X\) to some category of non-commutative sheaves on \(X\), there are cases where there exists cyclic \( \mathcal{A}_\hslash\)-modules, or a cyclic \(DQ\)-module, that corresponds to cyclic sheaves on \(X\). 
    
    These cyclic DQ-modules naturally arise from the quantisation of subvariety or scheme. Consider a sub-scheme \((Y,\mathcal{O}_Y)\), in \((X, \mathcal{O}_X)\) defined by the sequence:
    \[ 0\rightarrow \mathcal{J} \rightarrow \mathcal{O}_X \rightarrow \iota_*\mathcal{O}_Y \rightarrow 0.\]
    Note \( \mathcal{O}_Y\) is a cyclic sheaf, \( \iota \) is the natural inclusion map. Let \( \mathcal{A}_{\hslash} = (\mathcal{O}_X \lBrack \hslash \rBrack, \star) \) be a deformation quantisation of \( \mathcal{O}_X\), flat over \( \mathbf{k}\lBrack \hslash \rBrack\), with a star product (\ref{defn:star_prod_pois}). Let the ideal sheaf \( \mathcal{J}\) locally be given by a minimal generating set \( \{ H_i\} \), so \( \mathcal{J}(U) = \langle H_i \rangle \). In general, a generator of an ideal \(I\) in a ring \(R\), is a subset \(S\) which generates \(I\).
    
    Furthermore, let \(Y\) be coisiotropic, so \( \mathcal{J}\) is closed under the Poisson bracket. Locally this gives each \( \mathcal{J}(U)\) the structure of a Poisson ideal \cite{jordan}. Define the ideal sheaf \( \mathcal{J}_{\hslash}\) of \( \mathcal{A}_{\hslash}\)-modules:
    \[ \mathcal{J}_{\hslash}(U) = \sum_i \mathcal{A}_{\hslash}(U) \star  (H_i + O(\hslash)), \]
    where \( O(\hslash) \) is some element in \( \hslash \, \mathbf{k} \lBrack \hslash \rBrack\).
    Naturally \( [  \mathcal{J}_{\hslash} ,  \mathcal{J}_{\hslash}] \subset  \hslash \mathcal{J}_{\hslash}\).
    Note that the ideal \( \mathcal{J}_{\hslash}(U)\) uses star multiplications of the \(H_i\). \(J_{\hslash}= \mathcal{J}_{\hslash}(U)\) is an (right) ideal in the non-commutative ring \( A_{\hslash} = \mathcal{A}_{\hslash}(U)\). This means for \( j \in J_{\hslash} \) and \( a \in A_{\hslash}\) we require \( a \star j \in J_{\hslash}\), not some classical product \(a  j\). 
    
    Then consider the sheaf \( \mathcal{E}_{\hslash}\) of \( \mathcal{A}_{\hslash}\) modules  given by 
    \[ 0 \rightarrow \mathcal{J}_\hslash  \rightarrow \mathcal{A}_\hslash  \rightarrow \mathcal{E}_\hslash \rightarrow 0.\]
    \begin{prop} 
    If \(\mathcal{E}_\hslash\) is flat over \( \mathbf{k} \lBrack \hslash \rBrack\), then \( \mathcal{E}_\hslash\) is a deformation quantisation of \(\mathcal{O}_Y\).
    \end{prop}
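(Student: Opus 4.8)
The plan is to check directly that $\mathcal{E}_\hslash$ meets the definition of a formal deformation of the sheaf $\iota_*\mathcal{O}_Y$: it must be a $\mathbf{k}\lBrack \hslash \rBrack$-flat sheaf of $\mathbf{k}\lBrack \hslash \rBrack$-modules (indeed of $\mathcal{A}_\hslash$-modules) sitting in an exact sequence $0 \to \hslash\mathcal{E}_\hslash \to \mathcal{E}_\hslash \to \iota_*\mathcal{O}_Y \to 0$. Flatness is exactly the hypothesis, and the $\mathcal{A}_\hslash$-module structure, together with cyclicity (generation by the image of $1$), is immediate since $\mathcal{E}_\hslash = \mathcal{A}_\hslash/\mathcal{J}_\hslash$ is the quotient of $\mathcal{A}_\hslash$ by the right ideal sheaf $\mathcal{J}_\hslash$. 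So the whole content is the identification $\mathcal{E}_\hslash/\hslash\mathcal{E}_\hslash \cong \iota_*\mathcal{O}_Y$ together with the matching of module structures mod $\hslash$, and this is what I would establish, producing a cyclic $DQ$-module.

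Everything is built locally on opens $U$ and all the claims are local, so I would fix $U$ and abbreviate $A_\hslash = \mathcal{A}_\hslash(U)$, $J_\hslash = \mathcal{J}_\hslash(U)$, $E_\hslash = \mathcal{E}_\hslash(U)$, $A = \mathcal{O}_X(U)$, $J = \mathcal{J}(U)$. First I apply the snake lemma to the ladder whose two rows are the defining sequence $0 \to J_\hslash \to A_\hslash \to E_\hslash \to 0$ and whose three vertical maps are multiplication by $\hslash$. The module $A_\hslash$ is $\mathbf{k}\lBrack \hslash \rBrack$-flat, hence $\hslash$-torsion free, and $E_\hslash$ is $\mathbf{k}\lBrack \hslash \rBrack$-flat by hypothesis, hence also $\hslash$-torsion free, using the fact proved above that a flat module is torsion free; so two of the three connecting-kernels in the snake sequence vanish, leaving a short exact sequence
\[ 0 \to J_\hslash/\hslash J_\hslash \to A_\hslash/\hslash A_\hslash \to E_\hslash/\hslash E_\hslash \to 0. \]
Equivalently, one may run the $\mathrm{Tor}$ long exact sequence for $- \otimes_{\mathbf{k}\lBrack \hslash \rBrack} \mathbf{k}$ and invoke $\mathrm{Tor}_1^{\mathbf{k}\lBrack \hslash \rBrack}(E_\hslash,\mathbf{k}) = 0$.

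Next I would identify the terms. By the defining exact sequence (\ref{eqn:def_cons}), $A_\hslash/\hslash A_\hslash \cong A$, and under this isomorphism the $\star$-product of $\mathcal{A}_\hslash$ reduces to the commutative product on $\mathcal{O}_X$. Since $J_\hslash = \sum_i A_\hslash \star (H_i + O(\hslash))$, its image in $A_\hslash/\hslash A_\hslash \cong A$ is $\sum_i A\cdot H_i = J = \mathcal{J}(U)$; the exact sequence just obtained identifies that image with $J_\hslash/\hslash J_\hslash$, i.e. shows $J_\hslash \cap \hslash A_\hslash = \hslash J_\hslash$. Hence $E_\hslash/\hslash E_\hslash \cong A/J = \mathcal{O}_Y(U)$, naturally in $U$, giving a sheaf isomorphism $\mathcal{E}_\hslash/\hslash\mathcal{E}_\hslash \cong \iota_*\mathcal{O}_Y$ which is $\mathcal{O}_X$-linear and carries the reduction of the $\mathcal{A}_\hslash$-action to the $\mathcal{O}_X$-action on $\mathcal{O}_Y$. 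Combined with the flatness hypothesis, this exhibits $\mathcal{E}_\hslash$ as a deformation quantisation of $\mathcal{O}_Y$.

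The step I expect to be the main obstacle is the reduction of $\mathcal{J}_\hslash$: ruling out extra $\hslash$-torsion, i.e. showing $J_\hslash \cap \hslash A_\hslash = \hslash J_\hslash$ rather than something strictly larger, so that the image of $\mathcal{J}_\hslash$ mod $\hslash$ is exactly $\mathcal{J}$ and not a bigger ideal. This is precisely where the flatness of $\mathcal{E}_\hslash$ is indispensable — via the torsion-free fact together with theorem (\ref{thm:torsionfree}) it is tied to the coisotropy of $Y$ that we assumed, and without it the reduction of $\mathcal{J}_\hslash$ could fail and $\mathcal{E}_\hslash$ need not deform $\mathcal{O}_Y$. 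A secondary, routine point is that the snake-lemma (or $\mathrm{Tor}$) argument sheafifies correctly, which holds because $\hslash$-torsion freeness and the short exact sequences can be verified stalkwise.
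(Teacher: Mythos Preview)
Your argument is correct and, in fact, supplies what the paper omits: the paper does not give a proof of this proposition at all. Immediately after stating it, the paper only says ``We construct several examples of this in the following chapters. Kontsevich and Soibelman show this works for quadratic Lagrangians \ldots'', and then moves on to two corollaries. The nearest thing to a proof in the paper is the two-line argument for the first corollary (that $\mathcal{E}_\hslash$ is supported on $Y$ modulo $\hslash$), which simply asserts that $J_\hslash$ restricts to $J=\sum A\cdot H_i$ and hence one obtains $0\to J\to A\to A/J\to 0$; this is precisely the identification you carry out, but without justifying the crucial point that the map $J_\hslash/\hslash J_\hslash \to A$ is injective.

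Your snake-lemma (equivalently $\mathrm{Tor}$) argument is the clean way to fill that gap: flatness of $E_\hslash$ gives $\hslash$-torsion freeness, which forces injectivity of $J_\hslash/\hslash J_\hslash \hookrightarrow A_\hslash/\hslash A_\hslash$, i.e.\ $J_\hslash\cap \hslash A_\hslash=\hslash J_\hslash$, and then the image computation $J_\hslash \bmod \hslash = \sum A\cdot H_i = J$ finishes it. So your proposal is not merely a different route; it is a complete proof where the paper leaves the statement unproved, and the one substantive step you flag (ruling out extra $\hslash$-torsion in $J_\hslash$) is exactly the point the paper glosses over.
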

    
    We construct several examples of this in the following chapters.  Kontsevich and Soibelman \cite{ks_airy} show this works for \emph{quadratic Lagrangians}, where the ideal is given by a collection of quadratic polynomials satisfying certain constraints, and showing the quotient is free over \( \mathbf{k} \lBrack \hslash \rBrack\).
    

    \begin{corollary}
    The sheaf \( \mathcal{E}_{\hslash}\) of \( \mathcal{A}_\hslash\)-modules is supported on \( Y\) modulo \(\hslash\). 
    \end{corollary}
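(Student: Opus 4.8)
The plan is to reduce the defining sequence of $\mathcal{E}_{\hslash}$ modulo $\hslash$, using the flatness hypothesis carried over from the preceding proposition, and then to read off the support from the resulting sheaf of $\mathcal{O}_X$-modules, which will turn out to be $\iota_*\mathcal{O}_Y$.

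First I would apply $-\otimes_{\mathbf{k}\lBrack\hslash\rBrack}\mathbf{k}$ (equivalently, quotient by $\hslash$) to the short exact sequence
\[ 0 \rightarrow \mathcal{J}_{\hslash} \rightarrow \mathcal{A}_{\hslash} \rightarrow \mathcal{E}_{\hslash} \rightarrow 0 . \]
Since $\mathcal{E}_{\hslash}$ is flat over $\mathbf{k}\lBrack\hslash\rBrack$, the group $\mathrm{Tor}_1^{\mathbf{k}\lBrack\hslash\rBrack}(\mathcal{E}_{\hslash},\mathbf{k})$ vanishes, so the reduced sequence
\[ 0 \rightarrow \mathcal{J}_{\hslash}/\hslash\mathcal{J}_{\hslash} \rightarrow \mathcal{A}_{\hslash}/\hslash\mathcal{A}_{\hslash} \rightarrow \mathcal{E}_{\hslash}/\hslash\mathcal{E}_{\hslash} \rightarrow 0 \]
is still exact. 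By the deformation constraint $\mathcal{A}_{\hslash}/\hslash\mathcal{A}_{\hslash}\cong\mathcal{O}_X$, and I would then check that the image of $\mathcal{J}_{\hslash}/\hslash\mathcal{J}_{\hslash}$ in $\mathcal{O}_X$ is the classical ideal sheaf $\mathcal{J}$: locally $\mathcal{J}_{\hslash}(U)=\sum_i\mathcal{A}_{\hslash}(U)\star(H_i+O(\hslash))$, and modulo $\hslash$ the star product degenerates to the pointwise product while $H_i+O(\hslash)\mapsto H_i$, so the reduction is $\sum_i\mathcal{O}_X(U)\cdot H_i=\mathcal{J}(U)$. Hence $\mathcal{E}_{\hslash}/\hslash\mathcal{E}_{\hslash}\cong\mathcal{O}_X/\mathcal{J}\cong\iota_*\mathcal{O}_Y$, which is of course just a restatement of the preceding proposition.

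It then remains to compute the support. For $x\in X$ the stalk of $\iota_*\mathcal{O}_Y$ at $x$ is $\mathcal{O}_{Y,x}\neq 0$ when $x\in Y$ and is $0$ otherwise, so $\mathrm{Supp}(\mathcal{E}_{\hslash}/\hslash\mathcal{E}_{\hslash})=Y$, which is the claim. One can also note that $\mathrm{Supp}(\mathcal{E}_{\hslash}/\hslash\mathcal{E}_{\hslash})\subseteq\mathrm{Supp}(\mathcal{E}_{\hslash})$ always, with equality whenever the stalks of $\mathcal{E}_{\hslash}$ are finitely generated over the local ring $\mathbf{k}\lBrack\hslash\rBrack$, by Nakayama, so $\mathcal{E}_{\hslash}$ itself is supported on $Y$ in that case. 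There is no serious obstacle here; the one point demanding care is the bookkeeping around $\mathcal{J}_{\hslash}$ being generated with the star product rather than the commutative product, so that its reduction mod $\hslash$ must be confirmed to be exactly $\mathcal{J}$ and not a proper subsheaf — this rests on minimality of the generating set $\{H_i\}$ and on $\star\equiv\mathrm{prod}\pmod{\hslash}$. Everything else follows formally from right-exactness of the tensor product together with the $\mathrm{Tor}_1$-vanishing supplied by flatness.
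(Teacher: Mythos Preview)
Your proof is correct and follows essentially the same route as the paper: work locally, observe that reducing $\mathcal{J}_{\hslash}(U)=\sum_i\mathcal{A}_{\hslash}(U)\star H_i$ modulo $\hslash$ yields the classical ideal $\sum_i\mathcal{O}_X(U)\cdot H_i=\mathcal{J}(U)$, and hence the quotient becomes $\mathcal{O}_X/\mathcal{J}\cong\iota_*\mathcal{O}_Y$. The paper's argument is a terse two-line affine computation and omits the homological bookkeeping you supply (the $\mathrm{Tor}_1$-vanishing from flatness to keep the reduced sequence exact, and the explicit stalk computation for the support); your added care is justified but not a different strategy.
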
 
    We prove this in the affine case, so \(A_{\hslash}=  \mathcal{A}_\hslash(U)\), \( J_{\hslash} = \sum_i A_{\hslash} \star H_i \), \(A =  \mathcal{O}_X(U) \). 
    \begin{proof}
    The ideal \( J_{\hslash}\) restricts to an ideal \( J = \sum A \cdot H_i \cong \langle J_i \rangle \), so we obtain the sequence \( 0 \rightarrow J \rightarrow A \rightarrow A/J \rightarrow 0 \).
    \end{proof}
    
    \begin{corollary}
    The sheaf \( \mathcal{E}_{\hslash}\) of \( \mathcal{A}_\hslash\)-modules is cyclic.
    \end{corollary}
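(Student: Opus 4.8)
The plan is to reduce the statement to the ring-theoretic fact that a quotient of a ring by a one-sided ideal is cyclic, and then to check that the resulting local generators glue to a single global section. By the two preceding corollaries (and the defining sequence $0 \to \mathcal{J}_\hslash \to \mathcal{A}_\hslash \to \mathcal{E}_\hslash \to 0$), for each open $U$ one has $\mathcal{E}_\hslash(U) \cong \mathcal{A}_\hslash(U)/\mathcal{J}_\hslash(U)$, where $\mathcal{J}_\hslash(U) = \sum_i \mathcal{A}_\hslash(U) \starp (H_i + O(\hslash))$ is the one-sided ideal generated by the deformed relations. So the local picture is exactly that of the non-commutative analogue of the lemma ``$R/I$ is cyclic'' already recorded in the excerpt.

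First I would invoke that non-commutative analogue: for any ring $R$ and any one-sided ideal $I$, the quotient $R/I$ is generated as an $R$-module by $\bar 1 = 1 + I$, since any class $\bar m$ equals $\bar 1$ acted on by $m$ (on the appropriate side, matching the convention $\mathcal{J}_\hslash(U) = \sum_i \mathcal{A}_\hslash(U)\starp H_i$). Applying this with $R = \mathcal{A}_\hslash(U)$ and $I = \mathcal{J}_\hslash(U)$ shows each $\mathcal{E}_\hslash(U)$ is cyclic over $\mathcal{A}_\hslash(U)$ with generator $1 + \mathcal{J}_\hslash(U)$. Second, I would upgrade ``cyclic on each $U$'' to ``cyclic sheaf'' in the sense of Definition \ref{defn:cyclicsheaf} (transported to $\mathcal{A}_\hslash$-modules): the star product is unital with unit $1 \in \mathbf{k} \subset \mathcal{O}_X\lBrack \hslash \rBrack$, so $1$ is a global section of $\mathcal{A}_\hslash$, i.e. an element of $H^0(X,\mathcal{A}_\hslash)$, and its image $\bar 1 \in H^0(X,\mathcal{E}_\hslash)$ restricts on each $U$ to the local generator $1 + \mathcal{J}_\hslash(U)$. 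Hence every $\mathcal{E}_\hslash(U)$ is generated by the single global section $\bar 1$, which is precisely cyclicity of the sheaf.

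The only real point to be careful about — and the main (mild) obstacle — is the compatibility of the construction with restriction: one must check that the restriction maps $\mathcal{A}_\hslash(U) \to \mathcal{A}_\hslash(V)$ carry $\mathcal{J}_\hslash(U)$ into $\mathcal{J}_\hslash(V)$, so that $\bar 1$ is a well-defined, restriction-compatible global section of $\mathcal{E}_\hslash$. This follows because $\mathcal{J}$ is a sheaf of ideals with local minimal generators $\{H_i\}$ and the deformed generators $H_i + O(\hslash)$ inherit the same locality; no homological or analytic input is needed beyond what the earlier corollaries already supply, so the argument is essentially a restatement of ``$\mathcal{A}_\hslash(U)/\mathcal{J}_\hslash(U)$ is a cyclic module'' together with the observation that $1$ is global.
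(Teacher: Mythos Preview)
Your argument is correct and is essentially the same approach the paper takes: the paper's proof is the single sentence ``This simply follows by the definition of \(\mathcal{E}_\hslash\),'' i.e.\ the quotient \(\mathcal{A}_\hslash/\mathcal{J}_\hslash\) is cyclic generated by \(1+\mathcal{J}_\hslash\). You have simply spelled out the details the paper leaves implicit, in particular the check that the local generators \(1+\mathcal{J}_\hslash(U)\) come from a single global section as required by Definition~\ref{defn:cyclicsheaf}.
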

    This simply follows by the definition of \( \mathcal{E}_{\hslash}\).
    
    \begin{ex}[Non example]
    Consider \(J = \langle x,y \rangle \) in \( A=\mathbf{k}[x,y]\), with Poisson bracket \( \{ y,x\}=1\), \( \{ x,x\}=\{y,y\}=0\). Let the deformation quantisation of \(A\) be given by \(A_{\hslash} = (\mathbf{k}[x,y]\lBrack\hslash\rBrack,\star)\), and consider the sum of ideals  \( J_{\hslash}  = A_{\hslash} \star  x + A_{\hslash} \star y \), where \( \star \) is the Moyal product. Now we have
    \[ 0 \rightarrow J_{\hslash} \rightarrow A_{\hslash} \rightarrow \mathbf{k} \rightarrow 0, \]
    but \( \mathbf{k}\) is not flat over \( \mathbf{k} \lBrack \hslash \rBrack\), so this is not a deformation quantisation.
    Also note \( \{y,x\} = 1\), and \(1 \notin J\).
    \end{ex}
    
    

    \section{Wavefunctions}
    \label{sec:wavefunctions} 
    As seen in the preceding sections, deformation quantisation produces a non-commutative algebra of operators. It is not necessarily yet a quantisation as used in physics. The question to ask is, what do these operators act on. In physics this would be analogous to a Hilbert space of functions. In this more algebraic context, it is a dual \( \mathcal{A}_{\hslash}\)-module. These dual modules define the objects known as \emph{wavefunctions}, where the word has the same meaning in quantum curves and topological recursion.  
    
    First, we recall some results from commutative algebra again as inspiration.

    \subsection{A review of annihilators of modules}
    
    Recall some results from commutative algebra:

    \begin{lem} Let \(R\) be a commutative ring, \(I\) an ideal in \(R\). Then for the \(R\)-module \(R/I\), \( \mathrm{Hom}(R/I,R) \cong  \mathrm{Ann}_R(I)\).
    \end{lem}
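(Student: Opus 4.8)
The plan is to exhibit the isomorphism explicitly via evaluation at the canonical generator \(1 + I\) of the cyclic module \(R/I\). Concretely, I would define a map
\[
\Phi : \mathrm{Hom}_R(R/I, R) \longrightarrow R, \qquad \Phi(f) = f(1 + I),
\]
and argue first that its image lies in \(\mathrm{Ann}_R(I)\), then that it is injective with image exactly \(\mathrm{Ann}_R(I)\), and finally that it is \(R\)-linear. Since \(R/I\) is cyclic, generated by \(1 + I\), a homomorphism \(f\) is determined by \(f(1+I)\), which gives injectivity of \(\Phi\) immediately.

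The first key step is the containment \(\mathrm{im}\,\Phi \subseteq \mathrm{Ann}_R(I)\): given \(f \in \mathrm{Hom}_R(R/I, R)\) and \(x \in I\), one has \(x \cdot f(1 + I) = f(x \cdot (1 + I)) = f(x + I) = f(0) = 0\), so \(f(1+I)\) annihilates \(I\). The second key step is surjectivity onto the annihilator: for \(a \in \mathrm{Ann}_R(I)\) define \(f_a : R/I \to R\) by \(f_a(r + I) = ra\); this is well defined precisely because \(a I = 0\), so if \(r + I = r' + I\) then \(r - r' \in I\) and \((r - r')a = 0\). One checks \(f_a\) is \(R\)-linear and \(\Phi(f_a) = a\). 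It is then routine that \(f \mapsto \Phi(f)\) and \(a \mapsto f_a\) are mutually inverse, and that both respect the \(R\)-module structures (the \(R\)-action on \(\mathrm{Hom}_R(R/I,R)\) being \((r \cdot f)(m) = r \, f(m)\), which matches multiplication by \(r\) on \(\mathrm{Ann}_R(I) \subseteq R\)).

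There is no genuine obstacle here; the only point requiring any care — and the one I would state explicitly — is the well-definedness of \(f_a\), which is exactly where the annihilator hypothesis is used, mirroring the well-definedness check already made for \(R/I\) as an \(R\)-module earlier in the paper. I would present the argument as a short direct verification rather than invoking any machinery, and note in passing that the same computation works verbatim for left modules over a non-commutative \(R\) if one is careful about sides, which is the form actually needed in the sequel.
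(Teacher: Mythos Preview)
Your proof is correct. It differs from the paper's in presentation rather than in substance: you construct the isomorphism by hand via evaluation at the generator \(1+I\), checking well-definedness and bijectivity explicitly, whereas the paper applies the left-exact functor \(\Hom_R(-,R)\) to the short exact sequence \(0 \to I \to R \to R/I \to 0\), identifies \(\Hom(R/I,R)\) as the kernel of \(\Hom(R,R) \to \Hom(I,R)\), and then uses \(\Hom(R,R) \cong R\) (evaluation at \(1\)) to read off the annihilator. Both routes rest on the same evaluation map; the paper absorbs the well-definedness check into the exactness statement, while you unpack it by hand. Your version has the minor advantage of exhibiting the inverse \(a \mapsto f_a\) explicitly and, as you note, of adapting transparently to one-sided modules over a non-commutative ring --- which is the form actually used downstream.
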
 
    
    \begin{proof}
       Consider the short exact sequence
       \[ 0 \rightarrow I \rightarrow R \rightarrow R/I \rightarrow 0. \]
       Apply the left exact Hom functor \(\Hom_R(\sbt,R)\) gives the left exact sequence:
       \[ 0 \rightarrow \Hom(R/I,R) \rightarrow \mathrm{Hom}(R,R) \rightarrow \Hom(I,R). \]
       So \( \Hom(R/I,R)\) is the kernel of the map \( \Hom(R,R) \rightarrow \Hom(I,R)\). Now \( \Hom(R,R) \cong R\) by the isomorphism given by evaluation at \(1\), \( f \in \Hom(R,R) \rightarrow f(1) \in R\). Similarly for \(\Hom(I,R)\). So explicitly as \(\Hom(R/I,R)\) is the kernel of multiplication of elements of \(I\) by elements of \(R\), \( \Hom(R/I,R) = \{ r \in R | \forall i \in I, \,r \cdot i = 0 \} = \mathrm{Ann}_R(I)\).
    \end{proof}
    
    The same proof follows more generally for any \(R\)-module \(N\) by applying the functor \( \Hom(\sbt, N)\):
    \begin{corollary}
       \(\Hom(R/I,N) \cong \mathrm{Ann}_{N}(I)\).
    \end{corollary}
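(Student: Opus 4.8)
The plan is to mimic the proof of the preceding lemma essentially verbatim, with the target module $R$ replaced by the arbitrary $R$-module $N$. First I would write down the short exact sequence of $R$-modules
\[ 0 \rightarrow I \rightarrow R \rightarrow R/I \rightarrow 0, \]
where the first map is inclusion, and apply the left-exact contravariant functor $\Hom_R(\sbt, N)$. This yields the left-exact sequence
\[ 0 \rightarrow \Hom_R(R/I, N) \rightarrow \Hom_R(R, N) \rightarrow \Hom_R(I, N), \]
so that $\Hom_R(R/I, N)$ is identified with the kernel of the restriction map $\rho \colon \Hom_R(R, N) \rightarrow \Hom_R(I, N)$.

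Next I would invoke the evaluation-at-$1$ isomorphism of $R$-modules $\Hom_R(R, N) \cong N$, $\phi \mapsto \phi(1)$, whose inverse sends $n \in N$ to the homomorphism $r \mapsto r n$. Under this identification, a homomorphism $\phi$ lies in $\ker \rho$ exactly when $\phi(i) = 0$ for every $i \in I$; writing $n = \phi(1)$ and using $R$-linearity, $\phi(i) = i\, \phi(1) = i n$, so the condition becomes $i n = 0$ for all $i \in I$, that is, $n \in \mathrm{Ann}_N(I)$. Hence evaluation at $1$ restricts to an isomorphism $\Hom_R(R/I, N) \cong \mathrm{Ann}_N(I)$, and this restriction is still $R$-linear since evaluation at $1$ is.

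I do not expect any real obstacle: the argument is purely formal. The one point deserving a moment's attention is that, unlike in the lemma, we cannot identify $\Hom_R(I, N)$ with anything simple when $I$ is not principal; but this is irrelevant, since only the kernel of $\rho$ enters the computation. I would close by noting that setting $N = R$ recovers the lemma, and that the identical manipulation goes through in the non-commutative setting once one distinguishes left from right modules, which is the form ultimately needed for the discussion of wavefunctions and $DQ$-modules.
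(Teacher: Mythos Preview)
Your proposal is correct and follows exactly the approach the paper intends: the paper's own justification is the single sentence ``The same proof follows more generally for any $R$-module $N$ by applying the functor $\Hom(\sbt, N)$,'' and you have spelled out precisely that argument. Your observation that $\Hom_R(I,N)$ need not simplify (in contrast to the paper's slightly loose ``Similarly for $\Hom(I,R)$'' in the preceding lemma) is a welcome clarification and does not affect the argument, since only the kernel of the restriction map is needed.
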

       
    \begin{lem} There is an isomorphism between the cyclic \(R\)-module \(M\) generated by \(x\), \(M = \langle x \rangle \), and the quotient 
    \( R / \mathrm{Ann}_R(x)\), where \( \mathrm{Ann}_R(x)\) is the ideal formed by all elements in \(R\) that annihilate \(x\):
    \[M =R/\mathrm{Ann}_R(x).\]
    \end{lem}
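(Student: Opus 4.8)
The plan is to exhibit an explicit surjective $R$-module homomorphism $R \to M$ whose kernel is exactly $\mathrm{Ann}_R(x)$, and then apply the first isomorphism theorem. Since $M = \langle x \rangle$ is cyclic with generator $x$, every element of $M$ has the form $r x$ for some $r \in R$, so the map $\phi : R \to M$ defined by $\phi(r) = r x$ is well-defined and surjective by construction. It is $R$-linear because scalar multiplication in $M$ is associative and distributive over addition in $R$.

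\medskip

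Next I would identify the kernel. By definition $\phi(r) = 0$ if and only if $r x = 0$ in $M$, which is precisely the condition that $r \in \mathrm{Ann}_R(x)$. Hence $\ker \phi = \mathrm{Ann}_R(x)$, which is an ideal of $R$ (a routine check: it is closed under addition and under multiplication by arbitrary elements of $R$, using commutativity or, in the one-sided setting, working with the appropriate left or right module structure). The first isomorphism theorem for modules then yields $R/\mathrm{Ann}_R(x) \cong \mathrm{im}(\phi) = M$, as desired.

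\medskip

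There is essentially no serious obstacle here; the argument is the module-theoretic analogue of the orbit–stabiliser statement and mirrors the earlier lemma in the excerpt identifying cyclic $R$-modules with quotients $R/J$. The only point requiring a word of care is that $\mathrm{Ann}_R(x)$ genuinely is a two-sided (or, in the noncommutative case, appropriately one-sided) ideal so that the quotient $R/\mathrm{Ann}_R(x)$ inherits a module structure compatible with $\phi$; in the commutative setting treated here this is immediate. One could also remark that this isomorphism is natural in the pair $(R, x)$, which is the form in which it will be used later when $R$ is replaced by the noncommutative algebra $\mathcal{A}_\hslash$ and $x$ by a generator of a cyclic $DQ$-module.
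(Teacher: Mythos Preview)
Your argument is correct and is exactly the approach the paper takes: the paper does not give a separate proof of this lemma, but the proof of the immediately preceding lemma already constructs the surjection $\phi:R\to M$, $\phi(r)=rm$, notes its kernel is an ideal $J$, and deduces $R/J\cong M$; your proof simply makes explicit that this $J$ is $\mathrm{Ann}_R(x)$ and invokes the first isomorphism theorem.
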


    A version of these results generalise to the non-commutative case of rings of differential operators. However the handedness of the modules and annihilators will be important.
    
    \begin{defn}[Left and right annihilators]
    Let \( {\lAnn}_R(x) \) denote \emph{left annihilators} of \(x\):
    \[ {\lAnn}_R(x) = \{ l \in R\, | \, l\, x = 0\}. \]
    Similarly let \({\rAnn}_R(x)\) denote \emph{right annihilators} of \(x\):
    \[ {\rAnn}_R(x) = \{r \in R \,| \,x \,r = 0 \}. \]
    \end{defn}     
    

    Let \(R\) be a non-commutative ring or algebra, for example the ring of differential operators.  A \emph{wavefunction} \( \psi\) is a generator of a cyclic right \(R\)-module, \(M  \cong \langle \psi \rangle \cong R/{\lAnn}_R(\psi) \). Also \({\lAnn}_R(\psi) \cong D\), so \( D \psi = 0\).
    
    \begin{ex}\label{ex:sols}
    Let \( \mathcal{O}(X)\) be a ring of smooth functions, \( \mathcal{D}(X)\) the ring of differential operators on \(X\). Consider a differential operator \( D : \mathcal{D}(X)\). Then 
    \[ \mathrm{Hom}_{\mathcal{O}(X)}\left(\frac{\mathcal{D}(X)}{  \mathcal{D}(X)  D } , \mathcal{O}(X) \right)\]  
    represents solutions to \(D\) in \(X\).
    
    \(\mathcal{D}(X)  D \) is a right submodule of differential operators with a factor of \(D\). The module quotient can be represented by \(M=\mathcal{D}(X)/D\).  
    
    Then \( \mathrm{Hom}_{\mathcal{O}(X)}\left(M, \mathcal{O}(X) \right) \cong {\rAnn}_{\mathcal{O}(X)}(D) \cong \langle \psi \rangle\), which are the solutions to the equation 
    \( D \psi = 0\). 
    \end{ex}
    \subsection{Wavefunctions from deformation quantisation}

    Consider a Poisson scheme \((X,\mathcal{O}_X)\), and a deformation quantisation of \( \mathcal{O}_X\) given by \( \mathcal{A}_{\hslash} = (\mathcal{O}_X\lBrack \hslash \rBrack,\star )\), where \( \star\) is a star product. In particular, for the examples where we construct deformation quantisation modules \( \star\) will be a Moyal product per example (\ref{ex:moyal_prod}). 
    
    Consider a coisotropic subscheme, \(Y\) in \(X\), with the map \( \iota : Y \rightarrow X\) giving a sequence:
    \[ 0 \rightarrow \mathcal{J} \rightarrow \mathcal{O}_X \rightarrow \iota_{*}\mathcal{O}_Y \rightarrow 0. \]
    Locally the ideal sheaf is given by a minimal collection of generaors \(\{H_i\}\), \( \mathcal{J}(U) = \langle H_i \rangle \). In finite dimensions if \(X\) is \(2n\) dimensional there will be \(n\) generators or fewer. In the infinite dimensional case, where \(Y\) is a quadratic Lagrangian, for Airy structures there will be a special choice of \(H_i\).
    
    Consider a sheaf \( \mathcal{E}_{\hslash}\) of \( \mathcal{A}_{\hslash}\)-modules corresponding to the deformation quantisation of the coisotropic subscheme \( \iota_{*} \mathcal{O}_Y\), so
    \[ 0 \rightarrow \hslash \mathcal{E}_{\hslash} \rightarrow \mathcal{E}_{\hslash} \rightarrow \iota_{*} \mathcal{O}_Y \rightarrow 0.\]
    Further suppose \( \mathcal{E}_{\hslash}\) is defined by a quotient:
    \begin{equation}
        \label{eqn:exactseqJAE}
        0 \rightarrow \mathcal{J}_{\hslash}   \rightarrow \mathcal{A}_\hslash  \rightarrow \mathcal{E}_\hslash \rightarrow 0,
    \end{equation} 
    where locally \( \mathcal{J}_{\hslash}(U) = \sum_i \mathcal{A}_{\hslash }(U) \star  H_i  \) is given by the generators from \( \mathcal{J}\).
    Similarly, \( \mathcal{J}_{\hslash}/\hslash \mathcal{J}_{\hslash} = \mathcal{J} \). Consider the dual \( \mathcal{A}_{\hslash}\) module:
    \[ \mathcal{E}_{\hslash}^{\vee} = \mathrm{Hom}_{\mathcal{A}_{\hslash}}(\mathcal{E}_{\hslash},\mathcal{O}_X \lBrack \hslash \rBrack ).\]
    Necessarily there may be conditions on the existence of such an object, for example in  \cite[section 2.4, page 15]{abpolyquant}. If this module exists, it represents annihilators of the generators of \( \mathcal{J}_{\hslash}\). For example let \(H \in \mathcal{J}_\hslash\), then  \( w \in \mathcal{E}^{\vee}_{\hslash}\) are sections such that 
    \begin{equation}
        \label{eqn:annih}
        H \star w = 0.
    \end{equation}
    Consider an isomorphism  \( \varphi : \mathcal{A}_{\hslash} \rightarrow \mathcal{W}_{\hslash}\) where \( \mathcal{W}_{\hslash}\) is a Weyl-algebra. \( \varphi\) sends the non-commutative star product of functions to the non-commutative composition of operators.
    \begin{ex} 
    For example let \(A=\mathbf{k}[x_1,\dots,x_n,y_1,\dots y_n]\), be equipped with the Poisson bracket:  \[\{y_i,x_j\} = \delta_{ij}, \quad  \{x_i,x_j \} = \{ y_i,y_j\} = 0,\] and consider the deformation quantisation \( \mathcal{A}_{\hslash} = (\mathbf{k}[x_1,\dots,x_n,y_1,\dots y_n]\lBrack\hslash \rBrack, \star )\).
    Then there is an ismorphism, \( \varphi\), between \( \mathcal{A}_{\hslash}\) and the Weyl-algebra \( \mathcal{W}_{\hslash} = \mathbf{k} [x_1, \dots, x_n , \hslash \partial_1 , \dots \hslash \partial_n ] \lBrack \hslash \rBrack\). The map \( \varphi\) is defined on the generators by:
    \[ \varphi(x_i \star x_j) = x_i \cdot  x_j, \quad \varphi( y_i \star y_j) = \partial_i \cdot \partial_j, \quad \varphi(x_i \star y_j ) = x_i \cdot \hslash \partial_j. \]
    \end{ex}
    
    Under the isomorphism \( \varphi\) to the Weyl algebra, \( \varphi\) sends the star product equation (\ref{eqn:annih}), to differential equation
    \[ \varphi(H \star w ) = \varphi(H) \cdot \varphi(w) = 0,\]
    where \( \varphi(H)\) is an operator. 
    
    Furthermore \( \varphi\) defines isomorphisms \(  \mathcal{J}_{\hslash}\cong \varphi(\mathcal{J}_{\hslash})\) and \( \mathcal{E}_{\hslash} \cong \varphi ( \mathcal{E}_{\hslash})\), which commute at each step of the sequence in equation (\ref{eqn:exactseqJAE}), so 
    \[ 0 \rightarrow \varphi(\mathcal{J}_{\hslash}) \rightarrow \mathcal{W}_{\hslash} \rightarrow \varphi(\mathcal{E}_{\hslash}) \rightarrow 0, \]
    is also exact, via diagram chasing. 
    
    In previous work, for example in \cite{norbury_quant, ks_airy}, these modules are examined at a point, or at a formal completion, from the Weyl-algebra perspective. This is primarily to solve the differential equation with the WKB method at that point.
    Globally \( \mathcal{E}_{\hslash}^{\vee}\) does not have to be cyclic, or generated by a single section. In the example (\ref{ex:the_conic}), \( \mathcal{E}_{\hslash}^*\) is defined as solutions to the Airy equation, which is dimension two. A general question in algebra is how to reconstruct more global information from infinitesimal local information, for example the Artin reconstruction theorem.
    
    Let \(p\) be a point in \(X\), with a corresponding ideal sheaf \( \mathfrak{m}\). In example (\ref{ex:the_conic}), this will be the point \(0\). Denote \( \widehat{\mathcal{A}}_{\hslash}\) as completion along \( \mathfrak{m}\). Similarly we consider a completed Weyl-algebra \( \widehat{\mathcal{W}}_{\hslash}\). 
    
    \begin{defn}[Wavefunction]
    A \emph{wavefunction} \(w\) is an element of the module 
    \[ w \in \mathrm{Hom}_{\mathcal{A}_{\hslash}} (\widehat{\mathcal{E}}_{\hslash},  \widehat{\mathcal{O}}_{X}\lBrack \hslash \rBrack ),\] 
    or alternatively in the Weyl-algebra perspective \( \psi\), 
    \[ \psi \in  \mathrm{Hom}_{\mathcal{W}_{\hslash}} (\varphi(\widehat{\mathcal{E}}_{\hslash}),  \widehat{\mathcal{W}}_{\hslash} ).\]
    \end{defn}

    Of course the global objects could also be called a wavefunction.  This definition is to be consistent with others in the literature, for example the wavefunctions in \cite[page 52]{ks_airy}, which are represented as a formal power series found by the WKB method at a point in \(X\). Alternatively called \( \mathcal{D}\)-modules in \cite[page 148]{holland}.
    The formal power series also often contains interesting enumerative data, for example in the case of the conic, example (\ref{ex:the_conic}), the wavefunction is the generating function for rooted planar graphs in various genus. The important thing we emphasise is that these module in these examples all arise from deformation quantisation, and are not constructed in an ad-hoc fashion.
    
    There is the possibility that there are other deformation quantisation modules which restrict to \(\iota_{*} \mathcal{O}_Y\) under the quotient by \(\hslash\). We are not ruling out any possibilities, only constructing modules in specific cases.
    Kontsevich and Soibelman find a particular generator in the case where \(Y\) is a quadratic Lagrangian \cite{ks_airy}. 
    

    The deformation quantisation perspective highlights that \emph{wavefunctions} as understood in topological recursion and Airy structures are a module homomorphism. 
    We want to emphasise the importance of the deformation quantisation perspective, as these modules may be more interesting than just a particular choice of generator.
    
    The other primary use of the deformation quantisation perspective is to understand the origin of the intersection formula for wavefunctions presented in \cite[section (7.2)]{ks_airy}. The formula presented there gives a wavefunction on the deformation space of curves. The concept of wavefunction reduction can be understood as an intersection of sheaves which we explore in section (\ref{section:wavefunction_reduction}).

     \subsection{Deformation quantisation of the conic}
    \label{sec:def_of_conic}
    
    We give an explicit construction of a deformation quantisation module, and an associated wavefunction in the case of a conic in the plane.
    
    Let \( W = \mathbb{A}^2\), \(\mathcal{O}(W) = \mathbf{k}[x,y]\), with Poisson bracket 
    \[ \{y,x\}=  (dx\otimes dy) ( \Pi) = \pi_{ij} \partial_i x \partial_j y = 1, \quad \{ x,x\} = \{y,y\}=0 \]
    and for clarity \(i,j \in \{x,y\}\), \( \pi_{xy}=-1,\pi_{yx}= 1,\pi_{xx} = \pi_{yy}=0\), or alternatively as a matrix: 
    \[ \pi = \left( \begin{array}{cc}
         0 &-1  \\
         1 & 0 
    \end{array}\right).\]
       \begin{ex}
       \label{ex:the_conic}
        Consider the conic \( \cL \subset W\), defined by the quotient \(\mathcal{O}(\mathbb{L})= \mathbf{k}[x,y]/\langle H = -y + x^2 + 2 x y + y^2\rangle \).
    \end{ex}
    Wavefunctions associated to \( \mathbb{L}\) are defined on a formal neighbourhood of the origin \((x,y)=(0,0)\).
    
    Corresponding to \(\cL \) is a formal scheme \( \widehat{\cL \hspace{0pt} } \), (respectively \(W\)), which is given by completion of \(\cL \) by a maximal ideal \( \mathfrak{m}=\langle x, y\rangle\). This is given by taking the colimit of the quotient: 
    \[ \widehat{\cL} = \colim_n  \mathrm{Spec} \left( \mathcal{O}(\cL )/\mathfrak{m}^n \right). \]
    \( \widehat{\cL}\) is representable by a formal series \(u_0(x)\), so \( y(x) =u_0(x)\), 
    \[ \widehat{\cL} = \Spf \left(\mathbf{k}\lBrack x,y\rBrack /\langle y - u_{0}(x \rangle \right)\] 
    Where  \( u_0(x) \in \mathbf{k}\lBrack x \rBrack\) is found by taking \( -y + x^2 + 2 xy + y^2 = 0 \) and solving for \(y(x)\). \(u_0(x)\) has a closed form, understood as a formal power series, given by: 
    \[ u_0(x) = \frac{1}{2}\left( 1 - \sqrt{1-4x} - 2 \, x \right),\]
    in a formal neighbourhood of \(y=x=0\), 
    \begin{rem}
    The positive square root would correspond to completing around \(y=1\).
    \end{rem}
    Expanding, \( u_0(x)\) is the generating function for Catalan numbers:
    \[ u_0(x) = x^2 + 2 x^3 + 5 x^4 + \cdots + \frac{(2n)!}{(n+1)!\,n!} x^{n+1}  + \cdots,\]
    which counts rooted binary trees.
    
    \(\mathbb{L}\) is locally representable in a formal neighbourhood of the origin as a primitive \( S_0\), so
    \[ y(x) dx = u_0(x) dx = d S_0(x).\] 
    Consider the deformation quantisation of \( \mathcal{O}(W)\) given by \( \mathcal{A}_{\hslash} = (\mathbf{k}[x,y]\lBrack \hslash\rBrack,\star)\), where \( \star\) is the Moyal product per example (\ref{ex:moyal_prod}). Correspondingly, after completion along \( \langle x , y \rangle\) there is a deformation quantisation of the formal scheme associated to \(\widehat{W}= \mathrm{Spf}( \mathbf{k}\lBrack x,y \rBrack) \), which gives the formal non-commutative algebra 
    \begin{equation}
    \label{eqn:completed_conic}
    \widehat{\mathcal{A}}_{\hslash} = ( \mathbf{k} \lBrack x,y \rBrack \lBrack \hslash \rBrack , \star).
    \end{equation}
    Duals to these formal completed algebras are used to investigate the wavefunctions.
    
    \subsubsection{The deformation quantisation of the conic, example (\ref{ex:the_conic}), on formal neighbourhoods}
    The corresponding deformation quantisation \(\widehat{\mathcal{A}}_{\hslash} = ( \mathbf{k} \lBrack  x,y \rBrack \lBrack \hslash \rBrack , \star) \) of example (\ref{ex:the_conic}), (on the formal completion) is representable by the Weyl-algebra \( \widehat{\mathcal{W}}_{\hslash}=(\mathbf{k} \lBrack x, \hslash \partial_x \rBrack \lBrack \hslash \rBrack , \cdot) \), where \( \cdot\) is ordinary composition, as follows. First, checking some terms of the star product:
    \[ x \star x = x^2 , \quad  x \star y = x y - \frac{1}{2} \hslash , \quad y \star y = y^2, \quad y \star x = x y + \frac{1}{2}\hslash, \]
    where it is important to note a term like \(xy\) is another function in
    \( \mathbf{k}[ x,y ]\lBrack\hslash \rBrack\). Checking the anti-commutator:
    \[ \frac{1}{\hslash}\left( \hat{y} \star \hat{x} - \hat{x} \star \hat{y} \right) = 1 =  \{y,x\} , \]
    which is the Poisson bracket as expected. Now there is a map between \( \mathcal{A}_{\hslash}\) and the Weyl-algebra \(\mathcal{W}_{\hslash}\). The map \( \varphi  :  \mathcal{A}_{\hslash}   \rightarrow \mathcal{W}_{\hslash}\),  is required to have the property:
    \[ \varphi( f \star g ) = \varphi(f) \cdot \varphi(g).\] 
    On monomials this is given by:
    \begin{align*}
        \varphi(x) &= x, \\ 
        \varphi(y) &= \hslash \partial_x. 
    \end{align*}
    This is the source of the quantisation rule \( x \rightarrow x\), \( y \rightarrow \hslash \partial_x\).
    
    The map \(\varphi\) is an isomorphism. It sends star products of functions to products of operators.  For example \( x \cdot ( \hslash \partial_x ) = \varphi( x \star y)\). An operator like \( \varphi(xy)\) is defined to be \( \varphi(x \star y) + \frac{\hslash}{2}\), or \( \varphi(y \star x ) - \frac{\hslash}{2}\).  The star product was defined to account for the potential ambiguity. The action of both ways of writing the operator \( \varphi(xy) \) on functions are equivalent. 
    
    This map is analogous to the Wigner-Weyl transform. The star-product is analogous to constructing the phase-space picture of quantum mechanics, while the Weyl-algebra is more analogous to operators in the Schrodinger picture from quantum mechanics \cite{baker}.
    
    Now we  consider a module supported on the deformation quantisation of  \( \mathcal{O}(\mathbb{L})\) in \( \mathcal{O}(W)\). The purpose is to study \(\mathcal{A}_{\hslash}\)-modules, which under the quotient map by \(\hslash^2\), give the \(\mathcal{O}(W)\)-module \(E =\mathcal{O}(W)/ \mathcal{O}(W) H  \). Consider elements of \( \mathbf{k}[x,y]\lBrack\hslash\rBrack\), of the form \(H - \hslash J\). 
    Then the modules of interest are
    \begin{equation}  
    E_{\hslash} = \frac{\mathcal{A}_{\hslash}}{  \mathcal{A}_{\hslash} \star (H - \hslash J)  } ,
    \end{equation}
    for different \(J \in \mathbf{k}[x,y]\lBrack\hslash\rBrack\).  Note that this is a quotient defined via a new ideal \( \mathcal{A}_{\hslash} \star (H - \hslash J) \), constructed from the generator \(H\). It is not a quotient by the classical ideal \( \langle H \rangle\). This module corresponds to operators acting on functions on \( \mathbb{L}\). Also we consider the dual as a \( \mathcal{A}_\hslash\)-module:
    \begin{equation} 
    E_{\hslash}^{\vee} = \mathrm{Hom}_{ \mathcal{A}_{\hslash} }\left(E_{\hslash}, \mathcal{A}_{\hslash} \right),
    \end{equation}
    which represents solutions to the equation \( (H - \hslash J) \star w =0\), for \( w \in  E_{\hslash}^{\vee}\). Correspondingly we can investigate the completions \( \widehat{E}_{\hslash}\) and its dual, but with corresponding Weyl-algebras. 
    
    The purpose of the formal completion is a calculation tool to be able to write down a formal series representation for \(w\). The formal series for \(w\) can contain interesting enumerative data. 
    
    \begin{rem}
    \label{rem:constraints}
    There are possible constraints on the \(J\). Consider a higher dimensional example, where  \( \mathcal{A}_{\hslash}= \mathbf{k}[x_{\sbt},y_{\sbt}] \lBrack \hslash \rBrack\) with the star-bracket \( [f,g] = f \star g - g \star f \). Let \( \mathcal{J} = \{H_i(x_{\sbt},y_{\sbt})\} \) be an ideal closed under the Poisson bracket. Consider the deformation \(\mathcal{J}_{\hslash} = \{ H_i + \hslash J_i \}\). To satisfy theorem (\ref{thm:torsionfree}), \(\mathcal{J}_{\hslash}\) is also required to be closed, \( [\mathcal{J}_{\hslash}, \mathcal{J}_{\hslash} ] \subseteq \mathcal{J}_{\hslash}\). In the case where \(H_i\) are degree two in \(x_{\sbt}\) and \( y_{\sbt}\), one possibility is  find \( J_i\) as a function of only \(\hslash\), for example the quantisation of an Airy structure in \cite{ks_airy}.
    \end{rem}
    
    \begin{rem}
    Deformation quantisation does not uniquely determine \( \mathcal{A}_{\hslash}\)-modules. For a given \( \mathcal{O}_W\)-module \(M\), there are multiple \( \mathcal{A}_{\hslash}\)-modules \(M_{\hslash}\) such that \(M_{\hslash}/\hslash M_{\hslash} = M\).
    \end{rem}
    This two dimensional case is less restricted than the example in remark \ref{rem:constraints}. However we consider when \(J\) is just a function in \(\hslash\), so elements of the module represent solutions to the equation
    \[ H \star w_{\mathbb{L}}(x) = \hslash j(\hslash) w_{\mathbb{L}}(x), \]
    where \( j(\hslash) \) is the function \( j(\hslash) = j_1 \hslash  + \mathcal{O}(\hslash^2)\). In the classical limit, \(H - \hslash J\) restricts to \( H\), and the equation becomes \( H w =0\), which is describing an element of the module 
    \[ w \in \mathrm{Hom}_{\mathcal{O}(W)}(\mathcal{O}(W)/  \mathcal{O}(W) H , \mathcal{O}(W) ) = \mathrm{Ann}_{\mathcal{O}(W)}(  H) =  0,\]
    which is just zero, as zero is the only annihilator of \(H\).

    Using the isomorphism to the Weyl-algebra, this becomes an equation involving operators:
    \begin{equation} \varphi(H) \cdot \varphi(w_{\mathbb{L}}) = \varphi(H) \cdot \psi_{\mathbb{L}} = \hslash j \psi_{\mathbb{L}},
    \end{equation}
    where 
    \[ \varphi(H)  = - \hslash \partial_x +  x^2 + 2  \hslash\, x \, \partial_x + \hslash^2  \partial^2_x + \hslash. \]
    In the Weyl-algebra representation, the module \(\widehat{E}_{\hslash}\) is a quotient of the Weyl-algebra \( \widehat{\mathcal{W}}_{\hslash} =  (\mathbf{k} \lBrack x, \hslash \, \partial_x \cdot \rBrack \lBrack \hslash \rBrack ,\cdot) \). In particular the elements of \( \widehat{E}_{\hslash}^\vee\), \( \psi_{\mathbb{L}}\) are called   \emph{wavefunctions}, \cite[page 13]{ks_airy}. In this case, \( \widehat{E}_{\hslash}^\vee\) (for a fixed \(J\)) is one dimensional, so \( \psi_{\mathbb{L}}\) is a generator of a cyclic module:
    \[ \langle \psi_{\cL} \rangle = \Hom_{\mathcal{W}_{\hslash}}\left(  \frac{\widehat{\mathcal{W}}_{\hslash}}{\widehat{\mathcal{W}}_{\hslash} \langle - \hslash \partial_x +  x^2 + 2  \hslash\, x \, \partial_x + \hslash^2  \partial^2_x + \hslash - \hslash j  \rangle }, \widehat{\mathcal{W}}_{\hslash} \right).\] 
    This module is supported on formal neighbourhoods, \( \widehat{\cL}-\{\hslash=0\}\), and elements can be viewed as the formal  solutions to the equation: 
    \begin{equation} 
    \label{eqn:quant_conic}
    \left( - \hslash \frac{d}{d x} +  x^2 + 2  \hslash\, x \frac{ d }{d x}+ \hslash^2  \frac{d}{d x}  + \hslash - \hslash j\right) \psi_{\mathbb{L}}(x)  = 0, 
    \end{equation}
    where \(  \frac{d}{dx} = \partial_x\), at \(x=0\). Writing this as an eigenvalue equation:
    \[ \left(- \hslash \frac{d}{d x} +  x^2 + 2  \hslash\, x \frac{ d }{d x}+ \hslash^2  \frac{d}{d x} \right) \psi_{\mathbb{L}}(x)  = \hslash ( j -1) \, \psi_{\mathbb{L}}(x). \]
    Consider when \(j = 0\), \( \psi_{\mathbb{L}}(x)= \psi_{\mathbb{L},0}(x)\). Explicitly \( \psi_{\mathbb{L},0}(x)\) is found via a formal series of the form
    \[\psi_{\cL,0}(x) = \mathrm{const} \, \exp\left(\frac{1}{\hslash} S(x)\right),\]
    where \( S(x) = \sum \hslash^g S_g(x)\), and const is a constant. 
    \begin{rem} Note, in other work such as \cite{ks_airy}, the solution is usually chosen to correspond to the point \(x=y=0\), which is a choice of square root. Normally a second order differential equation such as equation (\ref{eqn:quant_conic}) has a two dimensional space of solutions. Completing at \( \langle x, y\rangle\) is analogous to picking a choice of square root. Completing \( \mathcal{W}_{\hslash}\) only along \(x\), so \( \mathbf{k}[\hslash \partial ] \lBrack x\rBrack \lBrack \hslash \rBrack\) , and then expanding as a power series in \(x\) would be analogous for looking for both solutions. In this case the other solution would correspond to the point \(x=0, y = 1\). 
    \end{rem}
    
    The \( S_g(x)\) satisfy \emph{abstract topological recursion}, \cite{ks_airy}, as the conic is a finite dimensional example of a  \emph{Airy structure}, as defined in \cite{ks_airy}. Writing out some terms, where integration means formal integration of polynomials: 
    \[ S_0(x) = \int dx\, u_0(x), \quad S_1(x) = \int dx\, u_1(x),\]
    where \[u_1(x) =\frac{2 u_0(x)  x}{1 - 4 x} = 2 x + 10 x^2 + \cdots + \left(4^n - \frac{(2n)!}{(n!)^2}\right) x^n + \cdots \] is the generating function counting genus one Feynmann diagrams or rooted maps \cite{feynmaps, walsh}.
    
    \begin{prop}
    \(u_{g}(x) \) is the generating function for numbers of genus \(g\) Feynmann diagrams.
    \end{prop}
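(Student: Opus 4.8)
The plan is to extract from equation (\ref{eqn:quant_conic}) the recursion that the coefficients \(u_g\) obey, and to match it with the combinatorial recursion for genus \(g\) Feynman diagrams recorded in \cite{feynmaps, walsh}.

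First I would substitute \(\psi_{\mathbb{L},0}(x) = \mathrm{const}\cdot\exp\bigl(\frac{1}{\hslash}\sum_{g\geq 0}\hslash^{g} S_g(x)\bigr)\) into equation (\ref{eqn:quant_conic}) with \(j=0\), divide through by \(\psi_{\mathbb{L},0}\), and write \(u_g = S_g'\). Collecting powers of \(\hslash\): at order \(\hslash^{0}\) one gets \(H(x,u_0(x)) = 0\), so \(u_0\) is precisely the branch of \(-u_0 + x^2 + 2xu_0 + u_0^2 = 0\) selected by completing at the origin, namely the Catalan series as in the paper; at order \(\hslash^{g}\) for \(g\geq 1\) one obtains a relation of the shape
\[ \bigl(\partial_y H\bigr)(x,u_0)\, u_g \;=\; -\,u_{g-1}' \;-\; \sum_{g_1 + g_2 = g,\; g_1,g_2 \geq 1} u_{g_1} u_{g_2}, \]
possibly with an extra constant at \(g=1\) coming from the symmetrised quantisation of the \(2xy\) term. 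Since \((\partial_y H)(0,u_0(0)) = -1\) is invertible in \(\mathbf{k}\lBrack x \rBrack\), this determines each \(u_g\) uniquely as a formal power series, and in particular it returns the \(u_1\) displayed above.

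On the combinatorial side I would then write down the recursion for the generating series \(F_g(x)\) counting genus \(g\) Feynman diagrams. Deleting the root edge (equivalently, the loop equation / Tutte decomposition used in \cite{feynmaps, walsh}) splits the count into the non-separating case, which produces a derivative of \(F_{g-1}\), and the separating case, which produces a quadratic convolution \(\sum_{g_1+g_2=g} F_{g_1} F_{g_2}\); after the normalisation of \(x\) for which \(F_0 = u_0\), the common prefactor is exactly \((\partial_y H)(x,u_0)\). This is the same recursion as the WKB one, so with base case \(F_0 = u_0\) (and the check \(F_1 = u_1\), visible from the explicit series \(\sum_n (4^n - \binom{2n}{n})x^n\)) an induction on \(g\) gives \(u_g = F_g\) for all \(g\).

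The main obstacle is the second step: bringing the combinatorial recursion into exactly this normalised form. One must be careful about what counts as an ``edge'' and how \(x\) is graded, about the distinguished rooted boundary and any unstable or disconnected contributions, and about the two signs in the recursion, which have to be tracked through the root-edge decomposition. An alternative to doing this by hand is to invoke the standard equivalence --- for a rational plane curve such as \(H = 0\), which under \(x = z(1-z)\), \(y = z^2\) is parametrised by a genus zero curve --- between the Kontsevich--Soibelman abstract topological recursion satisfied by the \(S_g\) (the conic being an Airy structure, \cite{ks_airy}) and the Eynard--Orantin topological recursion, whose \(\omega_{g,1}\) are known to enumerate maps; this reduces the proposition to citing that dictionary at the price of heavier machinery.
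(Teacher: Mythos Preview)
The paper does not actually prove this proposition. It is stated immediately after the explicit formulas for \(u_0\) and \(u_1\), with a citation to \cite{feynmaps, walsh} attached to the \(u_1\) case, and then the text moves on; no argument is given for general \(g\). So there is nothing in the paper to compare your proposal against.

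Your sketch is a reasonable plan and goes well beyond what the paper supplies. Extracting the WKB recursion from equation (\ref{eqn:quant_conic}) and matching it with the Tutte/root-edge recursion from \cite{feynmaps, walsh} is the natural strategy, and your identification of the delicate point --- getting the combinatorial recursion into exactly the same normalisation, with the correct handling of the rooted boundary and of the constant coming from the \(2xy\) quantisation at \(g=1\) --- is accurate. The alternative you mention, routing through the equivalence of abstract and Eynard--Orantin topological recursion for this genus-zero spectral curve, is also legitimate and is closer in spirit to how the paper situates the example (it explicitly says the \(S_g\) satisfy abstract topological recursion because the conic is an Airy structure). Either route would constitute an actual proof where the paper offers only an assertion.
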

    Note these are formal objects defined at the point \(y=x=0\), (which is a point on \( \mathbb{L}\)).
    
    \subsubsection{Solving the star-product equation directly}
    
    
    The purpose of the formal completion of \( \mathcal{A}_{\hslash}\), in \(x,y\), \( \widehat{\mathcal{A}}_{\hslash} = \mathbf{k}\lBrack x, y\rBrack \lBrack \hslash \rBrack\),  is to write down a formal power series solution to the differential equation in a series of \(x\) and \(\hslash\). Further this solution is chosen at a single point.
    
    However we can investigate the star-product equation \( H \star w =0 \), and hope to find a function for \(w\) as a series in \(\hslash\), where 
    \[ H = -y + x^2 + 2 x y + y^2.\]
    As \(H\) is degree \(2\), there are only two terms in the star-product to consider:
    \begin{align*}
        H \star w &= H w + \frac{\hslash}{2} \mathrm{prod} \cdot (\pi_{ij} \partial_i ( H ) \otimes \partial_j (w) ) + \frac{\hslash^2}{4} \mathrm{prod} \cdot ( \pi_{ij} \pi_{kl} (\partial_i \partial_k H) \otimes (\partial_{j} \partial_{l} w )), \\
        &= H w +\frac{\hslash }{2}  \left((2 x+2 y-1)  \frac{\partial w}{\partial x}-(2 x+2 y) \frac{\partial w}{\partial y} \right) +  \frac{\hslash^2 }{4} \left(2 \frac{\partial^2 w}{\partial^2 y}-4 \frac{\partial^2 w}{\partial x \partial y}+2 \frac{\partial^2 w}{\partial x^2}\right)=0.
    \end{align*}
    Instead of a second order ordinary differential equation in the completed case, equation (\ref{eqn:quant_conic}), in terms of \(x\), this is a second order PDE in terms of \(x\) and \(y\). 

    Suppose we have a solution of the form \(w(x,y)= \lambda(H(x,y))\). The entire equation reduces to
    \[ \lambda(H) H + \frac{\hslash^2}{2} \frac{\partial^2 \lambda(H)}{\partial^2 H } = 0,\]
    which is similar to the \emph{Airy equation} \cite{airy}, or the equation for a harmonic oscillator.  Writing out a series solution for \( \lambda\) in powers of \( 1/\hslash\) and \(H\), around \(H=0\), gives
    \[ \lambda(H) = \sum_m \frac{\lambda_{3m+4} }{\hslash^{3m+4}} H^{3m+4},\]
    where
    \[ \lambda_{n+2} = \frac{-1}{(n+1)(n+2)} \lambda_{n-1},\]
    and \( \lambda_0\), and \( \lambda_1 \in \mathbf{k}\). In this case the solution \(w(x,y)\) is an element of the module \( \mathbf{k}[x,y]\lBrack\hslash^{-1} \rBrack\). 
    \begin{ques} 
    Can the \(S_g(x)\) be recovered from \( \lambda\)?
    \end{ques}
     
     \section{Symplectic reduction}

    First we recall the definition  of symplectic reduction from differential geometry. Let \( (M,\omega) \) be a symplectic manifold. Let \(G\) be a Lie group acting on \(M\), with a corresponding Lie algebra \( \mathfrak{g}\). Consider the moment map \( \mu : M \rightarrow \mathfrak{g}^*\).  For any \( \xi : \mathfrak{g}\), and a point \( p : M\), define a vector field on \(M\):
    \[ X_\xi(p) := \left. \frac{d}{dt} \exp( t \xi)  \cdot p \, \right|_{t\rightarrow 0}.\] 
    The moment map satisfies the property that:
    \[ \omega \cdot X_\xi = d \langle \mu, \xi \rangle. \]
    
    \begin{defn}[Symplectic reduction]
    \label{defn:symplredmani}
    The \emph{symplectic reduction} of \(M\) by \(G\), denoted by \(M \sslash G \), is defined as the quotient 
    \[M \sslash G := \mu^{-1}(0) / G .\] 
    This quotient is a topological quotient given by identifying points in the same orbit.
    \end{defn}
    
    \begin{lem}
    The symplectic form on \(M\) defines a symplectic form on the quotient \(M \sslash G\).
    \end{lem}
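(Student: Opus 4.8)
The plan is to exhibit the reduced form explicitly and then check it is well defined, non-degenerate and closed, following the Marsden--Weinstein construction. Write $Z = \mu^{-1}(0)$, and assume (as is implicit in the statement) that $0$ is a regular value of $\mu$, so that $Z$ is a submanifold of $M$, and that $G$ acts freely and properly on $Z$, so that $M \sslash G = Z/G$ is a smooth manifold. Let $\iota : Z \hookrightarrow M$ be the inclusion and $\pi : Z \to Z/G$ the quotient projection, which is a surjective submersion. The goal is to produce the unique $2$-form $\bar{\omega}$ on $M \sslash G$ with $\pi^{*} \bar{\omega} = \iota^{*} \omega$, and to show it is symplectic.

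The crucial step is a pointwise linear-algebra computation at a point $p \in Z$. By $G$-equivariance of $\mu$ and the fact that $0$ is fixed by the coadjoint action, the orbit $G \cdot p$ lies in $Z$, hence $T_p(G\cdot p) \subseteq T_p Z$. On the other hand, the defining property $\omega \cdot X_\xi = d\langle \mu, \xi \rangle$ shows that a vector $v \in T_p M$ lies in $\ker d\mu_p$ if and only if $\omega_p(X_\xi(p), v) = 0$ for all $\xi \in \mathfrak{g}$; that is, $T_p Z = \ker d\mu_p = (T_p(G\cdot p))^{\omega}$, the $\omega$-orthogonal complement of the tangent space to the orbit. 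Consequently the restricted bilinear form $\iota^{*}\omega_p = \omega_p|_{T_p Z}$ has kernel exactly $T_p Z \cap (T_p Z)^{\omega} = T_p(G\cdot p)$, so $\omega_p$ descends to a non-degenerate bilinear form on the quotient vector space $T_p Z / T_p(G\cdot p) \cong T_{\pi(p)}(M \sslash G)$.

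Next I would check that these pointwise forms assemble into a globally well-defined smooth $2$-form $\bar{\omega}$. Because $\omega$ is $G$-invariant and the $G$-action intertwines the identifications $T_p Z / T_p(G \cdot p) \cong T_{\pi(p)}(M \sslash G)$ for different representatives $p$ of the same orbit, the value of $\bar{\omega}$ at $\pi(p)$ is independent of $p$; smoothness of $\bar{\omega}$ follows since $\pi^{*}\bar{\omega} = \iota^{*}\omega$ is smooth and $\pi$ is a submersion, so $\pi^{*}$ is injective on forms and a smooth form downstairs is detected by its pullback. Finally, closedness is immediate: $\pi^{*}(d\bar{\omega}) = d(\pi^{*}\bar{\omega}) = d(\iota^{*}\omega) = \iota^{*}(d\omega) = 0$, and injectivity of $\pi^{*}$ forces $d\bar{\omega} = 0$. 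Together with the non-degeneracy established above, this shows $\bar{\omega}$ is a symplectic form on $M \sslash G$.

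I expect the main obstacle to be the linear-algebra identity $T_p Z = (T_p(G\cdot p))^{\omega}$, which is where the moment-map property is actually used, together with the bookkeeping that the induced form on the quotient is independent of the chosen representative. It is also worth flagging that the regularity of $0$ and the freeness and properness of the action — needed to make $Z$ and $Z/G$ smooth in the first place — are suppressed in the lemma as stated and should be made explicit in the hypotheses.
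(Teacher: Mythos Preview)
Your argument is the standard Marsden--Weinstein reduction proof and is correct: the identification $T_pZ = (T_p(G\cdot p))^{\omega}$ via the moment-map condition, the resulting non-degeneracy on the quotient $T_pZ/T_p(G\cdot p)$, and the descent of well-definedness, smoothness and closedness through the submersion $\pi$ are all handled cleanly. Your remark that regularity of $0$ and freeness/properness of the action are implicit hypotheses is also well taken.

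As for comparison: the paper does not actually supply a proof of this lemma. It is stated as a recollection of the classical result from differential geometry and immediately followed by the next subsection, so there is nothing to compare your argument against. Your write-up is more than the paper offers here.
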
 
    
    \subsection{Symplectic reduction of vector spaces}
    
    The symplectic reduction of manifolds motivates reduction of symplectic vector spaces, Tate spaces and more generally affine spaces by an \emph{isotropic} subspace.
    
    Let \( (W, \omega)\) be a strong symplectic vector (or Tate) space. Let \( G\) be a linear coisotropic subspace, and consider the set \( G^\perp := \{ w  \in W | \omega(w,g)=0, \; \forall g \in G\} \). As \(G\) is coisotropic \(G^{\perp} \subset G\).
    
    The idea is to treat \(G^\perp\) as a group acting on \(W\). To do this consider a representation of \(G^\perp\) in \(\mathrm{Aff}(G)\), \(G^\perp \rightarrow \mathrm{Aff}(G)\). This subgroup acts on \(W\) by translation, \( W \times G^{\perp} \rightarrow G : (w,g^{\perp} ) \rightarrow w + g^{\perp} \). 
    
    This action preserves the symplectic form \( \omega\). Then we identify a moment map \( \mu\) via the exact sequence
    \[ 0 \rightarrow G \rightarrow W \stackrel{\mu}{\rightarrow} (G^\perp)^* \rightarrow 0,\]
    so \(G = \mu^{-1}(0)\). 
    
    \begin{defn}[Symplectic reduction of vector spaces] The \emph{symplectic reduction} of \(W\) by \(G^{\perp}\), \(W \sslash G^{\perp}\), is defined as the quotient 
    \[ W \sslash G^{\perp} := G/G^\perp = \mathcal{H}. \]
    \end{defn}
    
    \begin{lem}
        The symplectic form on \(W \sslash G^{\perp} \) is well defined.
    \end{lem}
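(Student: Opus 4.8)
The plan is to produce the reduced form explicitly and then check, in order, that it descends to cosets, that it inherits bilinearity and skew-symmetry, and that it is non-degenerate on $\mathcal{H} = G/G^{\perp}$. Concretely, I would define $\bar\omega$ on $\mathcal{H}$ by $\bar\omega([u],[v]) := \omega(u,v)$ for representatives $u,v \in G$, where $[u]$ denotes the image of $u$ in $G/G^{\perp}$.

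First I would verify independence of the choice of representatives. Given $u' = u + a$ and $v' = v + b$ with $a,b \in G^{\perp}$, expand $\omega(u+a,\,v+b) = \omega(u,v) + \omega(a,v) + \omega(u,b) + \omega(a,b)$. By the very definition of $G^{\perp}$ as the set of vectors pairing to zero against all of $G$, and using $u,v \in G$, the two cross terms $\omega(a,v)$ and $\omega(u,b)$ vanish; the last term $\omega(a,b)$ vanishes as well, since coisotropy of $G$ gives $G^{\perp}\subset G$, so $b \in G$ and again $\omega(a,b)=0$. Hence $\bar\omega$ depends only on the cosets. Bilinearity and antisymmetry then pass to the quotient immediately, since they hold for $\omega$ restricted to $G \times G$ and the quotient map is linear and surjective. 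For non-degeneracy, suppose $[u] \in \mathcal{H}$ satisfies $\bar\omega([u],[v]) = 0$ for all $[v] \in \mathcal{H}$; unwinding the definition this says $\omega(u,v) = 0$ for every $v \in G$, i.e.\ $u \in G^{\perp}$, so $[u] = 0$. Thus the radical of $\bar\omega$ is trivial and $\bar\omega$ is a genuine symplectic form.

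The main obstacle is the infinite-dimensional (Tate space) case, where ``non-degenerate'' must be strengthened to \emph{strong symplectic}, meaning that the induced map $\mathcal{H} \to \mathcal{H}^{*}$ into the continuous dual is a topological isomorphism rather than merely injective. The computation above only yields injectivity; surjectivity requires that $G$ and $G^{\perp}$ are closed subspaces with the biduality $(G^{\perp})^{\perp} = G$ (which follows from $\omega$ being strong on $W$), together with the fact that the quotient of a Tate space by a closed subspace is again a Tate space whose continuous dual can be identified with the appropriate subquotient of $W^{*}$. I would handle this by invoking the standard structure theory of Tate spaces to supply the closedness and biduality, then matching $\mathcal{H}^{*}$ with the annihilator subquotient so that the pairing induced by $\bar\omega$ becomes the required isomorphism. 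In finite dimensions all of this collapses to the elementary argument, so the lemma holds in both settings.
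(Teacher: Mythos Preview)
The paper does not actually prove this lemma; it is stated without proof (as is the preceding lemma on symplectic reduction of manifolds). Your argument is correct and is exactly the standard linear-algebra verification one would supply: define the form on cosets via representatives, use the definition of $G^{\perp}$ together with $G^{\perp}\subset G$ to kill the cross terms, and observe that the radical of the restricted form is precisely $G^{\perp}$ so that non-degeneracy survives on the quotient. Your attention to the strong-symplectic issue in the Tate setting goes beyond anything the paper addresses; the paper simply asserts the result in both the vector-space and Tate cases without comment, so you are not missing anything from the source.
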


    Let  \( \mathbb{L} \subset W\) be a Lagrangian sub vector space.
    
    \begin{lem}[Reduction of Lagrangians] 
    The Lagrangian \( \mathbb{L}\) is mapped to a Lagrangian \( \mathcal{B}  \subset  W \sslash G^{\perp} \). 
    \end{lem}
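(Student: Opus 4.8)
The plan is to produce $\mathcal{B}$ explicitly as the linear reduction of $\mathbb{L}$ along the coisotropic subspace $G$, and then verify the two properties that characterise a Lagrangian in $\mathcal{H} = W\sslash G^{\perp} = G/G^{\perp}$: isotropy and maximality. First I would set $\mathcal{B} := \pi(\mathbb{L}\cap G)$, where $\pi : G \to G/G^{\perp}$ is the quotient map, so that $\mathcal{B} \cong (\mathbb{L}\cap G)/(\mathbb{L}\cap G^{\perp})$. That this is the right notion of ``the image of $\mathbb{L}$'' is because the points of $W\sslash G^{\perp}$ are the $G^{\perp}$-orbits of points of $G = \mu^{-1}(0)$, so the subspace cut out on the reduction by $\mathbb{L}$ is exactly $(\mathbb{L}\cap \mu^{-1}(0))/G^{\perp}$. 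It remains to show $\mathcal{B} = \mathcal{B}^{\perp}$ for the reduced form $\bar\omega$ constructed in the preceding lemma.

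For isotropy: if $u,v \in \mathbb{L}\cap G$ then $\bar\omega(\pi(u),\pi(v)) = \omega(u,v) = 0$ since $\mathbb{L}$ is isotropic and $\bar\omega$ is induced by $\omega$ on lifts; hence $\mathcal{B}\subseteq \mathcal{B}^{\perp}$. The substance is maximality, $\mathcal{B}^{\perp}\subseteq\mathcal{B}$. Take $v\in G$ with $\omega(v,u)=0$ for all $u\in\mathbb{L}\cap G$, i.e. $v\in(\mathbb{L}\cap G)^{\perp}$. Using the orthogonality identity $(A\cap B)^{\perp}=A^{\perp}+B^{\perp}$ together with $\mathbb{L}^{\perp}=\mathbb{L}$, this reads $v\in \mathbb{L}+G^{\perp}$, so $v\in G\cap(\mathbb{L}+G^{\perp})$. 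Since $G$ is coisotropic we have $G^{\perp}\subseteq G$, so Dedekind's modular law gives $G\cap(\mathbb{L}+G^{\perp})=(G\cap\mathbb{L})+G^{\perp}$. Therefore $v\in(\mathbb{L}\cap G)+G^{\perp}$ and $\pi(v)\in\mathcal{B}$. Combining with isotropy, $\mathcal{B}=\mathcal{B}^{\perp}$; in finite dimensions this is equivalent to $\dim\mathcal{B}=\frac{1}{2}\dim\mathcal{H}$, so $\mathcal{B}$ is Lagrangian, and in the Tate setting it is the defining condition for a Lagrangian closed subspace.

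The main obstacle is not the algebra — the modular-law step and the perp computation are purely formal once one is in a symplectic vector space — but the functional-analytic bookkeeping needed to run the same argument for a \emph{strong} symplectic (Tate) space. There the identities $(A\cap B)^{\perp}=A^{\perp}+B^{\perp}$ and $A^{\perp\perp}=A$ require the subspaces and the relevant sums to be closed, or else one must replace sums by their closures throughout. So the real work will be to check that $\mathbb{L}\cap G$, $G^{\perp}$ and $\mathbb{L}+G^{\perp}$ are closed, that $\mathcal{B}$ is a closed subspace of $\mathcal{H}$, and that the reduced pairing $\bar\omega$ is again strong — the last of which is precisely the content of the earlier lemma on the well-definedness of the symplectic form on $W\sslash G^{\perp}$. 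Granting those, the maximality computation goes through verbatim.
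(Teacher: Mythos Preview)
Your argument is correct and is the standard symplectic-linear-algebra proof of Lagrangian reduction: define $\mathcal{B}=\pi(\mathbb{L}\cap G)$, get isotropy for free from $\mathbb{L}^{\perp}=\mathbb{L}$, and obtain maximality from $(\mathbb{L}\cap G)^{\perp}=\mathbb{L}+G^{\perp}$ together with the modular law applied using $G^{\perp}\subseteq G$. The steps are all valid in the finite-dimensional setting, and you correctly isolate where the Tate-space version needs closedness hypotheses.

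There is nothing to compare against, however: the paper does not prove this lemma. It states it and then adds only the one-sentence gloss ``So we are being asked to consider the image of $\mathbb{L}$ in the map $W\to G/G^{\perp}$.'' Your explicit identification $\mathcal{B}\cong(\mathbb{L}\cap G)/(\mathbb{L}\cap G^{\perp})$ is a sharpening of that remark, and the isotropy/maximality verification you give is entirely your own contribution beyond what the paper supplies.
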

    
    So we are being asked to consider the image of \( \mathbb{L}\) in the map \( W \rightarrow G/G^{\perp}\).
    

    \subsection{Symplectic reduction of schemes}
    We now define a sheaf theoretic version of of symplectic reduction. We call this \emph{Poisson reduction}, because it is on the level of algebras. 
    
    Let \((W,\mathcal{O}_W)\) be a Poisson scheme over \( \mathrm{Spec}(\mathbf{k})\), where \( \mathbf{k}\) is a field of characteristic zero. Let \( \cL\) be a Lagrangian in \(W\). Consider \(G\), a codimension \(g\), linear, coisotropic subscheme in \(W\).
    The quotient \(W \sslash G^{\perp} \) is described sheaf theoretically as follows:
    \begin{defn}[Poisson reduction]
    The \emph{Poisson reduction} of \( \mathcal{O}_W\) by \(G^{\perp}\), is the sheaf \( \mathcal{O}_G^{G^{\perp}}\) of \( \mathcal{O}_W\)-modules. \( \mathcal{O}_G^{G^{\perp}}\) is the sheaf of invariants defined by \(G^{\perp}\) action on \( \mathcal{O}_G\).
    \end{defn}
    
    The next question is how to obtain the image of \( \mathbb{L}\) in \( \mathcal{H} = G/G^{\perp}\). Define \( \mathcal{O}_X = \mathcal{O}_W \otimes_{\mathbf{k}} \mathcal{O}_G^{G^\perp}\).
    The image of \( \mathbb{L}\) is obtained with the following fibre product:
    
    \begin{defn}
    The reduction of \( \mathcal{O}_{\cL} \), by \( G^{\perp}\), (when it exists) is the sheaf \( \mathcal{O}_\mathcal{B}\) of \( \mathcal{O}_X\)-modules that makes the following diagram commute:
    \begin{center}
        \begin{tikzcd}
        \mathcal{O}_X \arrow[r] \arrow[d] & \mathcal{O}_G \arrow[d]\\
        \mathcal{O}_{\cL} \arrow[r] &  \mathcal{O}_{G \cap {\cL}}   \cong \mathcal{O}_{\mathcal{B}}
        \end{tikzcd}
    \end{center}
    \end{defn}
    Note that this gives a copy of \( \mathcal{O}_{\mathcal{B}}\) inside \( \mathcal{O}_X\).

    \begin{prop} \( \mathcal{O}_{\mathcal{B}}\) exists when \( \mathbb{L}\) and \(G\) intersect transversally. 
    \end{prop}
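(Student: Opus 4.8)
The plan is to reduce to a local computation and then show that transversality forces the scheme-theoretic intersection $G\cap\mathbb{L}$ to behave like the set-theoretic one, so that the defining fibre square admits a (unique) completion $\mathcal{O}_{\mathcal{B}}\cong\mathcal{O}_{G\cap\mathbb{L}}$. On an affine open $U$, writing $R=\mathcal{O}_W(U)$ and letting $I_G,\,I_{\mathbb{L}}$ be the ideals cutting out $G$ and $\mathbb{L}$, the square is locally the pushout of $R/I_G\leftarrow R\rightarrow R/I_{\mathbb{L}}$, whose apex is $R/(I_G+I_{\mathbb{L}})\cong(R/I_G)\otimes_R(R/I_{\mathbb{L}})$. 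So the existence of $\mathcal{O}_{\mathcal{B}}$ amounts to showing this pushout has the expected codimension $g+\mathrm{codim}\,\mathbb{L}$ in $W$ and carries the right module structure over $\mathcal{O}_X=\mathcal{O}_W\otimes_{\mathbf{k}}\mathcal{O}_G^{G^{\perp}}$.

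First I would invoke the standard fact that a transversal intersection is Tor-independent, so that $\mathrm{Tor}^{\mathcal{O}_W}_{i}(\mathcal{O}_G,\mathcal{O}_{\mathbb{L}})=0$ for all $i>0$. This immediately gives $\mathcal{O}_{G\cap\mathbb{L}}\cong\mathcal{O}_G\otimes_{\mathcal{O}_W}\mathcal{O}_{\mathbb{L}}$ with no derived corrections, that codimensions add (hence $\dim(G\cap\mathbb{L})=\dim\mathbb{L}-g$), and that the square is simultaneously cartesian and cocartesian as a square of $\mathcal{O}_W$-modules. At this point $\mathcal{O}_{G\cap\mathbb{L}}$ is a well-defined sheaf; the remaining work is to equip it with the structure of a sheaf of $\mathcal{O}_X$-modules, compatibly with the two maps $\mathcal{O}_X\to\mathcal{O}_G$ and $\mathcal{O}_X\to\mathcal{O}_{\mathbb{L}}$, i.e.\ to check that $G\cap\mathbb{L}$ descends to a closed subscheme of $\mathcal{H}=G/G^{\perp}$.

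For the descent I would combine the \emph{Reduction of Lagrangians} lemma with a symplectic linear-algebra computation: if $K=\mathbb{L}\cap G^{\perp}$, then $K^{\perp}\supseteq\mathbb{L}^{\perp}+(G^{\perp})^{\perp}=\mathbb{L}+G=W$ by transversality, so $K=0$. Hence the quotient map $G\cap\mathbb{L}\to\mathcal{H}$ is injective with trivial fibres, and since $\dim(G\cap\mathbb{L})=\dim\mathbb{L}-g=\dim\mathcal{B}$ it is a closed immersion onto the Lagrangian $\mathcal{B}$ furnished by that lemma. Pulling back the $\mathcal{O}_X$-action along this immersion, and using that $\mathcal{O}_X$ surjects onto both $\mathcal{O}_G$ and $\mathcal{O}_{\mathbb{L}}$, then makes the square commute with $\mathcal{O}_{\mathcal{B}}=\mathcal{O}_{G\cap\mathbb{L}}$ realised inside $\mathcal{O}_X$, exactly as in the definition of Poisson reduction.

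The main obstacle is this last step: transversality of $\mathbb{L}$ and $G$ controls only the intersection with $G$, whereas the passage to $\mathcal{H}$ requires the $G^{\perp}$-action on $G\cap\mathbb{L}$ to be free with flat (indeed trivial) quotient \emph{scheme-theoretically}, not merely on closed points. One must therefore either restrict to reduced intersections or show that $K=\mathbb{L}\cap G^{\perp}$ vanishes as a subscheme and that $G\cap\mathbb{L}\to(G\cap\mathbb{L})/G^{\perp}$ is flat; granting this, the sheaf-of-invariants construction defining Poisson reduction applies verbatim and $\mathcal{O}_{\mathcal{B}}$ exists.
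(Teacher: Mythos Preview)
The paper does not actually prove this proposition. Immediately after stating it, the text reads ``We give an example in section (\ref{sec:examplereduct4d}),'' and the only justification offered is the explicit four-dimensional linear computation there, together with the one-line remark that when $G$ and $\mathbb{L}$ are cut out by ideal sheaves, $\mathcal{O}_{\mathcal{B}}$ is defined by the sum $\mathcal{J}_G+\mathcal{J}_{\mathbb{L}}$. There is no general argument.

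Your proposal, by contrast, is an attempt at a genuine proof, and the strategy is sound: Tor-independence under transversality gives the underived tensor product $\mathcal{O}_G\otimes_{\mathcal{O}_W}\mathcal{O}_{\mathbb{L}}$ the correct dimension, and the symplectic linear-algebra step $(\mathbb{L}\cap G^{\perp})^{\perp}=\mathbb{L}+G=W$ correctly forces $\mathbb{L}\cap G^{\perp}=0$ in the linear case, which is exactly what is needed for the quotient by $G^{\perp}$ to be injective on $G\cap\mathbb{L}$. You have also correctly identified the residual issue, namely that for non-linear $\mathbb{L}$ this computation must be done in tangent spaces and one needs the scheme-theoretic (not just set-theoretic) freeness of the $G^{\perp}$-action. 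Since the paper only ever works out the case where both $G$ and $\mathbb{L}$ are linear (section~\ref{sec:examplereduct4d}), your argument already covers everything the paper actually establishes, and does so more conceptually.

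One small point: the diagram in the paper has $\mathcal{O}_X=\mathcal{O}_W\otimes_{\mathbf{k}}\mathcal{O}_G^{G^{\perp}}$ in the upper-left corner, not $\mathcal{O}_W$, so the pushout you compute from $R=\mathcal{O}_W(U)$ is not literally the square in the definition. You do address this by separately equipping $\mathcal{O}_{G\cap\mathbb{L}}$ with an $\mathcal{O}_X$-module structure, but it would be cleaner to say up front that the maps $\mathcal{O}_X\to\mathcal{O}_G$ and $\mathcal{O}_X\to\mathcal{O}_{\mathbb{L}}$ both factor through $\mathcal{O}_W$, so the two pushouts agree.
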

    We give an example in section (\ref{sec:examplereduct4d}).  Note that intersections in general may not be defined, for example in section (\ref{sec:examplereduct4d}) transversality is explicitly required.
    
    When \(G\) and \( \mathbb{L}\) are defined by ideal sheaves \( \mathcal{J}_G\) and \(\mathcal{J}_{\cL}\) then necessarily \( \mathcal{O}_B\) is defined as the sum of the ideals \( \mathcal{J}_G + \mathcal{J}_{\cL}\). However this is as a \( \mathcal{O}_X\)-module.
    
    As before, define \( \mathcal{H} = G/G^{\perp}\). We want to relate \( \mathcal{O}_\mathcal{B}\) as a \( \mathcal{O}_X\) to a \( \mathcal{O}_{\mathcal{H}}\)-module. To do this we look for a map \( \mathcal{O}_X \rightarrow \mathcal{O}_{\mathcal{H}} \). In finite dimensions, for example (\ref{sec:examplereduct4d}), this it is even possible for this map to be algebraic. However in the case examined in \cite{chaimanowong2020airy}, \( W\) is replaced by an infinite dimensional space. This map no longer is algebraic, and must necessarily be formal.



    \subsection{Symplectic reduction of the wavefunction}
    \label{section:wavefunction_reduction}
    
    Now we examine how symplectic reduction interacts with deformation quantisation.  Let \( (W,\mathcal{O}_W)\) be an affine symplectic space, \(G\) and \( \mathbb{L}\) a coisotropic and Lagrangian subscheme in \(W\) respectively. Suppose the Poisson reduction of \( \mathcal{O}_W\) by \(G^{\perp}\) exists, and denote it \( \mathcal{O}_{\mathcal{H}}\). Likewise denote the reduction of \( \mathcal{O}_{\mathbb{L}}\) by \( \mathcal{O}_{\mathcal{B}}\). Consider the extension of \(W\), denoted \(X\), so the coisotropic space \(G\) is Lagrangian in \(X\), denoted \(G_X\).
    
    
    There appears to be two interesting wavefunctions defined on \(  \mathcal{B}\). The first is a wavefunction \( \psi_{\overset{\sim}{\mathcal{B}}}\), from the deformation quantisation of  \(\mathcal{O}_{\cB}\) inside \( \mathcal{O}_\mathcal{H}\). The second is defined by Kontsevich and Soibelman in \cite[p. 53]{ks_airy}.
    
    For the first definition, this means taking \( \mathcal{O}_X\), and looking for a deformation quantisation module supported on the formal neighbourhoods of the intersection \(  G \cap \mathbb{L} \). 
    
    From the fibre product, suppose the intersection \( \mathcal{O}_{G \cap \mathbb{L} }\) is defined by an ideal sheaf \( \mathcal{J}_{G \cap \mathbb{L} } = \mathcal{J}_G + \mathcal{J}_\mathbb{L}\),
    \begin{equation}
        \label{eqn:seqforB}
        0 \rightarrow \mathcal{J}_{G \cap \mathbb{L}} \rightarrow \mathcal{O}_X \rightarrow \mathcal{O}_{G \cap \mathbb{L}} \rightarrow 0.
    \end{equation}
    Then take the equations for \(G\) and \(L\), represented by \( \mathcal{J}_{G \cap \mathbb{L}} = \langle P_i \rangle\), and pick a Weyl-algebra representation of the deformation quantisation of these equations, which gives a collection of differential operators, \( \langle \hat{P}_i - J \hslash \rangle \). \(\psi_{\overset{\sim}{\mathcal{B}}}\) is the WKB solution to the resulting differential equations:
    \[ (\hat{P}_i - \hslash J ) \psi_{\overset{\sim}{\mathcal{B}}} = 0,\]
    where \( J \in \mathbf{k} \lBrack \hslash \rBrack\).

    In more detail, the quantisation of \(\mathcal{O}_W\) be given by \( ( \mathcal{O}_W \lBrack \hslash \rBrack , \star)\), where \( \star\) is the Moyal product, per definition (\ref{defn:star_prod_pois}) . Suppose \( \mathcal{O}_X \) is quantised to give \( (\mathcal{O}_X\lBrack \hslash \rBrack , \star_{\mathrm{Ext}})\), where \( \star_{\mathrm{Ext}}\) is the Moyal star product, defined on the extension \( 
   \mathcal{O}_X\lBrack \hslash \rBrack \simeq \mathcal{O}_W \lBrack \hslash \rBrack \otimes_{\mathbf{k}\lBrack \hslash \rBrack} \mathcal{O}_{G}^{G^{\perp}} \lBrack \hslash \rBrack \).  Then pick the Weyl-algebra representation of \( (\mathcal{O}_X\lBrack \hslash \rBrack , \star_{\mathrm{Ext}}) = \mathcal{W}_X\). Finally we study the formal completion \( \widehat{\mathcal{W}}_X \) along a maximal ideal. In  example, section (\ref{sec:examplereduct4d}), this will be at the point \(0 \in X\). The wavefunction is finally a generator of the cyclic module 
   \[ \langle \psi_{\overset{\sim}{\mathcal{B}}} \rangle =  \mathrm{Hom}\left( \widehat{\mathcal{W}}_X / \widehat{\mathcal{W}}_X \mathcal{J}_{\mathcal{B}}\lBrack \hslash \rBrack , \widehat{\mathcal{W}}_X \right).\]
   
    The second, defined by Kontsevich and Soibelman, is the formal Gaussian integral: 
    \[ \psi_{\widehat{\cB}} = \frac{1}{Z_0} \int_{W} \psi_G  \,  \psi_{\mathbb{L}}. \]
    In \cite[page 53]{ks_airy}, this is understood as an integral with \(\psi_G\) as a Gaussian measure and \(Z_0\) is some normalisation or constant. \( \psi_{G}\) is defined as the wavefunction supported on the image of \(G \) in the extension \(X\) of \(W\). Individually:
    \begin{align}
    \label{eqn:psigext}
        \langle \psi_{G} \rangle &=  \mathrm{Hom}\left( \widehat{\mathcal{W}}_X / \widehat{\mathcal{W}}_X \mathcal{J}_{G_X}\lBrack \hslash \rBrack , \widehat{\mathcal{W}}_X \right)\\
        \langle \psi_{\mathbb{L}} \rangle &=  \mathrm{Hom}\left( \widehat{\mathcal{W}}_X / \widehat{\mathcal{W}}_X \mathcal{J}_{\mathbb{L}}\lBrack \hslash \rBrack , \widehat{\mathcal{W}}_X \right).
    \end{align}
    where \(\mathcal{J}_{G_X}\lBrack \hslash \rBrack\) is defined below.

    \begin{rem}
    We use \( \psi_G\) to denote the wavefunction \(  \psi_{G_\Sigma} =\exp(Q_2(q,q')) \) as written in \cite[p. 53]{ks_airy}. We show that 
    \begin{equation} 
    \label{eqn:psiextg}
    \psi_G = P \psi_G,
    \end{equation} 
    where \(P\) is a function depending on the choice of extension, and \( \psi_G\) is a function satisfying the equations for \(G\) in \(W\). The function depending on the extension, \(P\), using the Kontsevich and Soibelman notation, contributes terms of \( \mathcal{O}( q q')\). We claim \(P\) defines a Fourier or Laplace like integral kernel. While \( \psi_G\) at most contributes \(\mathcal{O}(q^2)\) terms. So while \( \psi_G\) is quadratic in terms of the coordinates on the extension, we want to emphasise the important part is the \(P\). It is necessary that \(G\) is linear and coisotropic, or defined by a collection of linear equations. 
    \end{rem}

    We are effectively trying verify two propositions:
    \begin{prop}
    Symmplectic reduction commutes with deformation quantisation.
    \end{prop}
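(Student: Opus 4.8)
The plan is to make ``commutes'' precise as the statement that the two wavefunctions on $\mathcal{B}$ constructed above, $\psi_{\widetilde{\mathcal{B}}}$ and $\psi_{\widehat{\mathcal{B}}} = \frac{1}{Z_0}\int_W \psi_G\,\psi_{\mathbb{L}}$, agree up to a scalar absorbed into $Z_0$; equivalently, that quantising the Poisson reduction of $\mathcal{O}_{\mathbb{L}}$ and reducing the deformation quantisation of $\mathcal{O}_{\mathbb{L}}$ produce the same cyclic $\widehat{\mathcal{W}}_X$-module. Since $\psi_{\widetilde{\mathcal{B}}}$ is, by construction, the WKB generator of $\mathrm{Hom}(\widehat{\mathcal{W}}_X/\widehat{\mathcal{W}}_X\mathcal{J}_{\mathcal{B}}\lBrack\hslash\rBrack,\widehat{\mathcal{W}}_X)$ with $\mathcal{J}_{\mathcal{B}} = \mathcal{J}_G+\mathcal{J}_{\mathbb{L}}$, it is enough to show (i) the formal Gaussian integral against $\psi_G$ is a well-defined $\hslash$-adic operation on the WKB solutions attached to $\mathbb{L}$, (ii) its output is annihilated by the quantised generators $\widehat{P}_i-\hslash J$ of $\mathcal{J}_{\mathcal{B}}\lBrack\hslash\rBrack$, and (iii) the dual module is one-dimensional so nothing is left beyond normalisation.

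The first step is to pin down $\psi_G$. Because $G$ is linear and coisotropic and becomes Lagrangian in the extension $X\cong W\times\mathcal{H}$, the ideal $\mathcal{J}_{G_X}$ is generated by linear forms whose quantisations in $\mathcal{W}_X$ form a maximal abelian family of inhomogeneous first-order operators; solving $\widehat{P}^{G}_j\psi_G = 0$ for such a family forces $\psi_G$ to be an exponential of a quadratic, i.e.\ a Gaussian kernel coupling the $W$- and $\mathcal{H}$-coordinates. This is exactly the Gaussian-integral presentation of the Moyal product (the lemma following example (\ref{ex:moyal_prod})) restricted to the directions normal to $G$. I would then write $\psi_G = P\cdot\psi_G^{(W)}$ as in equation (\ref{eqn:psiextg}), with $\psi_G^{(W)}$ the factor that merely enforces the $G$-equations inside $W$ and $P = \exp(\mathcal{O}(qq'))$ the extension-dependent part, and identify $P$ with the integral kernel of a Laplace/Fourier transform in the directions dual to $G^{\perp}$ --- the quantum shadow of the quotient $\mathcal{H} = G/G^{\perp}$.

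Next I would prove the intertwining. Viewing $\psi_{\widehat{\mathcal{B}}}(h) = \frac{1}{Z_0}\int_W \psi_G(w,h)\,\psi_{\mathbb{L}}(w)$ as a fibre integral along $X = W\times\mathcal{H}\to\mathcal{H}$: the operators coming from $\mathcal{J}_G$ are inherited on the image directly, since $\psi_G$ already satisfies them; the operators coming from $\mathcal{J}_{\mathbb{L}}$ are transported through the kernel by moving $\widehat{P}^{\mathbb{L}}(w,\hslash\partial_w)$ off $\psi_{\mathbb{L}}$ onto $\psi_G$ by formal integration by parts and then using the $G$-relations satisfied by $\psi_G$ to rewrite $w$-derivatives as $h$-derivatives, yielding the corresponding quantised generator on $\mathcal{H}$. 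Here \emph{transversality} of $\mathbb{L}$ and $G$ is essential: it guarantees $\mathcal{J}_G+\mathcal{J}_{\mathbb{L}}$ still cuts out a formally smooth Lagrangian $\mathcal{B}$, hence a full complement of commuting constraints survives the transform, and the $\hslash J$-corrections can be matched order by order using $[\mathcal{J}_\hslash,\mathcal{J}_\hslash]\subseteq\hslash\mathcal{J}_\hslash$ (theorem (\ref{thm:torsionfree})). Thus $\psi_{\widehat{\mathcal{B}}}$ is annihilated by exactly the $\widehat{P}_i-\hslash J$ defining $\psi_{\widetilde{\mathcal{B}}}$, and one-dimensionality of the WKB solution space at the chosen base point (as in example (\ref{ex:the_conic})) forces $\psi_{\widehat{\mathcal{B}}} = \psi_{\widetilde{\mathcal{B}}}$ after fixing $Z_0$.

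The main obstacle is making step (ii) genuinely rigorous: the Gaussian integral over $W$ is only a formal stationary-phase expansion, so one must show it is $\hslash$-adically convergent on the completed module and that the kernel $\psi_G$ really intertwines $\star$ on $\mathcal{O}_W$ with $\star_{\mathrm{Ext}}$ on $\mathcal{O}_X$ after fibre integration --- i.e.\ that the two ideal sheaves $\mathcal{J}_{\mathcal{B}}\lBrack\hslash\rBrack$ produced by ``quantise then reduce'' and ``reduce then quantise'' are isomorphic, not merely congruent mod $\hslash$, including agreement of the deformation parameters $J$. I expect the most reliable route is to verify everything by hand in the transverse finite-dimensional example of section (\ref{sec:examplereduct4d}) --- reducing the four-dimensional $W$ by a one-dimensional $G^{\perp}$, computing $\psi_G$, $\psi_{\mathbb{L}}$ and the Gaussian integral explicitly, and matching the result against the WKB solution of the reduced operators --- and then to observe that linearity of $G$ together with the exponential form of the star product makes the argument uniform in dimension.
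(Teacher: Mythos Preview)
Your proposal is more ambitious than what the paper actually does. The paper does not give a general proof of this proposition at all: immediately after stating it (and the companion proposition $\psi_{\widetilde{\mathcal{B}}}=\psi_{\widehat{\mathcal{B}}}$) the text says only ``We verify this explicitly for linear Lagrangians in section (\ref{sec:examplereduct4d}),'' and that section is precisely the four-dimensional computation you describe in your final paragraph --- write down $\psi_G$ and $\psi_{\mathbb{L}}$ as explicit Gaussians, evaluate the integral via the central identity, and compare the resulting coefficient $c_0$ with the coefficient $c_1$ obtained by quantising the sum of ideals $\mathcal{J}_G+\mathcal{J}_{\mathbb{L}}$ directly. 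So your ``most reliable route'' \emph{is} the paper's entire argument; there is no general proof in the paper to compare against.

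Your steps (i)--(iii), the integration-by-parts intertwining argument moving the $\widehat{P}^{\mathbb{L}}$ operators through the kernel $\psi_G$, are a genuine outline of a general proof that the paper does not attempt. This is a standard and correct strategy (it is essentially how one proves that Fourier transform intertwines differential operators), and the obstacles you flag --- $\hslash$-adic convergence of the stationary-phase expansion and matching the $\hslash J$ correction terms on both sides --- are the real ones. But be aware that the paper itself acknowledges in its closing paragraph that the matching of those correction constants was a \emph{choice} (``we did pick a particular deformation quantisation module for this to occur''), so your step (ii) as stated would need to be weakened: the intertwining fixes the symbol of the reduced operators but not the $\hslash J$ ambiguity, and agreement holds only after a compatible choice on both sides. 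Your general argument would therefore prove that the two constructions yield isomorphic families of modules, not that any two particular representatives coincide.
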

    Further:
    \begin{prop}
    \[ \psi_{\overset{\sim}{\mathcal{B}\,}} = \psi_{\widehat{\cB}} =\frac{1}{Z_0} \int_{V} \psi_G \,  \psi_{\mathbb{L}}\]
    \end{prop}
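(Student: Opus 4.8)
The plan is to prove both propositions at once by identifying the two wavefunctions $\psi_{\overset{\sim}{\mathcal{B}}}$ and $\psi_{\widehat{\mathcal{B}}}$ as generators of one and the same cyclic module, $\widehat{\mathcal{W}}_X / \widehat{\mathcal{W}}_X \mathcal{J}_{\mathcal{B}}\lBrack \hslash \rBrack$ with $\mathcal{J}_{\mathcal{B}} = \mathcal{J}_G + \mathcal{J}_{\mathbb{L}}$. Once this is done, the equality of the two constructions follows from the fact that, at the chosen point and under the transversality hypothesis, this module has one-dimensional dual (the quadratic-Lagrangian computation of Kontsevich--Soibelman recalled earlier), so the two generators agree up to the normalisation absorbed into $Z_0$; and at the same time the commuting square ``quantise then reduce'' versus ``reduce then quantise'' is precisely the statement that both routes produce this module. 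First I would fix Darboux coordinates on $W$ adapted to $G$: choose them so that $G$ is cut out by linear equations, the conjugates of those equations span $G^{\perp} \subset G$, and the remaining conjugate pairs descend to $\mathcal{H} = G / G^{\perp}$; then realise the extension as $X = W \times \mathcal{H}$, with new coordinates $q'$ dual to the $G^{\perp}$-coordinates $q$, so that $G_X$ is the graph of the projection $G \to \mathcal{H}$, is Lagrangian in $X$, and $\mathcal{O}_X \simeq \mathcal{O}_W \otimes_{\mathbf{k}} \mathcal{O}_G^{G^{\perp}}$ as in the text.

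Next I would compute $\psi_G$ explicitly. Since $G_X$ is a linear Lagrangian in $X$, the deformation quantisation module attached to it has as generator of its dual the exponential of the quadratic generating function of $G_X$; separating the terms involving only the $W$-coordinates from the cross terms pairing $q$ with $q'$ yields the factorisation $\psi_G = P\,\psi_G$ of equation (\ref{eqn:psiextg}) (with the abuse of notation noted there), where $\psi_G$ on the right satisfies the linear quantised equations for $G$ inside $W$ and $P = \exp\!\left( \frac{1}{\hslash} \langle q, q' \rangle + \dots \right)$ collects the $\mathcal{O}(qq')$ contributions. The key observation is that $P$ is precisely the integral kernel of a Laplace (or Fourier) transform in the $G^{\perp}$-directions; this is what converts the geometric operation of reduction into an integral transform. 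I would check, by direct inspection of the generating function and using that $G$ is linear and coisotropic, that $P$ contributes only $\mathcal{O}(qq')$ while $\psi_G$ contributes only $\mathcal{O}(q^2)$, as claimed in the remark.

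Then I would evaluate $\psi_{\widehat{\mathcal{B}}} = \frac{1}{Z_0} \int_{V} \psi_G\,\psi_{\mathbb{L}}$ as a formal Gaussian integral, legitimising the manipulations with Baker's lemma (the Moyal product as a Gaussian integral). Integrating over the directions transverse to $G$ against the Gaussian part of $\psi_G$ restricts $\psi_{\mathbb{L}}$ to $G$, thereby imposing the equations $\mathcal{J}_G$; the residual integral over the $G^{\perp}$-coordinates against $P$ is the Laplace transform, producing a function on $\mathcal{H}$ which, by stationary phase with the saddle forced onto $G \cap \mathbb{L}$ by transversality, is supported on $\mathcal{B}$. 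To finish I would verify that the result is annihilated by the quantised generators of $\mathcal{J}_{\mathcal{B}} = \mathcal{J}_G + \mathcal{J}_{\mathbb{L}}$: for the linear generators of $\mathcal{J}_G$ this is integration by parts under the integral sign together with the annihilation of $\psi_G$; for the generators of $\mathcal{J}_{\mathbb{L}}$ it follows because the Laplace transform exchanges $q_a \leftrightarrow \hslash\,\partial_{q'_a}$ and $\hslash\,\partial_{q_a} \leftrightarrow q'_a$, so the operators annihilating $\psi_{\mathbb{L}}$ are carried to the operators defining the reduced Lagrangian $\mathcal{B}$ in $\mathcal{H}$, up to an $\mathcal{O}(\hslash)$ ambiguity absorbed into $J \in \mathbf{k}\lBrack \hslash \rBrack$. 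Since $\psi_{\overset{\sim}{\mathcal{B}}}$ is by construction the generator of $\mathrm{Hom}( \widehat{\mathcal{W}}_X / \widehat{\mathcal{W}}_X \mathcal{J}_{\mathcal{B}}\lBrack \hslash \rBrack , \widehat{\mathcal{W}}_X )$, matching the two is then immediate from one-dimensionality, and both propositions follow.

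The step I expect to be the main obstacle is making the formal Gaussian integral interact correctly with the two WKB expansions order by order in $\hslash$: one must show that the saddle point of $\psi_G \psi_{\mathbb{L}}$ lies on $G \cap \mathbb{L}$ with non-degenerate transverse Hessian — this is exactly where transversality of $G$ and $\mathbb{L}$ and the codimension bookkeeping enter — and that the resulting stationary-phase series reproduces the abstract topological recursion data $S_g$ of $\mathcal{B}$ term by term, not merely to leading order. A secondary difficulty is the Tate-space case, where $Z_0$, the Gaussian measure $\psi_G$ and the transform $P$ are only formally defined, so the whole chain of equalities must be read as an identity of formal power series in $\hslash$ and the reduced coordinates; the finite-dimensional computation of section (\ref{sec:examplereduct4d}) should serve as the template where $Z_0$ is an honest determinant and the integral is genuine.
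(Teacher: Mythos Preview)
Your proposal aims for a general, conceptual argument: choose adapted Darboux coordinates, factor $\psi_G = P\,\psi_G$ with $P$ a Laplace-type kernel, evaluate the Gaussian integral, and then verify by integration by parts and the Fourier exchange $q_a \leftrightarrow \hslash\,\partial_{q'_a}$ that the result is annihilated by the quantised generators of $\mathcal{J}_G + \mathcal{J}_{\mathbb{L}}$, finishing via one-dimensionality of the dual module. This is a reasonable strategy in outline, and you are honest about the two places where real work remains (the order-by-order WKB matching at the saddle, and the Tate-space case).

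The paper, however, does something quite different and much more modest: immediately after stating the proposition it says ``We verify this explicitly for linear Lagrangians in section~(\ref{sec:examplereduct4d}).'' That section is a single worked $4$-dimensional example with a linear Lagrangian $\mathbb{L}$ and a linear coisotropic $G$. The paper writes down $\psi_G$ and $\psi_{\mathbb{L}}$ as explicit Gaussians, computes the integral via the central identity (\ref{eqn:centralid}) to obtain $\psi_{\widehat{\mathcal{B}}} = \mathrm{const}\,\exp(-\tfrac{1}{2\hslash}c_0 z_1^2)$, separately eliminates variables from the classical equations for $G$ and $\mathbb{L}$ to get $w_1 = c_1 z_1$, quantises and solves to get $\psi = \mathrm{const}\,\exp(\tfrac{1}{2\hslash}c_1 z_1^2)$, and then checks by direct algebra that $c_1 = -c_0$. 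Transversality appears concretely as the condition $\det(K+Q)\neq 0$. There is no general argument, no integration-by-parts lemma, and no appeal to one-dimensionality of a module; the proposition is established only for this example, with the remark that ``the computation would extend easily for arbitrary linear Lagrangians.''

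So your route is genuinely different and more ambitious than the paper's. What the paper buys is a fully explicit, checkable computation with no analytic subtleties; what your approach would buy, if completed, is an actual proof of the general statement rather than a single instance. Be aware that the paper does not claim to have proved the proposition in generality, so if you pursue your outline you are going beyond what the text supplies --- in particular the stationary-phase step and the $\mathcal{O}(\hslash)$ bookkeeping you flag are not addressed anywhere in the paper.
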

    We verify this explicitly for linear Lagrangians in section (\ref{sec:examplereduct4d}).

    One way to define the integral is to treat 
    \[ \frac{1}{Z_0} \int_{V} \psi_G\]
    as a linear functional and define the action on coordinates and then extend to arbitrary functions via Wick's theorem. 
    
    Geometrically the integral can be thought of as a convolution. We want to pick out the components of the function \( \psi_{\mathbb{L}}\) where \(G\) and \( \mathbb{L}\) intersect, to get a function \( \psi_{\widehat{\mathcal{B}}}\). A more basic example of this situation is from linear algebra. Given a pair of linear operators \(L_1\), \(L_2\), we want to find an element in the intersection of the kernels of \(L_1\) and \(L_2\). If the intersection is non trivial, and given \( v \in \mathrm{ker}(L_1)\), and \( w \in \mathrm{ker}(L_2)\), the projection of \(v\) onto \(w\) would lie in the intersection. We then want to project or reduce this down onto a smaller space.
    
    
    Explicitly, the ideal sheaf \(\mathcal{J}_{G_X} \lBrack \hslash \rBrack \) defining \( \psi_G \), is found from the extension \(X \cong W \oplus G/G^{\perp}\) of \(W\), where \(G\) is embedded as linear Lagrangian in \(X\). Note \(X\) has the opposite symplectic form. Recall in finite dimensions \(G\) is defined by \(n-g\) equations, then for \(X\) to be a \(2\,g\) dimensional extension, requires an extra \(2
    \,g\) equations, so \(G\) is sent to a Lagrangian defined by \(n+g\) equations, or an ideal \( \mathcal{J}_{G_X}\). On the level of rings this given by a map
    \[ \mathcal{O}_{X} \rightarrow \mathcal{O}_W. \]
    For example, in the affine case, introduce the \(2g\) coordinates \(z_1, \dots z_g, w_1, \dots w_g\), and then look for linear functions so \(G\) in \(X\) can be represented as \((y,w) + Q \cdot (z,x)=0\). We go through a full example in section (\ref{sec:examplereduct4d}).

    \subsection{Wavefunctions on linear Lagrangian and coisotropic subspaces}
    
    Let \(W\) be an \(2n\)-dimensional symplectic space, with Darboux coordinates \((x_i,y_i)\).

    \subsubsection{Linear Lagrangian}
    \label{sec:linearlag}
    We review the idea that there is a unique wavefunction associated to a linear Lagrangian subvariety in the symplectic space \(W\). From the deformation quantisation perspective, there is a module, supported on a formal neighbourhood of a point like example (\ref{ex:the_conic}). 
    
    Consider the case where \(\mathbb{L}\) is a codimension \(n\) linear Lagrangian subvariety, which is written, as a collection of equations in vector form, as:
    \begin{equation}
        \label{eqn:linearlag}
        y + Q \cdot x = 0,
    \end{equation}
    where \(Q\) is a \((n,n)\)-dimensional symmetric tensor. More generally, given \(n\) equations defining a Lagrangian \( \mathbb{L}\), we solve them for \(y_i(x^{\sbt})\). For quadratic and higher order Lagrangians, this requires completing at a maximal ideal, so \( \mathbb{L}\) is locally written as a generating function \(y_i dx_i =  dS_0(x^{\sbt})\), or \(y\) can be written as a formal series in \(x\) at a point. But in the linear Lagrangian case, we already have a form for \(y\), so \(S_0\) is found directly from equation (\ref{eqn:linearlag}): 
    \[S_0 = -\frac{1}{2} x \cdot Q \cdot x. \]
    If the equations are not full rank for \(y\), it is sufficient to set \(y_i=0\) at the point defined by the maximal ideal. It will also be possible to eliminate the corresponding \(x_i\) variables.
    
    Sparing some details about star-products, quantisation is representable by the Weyl-algebra, similarly to example (\ref{ex:the_conic}): 
    \[ x_i \rightarrow x_i, y_i \rightarrow \hslash \partial_i = \frac{\partial}{\partial x_i}.\]
    Now, a wavefunction \( \psi(x^{\sbt})\), is a formal series, in defined on a formal neighbourhood, that is the annihilator of the quantisation of the equations defining the Lagrangian subvariety. In the linear case it is equivalent to consider the quantisation of (\ref{eqn:linearlag}):
    \[ \hslash \frac{\partial}{\partial x} \psi + Q \cdot x\, \psi = 0. \]
    In the linear case any annihilator \( \psi\) is represented by the formal expression:
    \[ \psi = \mathrm{const} \, \exp \left( \frac{1}{\hslash} S_0 \right) = \mathrm{const} \, \exp \left( -\frac{1}{2\hslash} x \cdot Q \cdot x \right), \]
    where \( \mathrm{const}\) is an arbitrary constant.
    Note \(\exp(f) \) is a formal object which is a way to store information the module homomorphism from section (\ref{sec:wavefunctions}). Functions can be extracted from \(\exp\) via actions of \(\hslash \frac{\partial}{\partial x_i }\). Also \( \mathrm{const}\) is some constant. Note \(  d (S_0 ) = -(Q \cdot x) dx\), so \( y=\mathrm{graph}(dS_0)\). 
    
    More generally consider if the equations for \( \mathbb{L}\) can be written as
    \[ M \cdot y + N \cdot x = 0\]
    where either only one of \(M\) or \(N\) is not full rank. These equations must Poisson commute, as \(\mathbb{L}\) is a linear Lagrangian:
    \[ \{ M_{ij} y_j + N_{ij}x_j , M_{ks}y_s + N_{ks}x_s \} = 0.\]
    This gives the constraint \(- M_{ij} N_{kj}+N_{ij} M_{kj} = 0\). For \( \mathbb{L}\) to be Lagrangian, only one of \(M\) or \(N\) is allowed to not be full rank. 
    
    Consider the case where \(N\) is not full rank, then \(M\) is still invertible so
    \[ y + (M^{-1} N) \cdot x = 0,\]
    and the previous analysis follows where \( Q = (M^{-1} N)\). 
    
    Now consider the cases where \(M\) is not full rank. First consider the case where a single \(y_i\) does not appear in any equations. So \(M\) has a zero column and row. Multiple missing \(y_i\) will follow inductively. As \(N\) still has full rank, it is still possible to solve for the corresponding \(x_i\), so \(x_i = f_i(x_1, \dots x_{i-1}, x_{i+1}, \dots x_n)\), where \(f\) is a linear function. We then consider a reduced system of equations:
    \begin{equation} \label{eqn:reduced} \widetilde{M}\cdot \widetilde{y} + \widetilde{N} \cdot \widetilde{x} = 0,\end{equation}
    where \(\widetilde{M}, \widetilde{N} \) are the reduced tensors so there are no \(x_i\) or \(y_i\) terms. Now the the previous analysis follows.  While quantisation of \(W\) still includes \( \hslash \partial_i \), we only need to consider annihilators of the quantisation of equation (\ref{eqn:reduced}) in \( \mathbf{k} \lBrack x_1, \dots x_{i-1}, x_{i+1}, \dots x_n\rBrack\).

    \subsection{A formula for Gaussian integrals}

    Let \(W\) a symplectic space with coordinates \( (x^i,y_i)\), and \(V\) be a Lagrangian with coordinates \(x_i\). Recall the objective is to compare the wavefunctions, in the Weyl-algebra representation, first from the scheme theoretic quotient, the second from the Gaussian integral:
    \begin{equation} 
    \label{eq:gaussint}
    \psi_{\widehat{\mathcal{B}}} = \frac{1}{Z_0} \int_{V} \,  \psi_{\mathbb{L}} \, \psi_G  . 
    \end{equation}
    Note that \( \psi_{\mathbb{L}} \) depends on \(x\) coordinates, while \( \psi_G\) will depend on \(x\) and the coordinates of the extension. So the integral can just be viewed as integrating over \(V\).


    The \emph{central identity of quantum field theory} \cite{zee}, is a useful formula for evaluating Gaussian integrals, such as the integral for symplectic reduction of the wavefunction in equation (\ref{eq:gaussint}):
    \begin{lem}[Central identity of quantum field theory]
        \begin{equation} 
        \label{eqn:centralid}
        \frac{1}{Z_0}\int Dx \exp \left( \mathcal{Q}(x, J) \right) 
        = \exp\left(-V\left(\frac{\partial}{\partial J}\right)\right) \exp\left( \frac12 \, J \cdot (K')^{-1} \cdot J\right), 
    \end{equation}
    where 
    \[ \mathcal{Q}(x,J) = -\frac{1}{2} x \cdot K' \cdot x - V + J \cdot x.\]
    \end{lem}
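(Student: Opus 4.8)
The plan is to reduce the identity to the elementary Gaussian generating function together with the observation that a polynomial prefactor inside the integral can be traded for derivatives with respect to the source $J$. First I would record the purely quadratic case: completing the square via the shift $x \mapsto x + (K')^{-1}\cdot J$ gives
\[ \frac{1}{Z_0}\int Dx\, \exp\!\left(-\frac12\, x\cdot K'\cdot x + J\cdot x\right) = \exp\!\left(\frac12\, J\cdot (K')^{-1}\cdot J\right), \]
where $Z_0 = \int Dx\,\exp(-\frac12\, x\cdot K'\cdot x)$. In finite dimensions $Z_0$ is the square root of a determinant and this is the classical Gaussian formula; in the formal setting the equality is the definition of the formal Gaussian integral in terms of its moments, exactly as in the proof of the Moyal-product integral formula of \cite{baker} above.

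Next comes the key algebraic step. Since the components of $\partial/\partial J$ commute and $\frac{\partial}{\partial J_i}\exp(J\cdot x) = x_i\,\exp(J\cdot x)$, for any polynomial or formal power series $V$ one has the operator identity $V\!\left(\partial/\partial J\right)\exp(J\cdot x) = V(x)\,\exp(J\cdot x)$, and hence, expanding the outer exponential as a formal power series,
\[ \exp\!\left(-V\!\left(\frac{\partial}{\partial J}\right)\right)\exp(J\cdot x) = \exp(-V(x))\,\exp(J\cdot x). \]
Because the operator $\exp(-V(\partial/\partial J))$ does not involve $x$, I would pull it outside the integral and then invoke the quadratic case:
\[ \frac{1}{Z_0}\int Dx\, \exp(\mathcal{Q}(x,J)) = \exp\!\left(-V\!\left(\frac{\partial}{\partial J}\right)\right)\frac{1}{Z_0}\int Dx\,\exp\!\left(-\frac12\, x\cdot K'\cdot x + J\cdot x\right) = \exp\!\left(-V\!\left(\frac{\partial}{\partial J}\right)\right)\exp\!\left(\frac12\, J\cdot(K')^{-1}\cdot J\right), \]
which is the asserted identity.

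The main obstacle is justifying the interchange of the source-derivative operator with the integral, which amounts to checking the identity order by order after expanding $\exp(-V)$: one must verify that each Gaussian moment $\frac{1}{Z_0}\int Dx\, x_{i_1}\cdots x_{i_k}\exp(-\frac12\, x\cdot K'\cdot x)$ is reproduced by applying $\partial_{J_{i_1}}\cdots\partial_{J_{i_k}}$ to $\exp(\frac12\, J\cdot(K')^{-1}\cdot J)$ and setting $J=0$. This is precisely Wick's theorem, so the content of the lemma is the combinatorial statement that both sides organise the same sum over pairings of the external legs. In the formal or infinite-dimensional setting one simply takes this Wick expansion as the definition of the integral, so nothing further is required; in finite dimensions it follows by differentiating the quadratic formula under the integral sign, which is legitimate because the Gaussian weight decays faster than any polynomial in $x$ grows.
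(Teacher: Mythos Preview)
Your argument is correct and is the standard derivation: reduce to the quadratic Gaussian by completing the square, then use $V(\partial/\partial J)\exp(J\cdot x)=V(x)\exp(J\cdot x)$ to trade the potential for a differential operator that can be pulled through the integral, with Wick's theorem justifying the interchange order by order. The paper does not actually give a proof of this lemma; it simply states it with a citation to \cite{zee} and remarks that the exponential of the operator is understood as a formal series and that the integrals are treated formally. Your write-up therefore supplies what the paper omits, and it matches the textbook argument the citation points to.
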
 
    Note that 
    \[\exp\left(-V\left(\frac{\partial}{\partial J}\right)\right),\] 
    is an operator, computed by expanding the exponential as a series. Also note these integrals are all treated formally. 
    \begin{rem}
    In the central identity, equation (\ref{eqn:centralid}), \(J\) is called a \emph{source} term.
    \end{rem}
    
    Now when \( \psi_G\) is given per equation (\ref{eqn:psiextg}), (for simplicity set \(S=0\), in general it will always factor out of the integral, and rescale by \(K\) and \(J\) by \(\hslash\)):
    \[ \psi_G = \exp\left( -\frac{1}{2}  x \cdot K \cdot x + J \cdot x \right),\]
    equation (\ref{eqn:centralid}) is useful for computing \( \psi_{\widehat{\mathcal{B}}}\). \(J\), from the extension of \(G\) in \(X\), is treated as the source term, so:
    \[ \psi_{\widehat{\mathcal{B}}} = \frac{1}{Z_0} \int Dx \exp\left( -\frac{1}{2}  x\cdot K \cdot x + J \cdot x \right) \psi_{\mathbb{L}}. \]

    
    \begin{ex} 
    Let \( \cL\) be determined by a collection of quadratic functions: 
    \[ I(\cL) = \langle -y_i + a_{ijk} x_j x_k + 2 b_{ijk} x_j y_k + c_{ijk} y_j y_k \rangle, \] 
    in \( \mathbf{k}[x_{\sbt},y_{\sbt}]\). Then there is a wavefunction of the form \(\psi_{\cL} = \exp(S)\), where \(S= \sum_{g\geq 0} h^{g-1} S_g\). \(S\) contains quadratic and linear terms in \(x\), so the formal integral for \( \psi_{\widehat{\mathcal{B}}}\) becomes:
    \begin{align*}
    \psi_{\widehat{\mathcal{B}}} =
    &\frac{1}{Z_0}\int Dx \exp \bigg[
    -\frac{1}{2} x \cdot \left(K - \sum_{g>0} \hslash^{g-1} S_{g,2;\bullet,\bullet} \right) \cdot x \\  
    & -V' + \left(J-\sum_{g>0} \hslash^{g-1} S_{g,1;\bullet}\right) \cdot x \bigg]
    \end{align*}
    where \(V'\) are the remaining terms of \(S\) cubic and greater order in \(x\).
    
    Using the central identity of quantum field theory, (\ref{eqn:centralid}) let \(J\) be a source term giving a formula for \( \psi_{\widehat{\mathcal{B}}}\):
    \begin{align*} 
    \psi_{\widehat{\mathcal{B}}}  =&  \exp\left(V'\left(\frac{\delta}{\delta J} \right) \right) \bigg[ \\ 
    & \exp\left( \frac{1}{2}\, (J-\sum_{g>0} \hslash^{g-1} S_{g,1;\bullet}) \cdot (K-\sum_{g>0} \hslash^{g-1} S_{g,2;\bullet,\bullet})^{-1}\cdot (J-\sum_{g>0} \hslash^{g-1} S_{g,1;\bullet}) \right) \bigg].  
    \end{align*}
    In general this could be impractical to evaluate explicitly.
    \end{ex}

\section{Deformation quantisation and reduction in 4-dimensions}
    \label{sec:examplereduct4d}
    
    We now give an explicit example of the reduction of a wavefunction, and show the integral formula computes the same result as a deformation quantisation of the symplectic reduction of schemes in the affine case. Although this is only a \(4\)-dimensional example, the computation would extend easily for arbitrary linear Lagrangians. 
    
    \begin{ex}    
    Consider the symplectic space \(W\), with coordinate ring 
    \[ \mathcal{O}(W) =   \mathbf{k}[x_1,x_2,y_1,y_2].\] 
    There is a Poisson bracket given by 
    \[\{x_i,x_j\}=\{y_i,y_j\}=0, \; \{x_i,y_i\}=1 .\] 
    Let \( \cL \subset W \) be a linear Lagrangian defined by the ideal \(I = \langle H_1, H_2\rangle\), where:
    \begin{align*}
        H_1 = y_1 + A x_1 +  B x_2,    \\
        H_2 = y_2 +  C x_1 + D x_2. 
    \end{align*}
    Recall \( I\) is required to be closed under the Poisson bracket, so \(B=C\).
    
    Let \(G\) be a linear coisotropic subspace of codimension \(n-g=1\). \(G\) is determined by the equation:
    \[ I(G) = \langle H_G :=a x_1 + b x_2 + c y_1 + d y_2 \rangle.\]
    \end{ex}

    Further \(G\) and \( \mathbb{L}\) must intersect transversally at the origin, so \( \mathcal{T}_0 G+ \mathcal{T}_0{\mathbb{L}} \cong \mathcal{T}_0W\). In this case, transversality is given by requiring at least one of following terms to be non zero:
    \begin{align}
    \label{eqn:transcons2dmin}
    a - cA - dB, \\ 
    b - cB-dD.
    \end{align}
    If both are zero, then \(G\) and \( \mathbb{L}\) do not intersect transversally. Equation (\ref{eqn:transcons2dmin}) can be found by row reducing the coefficient of the equations defining \(G\) and \( \mathbb{L}\) together. Alternatively equation (\ref{eqn:transcons2dmin}) can be found by requiring the wedge product of the differentials of the equations for \(G\) and \( \mathbb{L}\) being non zero. 

    Now note as \(G\) is codimension \(n-g=1\), the extension \(X = W\oplus G/G^{\perp}\) necessarily must have dimension \(6\). Let \( \mathcal{O}(X) = \mathbf{k}[x_1,x_2,z_1, y_1,y_2,w_1]\). Define a Poisson bracket  on \(\mathcal{O}(X)\) via \( \{x_i,y_i\}=\{z_1,w_1\} = 1\). Then on the algebraic level, the extension is defined by an opposite map on rings:
    \begin{equation} 
    \label{eqn:extgmaprings}
    \mathbf{k}[x_1,x_2,z_1,y_1,y_2,w_1] \rightarrow \mathbf{k}[x_1,x_2,y_1,y_2],
    \end{equation}
    such that the image of \(G\) in \(X\), \(G_X\) is a linear Lagrangian (linear is the simplest Lagrangian, and later quantisation will yield a Gaussian function), which in terms of rings means there is an ideal \(I(G_X)\) in \(\mathcal{O}(X)\) that is involutive under the Poisson bracket. 
    \[ \{I(G_X),I(G_X)\} = I(G_X).\]
    The map (\ref{eqn:extgmaprings}) is constructed on the generators. First 
    \begin{align*}
    x_i & \rightarrow x_i  \\
    y_i  & \rightarrow y_i.
    \end{align*}
    Then finding \(G_X\) as a Lagrangian in \(X\), requires finding two additional linear functions \(\zeta\), and \(\xi\):
    \begin{align}
    \label{eqn:solveforext}
    z_1 &= \zeta_1(x_{\sbt},y_{\sbt}), \\
    w_1 &= \xi_1(x_{\sbt},y_{\sbt}),
    \end{align}
    such that taking the equations \( z_1 - \zeta_1 = 0, w_1 - \xi_1 = 0\), and \(H_G = 0\) defines a Lagrangian \(G_X\) in \(X\). This means 
    \[ I(G_X)= \langle z_1 - \zeta_1,  w_1 - \xi_1, H_G \rangle, \]
    is a Poisson ideal in \(\mathcal{O}(X)\).  Using the Poisson bracket on \(X\), as they are all linear, we require 
    \[ \{H_G,z_1 -\zeta_1\} = \{ H_G,w_1 - \xi_1\} =   \{z_1 - \zeta_1,w_1 - \xi_1\} = 0.\]
    One choice is 
    \[\zeta_1 = a x_1 + c y_1, \quad  \text{and} \quad  \xi_1 = \frac{1}{cd} \left( d x_1 - c x_2 \right) .\]
    Now we can quantise \( \mathcal{O}(X)\), (and \( \mathcal{O}(W)\)) to find \( \psi_G\) and \( \psi_{\mathbb{L}}\), supported on formal completions \( \widehat{\mathbb{L}}\) and \( \widehat{G}_{\mathrm{Lag}}\) at \(0\). With this choice of \(G_X\) in \(X\), we construct a unique solution for \(\psi_G\).
    Consider the deformation quantisation of \( \mathcal{O}(X)\) with a star product, per definition (\ref{defn:star_prod_pois}). The non-commutative algebra \((\mathcal{O}(X)\lBrack \hslash \rBrack, \star)\) is isomorphic to a Weyl-algebra of operators with the map \[y_i \rightarrow \hslash \frac{\partial}{\partial x_i},  w_1 \rightarrow \hslash \frac{\partial}{\partial z_1}.\] 
    Then \( \psi_G(x_1,x_2,z_1)\) is a generator of a module, given by equation (\ref{eqn:psiextg}), and on a formal neighbourhood of \(x=0\), satisfies the equations: 
    \begin{align*}
       (a x_1 + b x_2) \psi_G + \hslash \left( c \frac{\partial}{\partial x_1 } +  d \frac{\partial}{\partial x_2} \right) \psi_G & = 0, \\
       ( z_1 - a x_1 ) \psi_G  - c \hslash \frac{\partial}{\partial x_1} \psi_G &= 0, \\ 
       \hslash \frac{\partial}{\partial z_1} \psi_G - \frac{1}{c d} ( d x_1 - c x_2) \psi_G &= 0.
    \end{align*}
    Then looking for a solution of the form:
    \[ \psi_G (x_1,x_2,z_1)= \mathrm{const} \, \exp\left( -\frac{1}{2} x \cdot K \cdot x  + J  \cdot x\right),\]
    on the completion around \(x=0\), we find: 
    \begin{align*}
        K = \frac{1}{\hslash} \left(\begin{array}{cc}
            a/c & 0 \\
            0 & b/d
        \end{array}\right), \quad J = \frac{1}{\hslash} \,  z_1 \left( \begin{array}{c}
            1/c \\
            -1/d
        \end{array}\right).
    \end{align*}
    where \( c , d \neq 0\). Note \( \mathrm{const}\) is just a generic constant term. Wavefunctions are elements of a module over \(\mathbf{k} \lBrack \hslash \rBrack\). Note unlike example (\ref{ex:the_conic}), there is no need to complete at \(y=0\) and pick a particular solution.
    

    Similarly in \(W\), there is a unique solution for \( \psi_{\cL}\):
    \begin{align*}
        \psi_{\cL}(x_1,x_2) =  \mathrm{const} \, \exp \left( -\frac{1}{2}   x \cdot Q \cdot  x \right),
    \end{align*}
    where 
    \[ Q = - \frac{1}{\hslash} \left( \begin{array}{cc}
        A  &  B \\
        B & D  
    \end{array}\right),\]
    and \( \mathrm{const}\) is another arbitrary constant in \( \mathbf{k}\lBrack \hslash \rBrack\).
    
    Now to finally perform the integration, \( \psi_{\cL}\) is a Gaussian, and via the central identity, equation (\ref{eqn:centralid}):
    \begin{align*} \psi_{\widehat{\mathcal{B}}}(z_1) &= \mathrm{const} \, \int D x \, \psi_{G} \,  \psi_{\mathbb{L}},  \\ 
    &= \mathrm{const} \, \int Dx \exp \left( -\frac{1}{2} x \cdot (K+Q) \cdot x + J \cdot x \right), 
    \end{align*}
    which gives
    \begin{equation*} \psi_{\widehat{\mathcal{B}}}(z_1) = \mathrm{const}\, \exp \left( \frac12\, J\cdot (K+Q)^{-1} \cdot J \right) = \mathrm{const} \exp\left( -\frac{1}{2 \hslash}\, c_0\, z_1^2\right),
    \end{equation*}
    where 
    \[ c_0 = \frac{\left(a c-A c^2+d (b-2 B c-d D)\right)}{ c d  \left(a (b-d D)-c \left(A (b-d D)+B^2 d\right)\right)}.\]
    Note that this integral fails to exist when \(Q+K\) does not have an inverse.  Checking for where \( \det(Q+K) = 0\), this is equivalent to when
    \[ B^2 c d = (a -  c A) ( b - d D), \]
    which is equivalent to the fact \( \mathbb{L}\) and \(G\) must intersect transversally. 

    %
    As a first check on this formula, taking the equations for \(G\), \(\cL\) in \(X\) and eliminating \(x_{\sbt}\) and \( y_{\sbt}\), gives an equation
    \begin{equation}
        \label{eqn:check1}
        w_1 = c_1 z_1,
    \end{equation}
    which algebraically, represents the sum of the ideals of \(G\) and \( \mathbb{L}\), per sequence (\ref{eqn:seqforB}). The sum of the ideals is the algebraic way to describe the intersection \( G \cap \mathbb{L}\). Also note
    \[ c_1 = \frac{   \left(-a c+A c^2+d (-b+2 B c+d D)\right)}{c d \left(d D (a-A c)-a b+A b c+B^2 c d\right)}. \]
    Note that \(c_1 = -c_0\).  As before, the quantisation \(\mathcal{O}(X)\lBrack \hslash \rBrack\), is isomorphic to a Weyl-algebra of operators:
    \[ x_i \rightarrow x_i, \, z_1 \rightarrow z_1, \, y_i \rightarrow \hslash \frac{\partial}{\partial x_i}, \, w_1 \rightarrow \hslash \frac{\partial}{\partial z_1}. \] 
    Quantising equation (\ref{eqn:check1}), gives a differential equation, 
    \[ \hslash \frac{\partial}{\partial z_1} \psi = c_1 z_1 \psi. \]
    This equation has a solution of the form:
    \[ \psi =  \mathrm{const} \, \exp\left( \frac{1}{2 \hslash} c_1 z_1^2 \right).\]
    Now as \(c_1 = -c_0\), \( \psi_{\widehat{\mathcal{B}}} = \psi\), so quantising on the quotient agrees with the integral formula.
    Both wavefunctions \( \psi\) and \(\psi_{\widehat{\mathcal{B}}}\) are undefined in the case where \(G\) and \( \mathbb{L}\) fail to intersect transversally.
    
    As an example, set \( \{ a,b,c,d\} \rightarrow 1\). In this case both \(\psi_{\widehat{\mathcal{B}}}\), and \( \psi\) reduce to
    \[ \psi = \mathrm{const} \exp \left( -\frac{1}{2 \hslash } \frac{ (2-A-2 B-D)}{ \left(B^2-(1-A)(1-D)\right)} z_1^2 \right),\]
    as long as \( B^2 \neq (1-A)(1-D)\), which is the transversality requirement.  So the quantisation of the reduction can agree with the integral transform from \cite{ks_airy}. However we did pick a particular deformation quantisation module for this to occur. In the case of the conic, example (\ref{ex:the_conic}), there was an ambiguity up to a constant term \( \hslash J\). In this example we chose particular constants for the correspondence between the integral formula and the reduction to work out cleanly.
    

    \section*{Acknowledgements}
\addcontentsline{toc}{section}{Acknowledgements}

We would like to thank \href{https://researchers.ms.unimelb.edu.au/~norbury@unimelb/}{Paul Norbury} for valuable discussions. We also would like to acknowledge \href{https://www.matrix-inst.org.au/}{Matrix} and the organisers for the Quantum Curves, Integrability and Cluster Algebras workshop where some aspects of this article were discussed.
    

    \printbibliography
    
\end{document}